\def\anonymous{0}
\def\comments{0}
\definecolor{myred}{rgb}{0.64, 0.0, 0.0}
\DeclareMathOperator{\mpcC}{mprn}
\DeclareMathOperator{\monoC}{S^+}
\DeclareMathOperator{\Assignment}{\Acal}
\newcommand{\F}{\mathbb{F}}
\newcommand{\neqOneDisj}[1]{\mathrm{DISJ}^{\neq 1}_{#1}} 
\newcommand{\codeInd}{t}
\newcommand{\codeEscProb}{d}
\newcommand{\mnote}[3]{\todo[color=#3!40,size=\footnotesize]{\textbf{#2:} #1}}
\newcommand{\anote}[1]{}
\newcommand{\bnote}[1]{}
\newcommand{\pnote}[1]{}
\newcommand{\tnote}[1]{}
\newcommand{\snote}[1]{}
\newcommand{\anote}[1]{\mnote{#1}{A}{white}}
\newcommand{\bnote}[1]{\mnote{#1}{B}{cyan}}
\newcommand{\pnote}[1]{\mnote{#1}{P}{blue}}
\newcommand{\tnote}[1]{\mnote{#1}{T}{green}}
\newcommand{\snote}[1]{\mnote{#1}{S}{gray}}
\title{Negations are powerful even in small depth}
\author{}
\date{}
\author{Bruno Cavalar
\thanks{Department of Computer Science, University of Oxford, United Kingdom.
Supported by EPSRC project EP/Z534158/1 on ``Integrated Approach to Computational Complexity: Structure, Self-Reference and Lower Bounds''.
Email: \texttt{bruno.cavalar@cs.ox.ac.uk}}
\and Th\'{e}o Bor\'{e}m Fabris\thanks{Department of Computer Science,
University of Copenhagen, Denmark. Supported by the European Research
Council (ERC) under grant agreement no. 101125652 (ALBA). Email:
\texttt{thfa@di.ku.dk}}
\and Partha Mukhopadhyay\thanks{Chennai Mathematical Institute, India.
Email: \texttt{partham@cmi.ac.in}}
\and Srikanth Srinivasan\thanks{Department of Computer Science, University
of Copenhagen, Denmark. Supported by the European Research Council (ERC)
under grant agreement no. 101125652 (ALBA). Email: \texttt{srsr@di.ku.dk} }
\and Amir Yehudayoff\thanks{Department of Computer Science, The University
of Copenhagen, and The Technion-IIT. Supported by a DNRF Chair grant.
Email: \texttt{amir.yehudayoff@gmail.com}}
}
\DeclareMathSymbol{\lsb@l}{\mathalpha}{letters}{`l}
\begin{document}

\maketitle

\vspace{-1em}
\begin{abstract}
We study the power of negation in the Boolean and algebraic settings and show the following results. 
\begin{enumerate}

\item We construct a family of polynomials $P_n$ in $n$ variables,
    all of whose monomials have positive coefficients,
    such that $P_n$ can be
    computed by a depth three circuit of polynomial size but any monotone
    circuit computing it has size $2^{\Omega(n)}$. This is the strongest
    possible separation result between monotone and non-monotone arithmetic
    computations and improves upon all earlier results, including the
    seminal work of Valiant (1980) and more recently by Chattopadhyay,
    Datta, and Mukhopadhyay (2021).  
We then boot-strap this result to prove strong monotone separations for polynomials
of constant degree, which solves an open problem from the survey of Shpilka and Yehudayoff (2010). 

\item  By moving to the Boolean setting, we can prove superpolynomial monotone Boolean circuit lower
bounds for specific Boolean functions, which imply that \emph{all the powers} of certain monotone polynomials
cannot be computed by polynomially sized monotone arithmetic circuits. This leads to a new kind
of monotone vs.\! non-monotone separation in the arithmetic setting.

\item  We then define a collection of problems with
linear-algebraic nature, which are similar to span programs, and prove monotone Boolean circuit lower bounds
for them. In particular, this gives the strongest known monotone lower bounds for functions in uniform
(non-monotone) $\NC^2$. Our construction also leads to an explicit matroid that defines a monotone
function that is difficult to compute, which solves an open problem by Jukna and Seiwert (2020) in the context of the relative powers of greedy and pure dynamic programming algorithms. 
\end{enumerate}

Our monotone arithmetic and Boolean circuit lower bounds are based on known techniques, such as reduction from monotone arithmetic complexity to multipartition communication complexity and
the approximation method for proving lower bounds for monotone Boolean circuits,
but we overcome several new challenges in order to obtain efficient upper bounds using low-depth circuits. 
\end{abstract}

\thispagestyle{empty} 
\newpage

\tableofcontents

\thispagestyle{empty} 
\newpage

\setcounter{page}{1} 
\section{Introduction}
\label{sec:intro}

\subsection{Background and results}

\paragraph{Separation between arithmetic models.}
In his paper ``A single negation is exponentially powerful'', Valiant~\cite{Valiant80} proved that there are $n$-variate monotone polynomials 
in $\VP$ that require a monotone arithmetic circuit of size at least $2^{\Omega(n^{1/2})}$
and that every computation in $\VP$ can be performed with a single negation gate.
Chattopadhyay, Datta and Mukhopadhyay~\cite{CDM21} 
recently strengthened the upper bound to depth-three circuits (also known as a $\Sigma \Pi \Sigma$ circuit); they proved a $2^{\tilde \Omega(n^{1/4})}$ lower bound on the monotone circuit complexity of a monotone polynomial that can be computed by a $\Sigma \Pi \Sigma$ circuit of polynomial size. 
Another recent work shows that the spanning tree polynomials, having $n$ variables and defined over constant-degree expander graphs, have monotone arithmetic circuit complexity $2^{\Omega(n)}$ \cite{CDGM22}. However, the best known non-monotone upper bound for the spanning tree polynomial is a polynomial-size algebraic branching program (ABP) \cite{Moon70} via a determinant computation \cite{Csanky1976, Berkowitz84}.
Our first main result is a strengthening of these results.

\begin{theorem}
\label{thm:main1}
    There is an $n$-variate polynomial $P = P_n$ such that the following hold:
    \begin{enumerate}
        \item $P$ can be computed by a $\Sigma\Pi\Sigma$ circuit of size $O(n^3)$.
        \item Any monotone arithmetic circuit computing $P$ has size at least $2^{\Omega(n)}$.
    \end{enumerate}
\end{theorem}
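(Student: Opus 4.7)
The proof strategy follows the framework pioneered by Valiant and refined by CDM21 and CDGM22, but must push through both axes simultaneously: an exponential monotone lower bound and a depth-three upper bound. The plan is to construct a polynomial $P_n$ that is born as a $\Sigma\Pi\Sigma$ expression of size $O(n^3)$, whose monomial support has multipartition communication complexity $\Omega(n)$, thereby forcing any monotone arithmetic circuit computing it to have size $2^{\Omega(n)}$ through the standard reduction.

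The construction I would aim for has the form
\[
P_n(x) \;=\; \sum_{i=1}^{O(n)} \prod_{j=1}^{O(n)} L_{ij}(x),
\]
where each $L_{ij}$ is a linear form in $x_1,\ldots,x_n$, possibly with negative coefficients. Bounding the number of summands, the number of factors, and the number of variables per linear form each by $O(n)$ yields the advertised $O(n^3)$ size of the depth-three circuit directly. The linear forms must be engineered so that, first, upon formal expansion all monomial coefficients are nonnegative (so that $P_n$ lies in the regime covered by monotone lower bounds), and second, the set of monomials appearing in $P_n$ corresponds to a combinatorially rich object -- such as codewords of a suitable good linear code, or edge sets associated with a constant-degree expander -- whose support reduces to $\neqOneDisj{\Omega(n)}$ under every balanced partition of the variables.

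For the lower bound, I would invoke the reduction from monotone arithmetic circuit complexity to multipartition communication complexity used in the aforementioned works. Once we exhibit, for every balanced partition $[n] = A \sqcup B$, an embedded instance of $\neqOneDisj{k}$ with $k = \Omega(n)$ inside the support of $P_n$, the known $\Omega(n)$ communication lower bound for that problem lifts, via the reduction, to the desired $2^{\Omega(n)}$ bound on the monotone arithmetic size of $P_n$.

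The main obstacle is the simultaneous achievement of both properties. The CDM21 construction tops out at $2^{\tilde\Omega(n^{1/4})}$ because its support is too locally structured, while CDGM22 reaches $2^{\Omega(n)}$ via the spanning tree polynomial, which has no known polynomial-size $\Sigma\Pi\Sigma$ representation. Bridging the two requires designing a depth-three expression in which the negative contributions cancel in a globally coordinated manner, leaving behind a support whose structure is nevertheless rich enough to embed disjointness under every balanced bipartition. I expect the bulk of the technical effort to lie in exhibiting linear forms $L_{ij}$ with this dual property, and in verifying both the global cancellation of negative contributions and the embedding of $\neqOneDisj{\Omega(n)}$ into the surviving monomial set for every balanced partition.
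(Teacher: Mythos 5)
Your high-level strategy matches the paper's: an expander-based polynomial, a reduction from monotone arithmetic complexity to multipartition communication complexity, and an embedding of a disjointness-type problem under any balanced partition. But the proposal has a genuine gap exactly where you flag it: you never exhibit the linear forms, and you explicitly defer the construction as ``the bulk of the technical effort.'' The construction is the whole theorem; without it there is no proof.

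The paper resolves the obstacle by working with the set-multilinear variable set $X = \{x_{1,0},x_{1,1},\dots,x_{n/2,0},x_{n/2,1}\}$ (pairs of variables, one per index) and defining, for a constant-degree expander $G$ on $n/2$ vertices,
\begin{equation*}
P_n(x) \;=\; \sum_{a\in\{0,1\}^{n/2}}\Bigl(\sum_{\{u,v\}\in E(G)}a_u a_v - 1\Bigr)^{2}\;\prod_{i=1}^{n/2} x_{i,a_i}.
\end{equation*}
The key ideas your sketch is missing are these. First, the squaring makes every coefficient a nonnegative integer automatically; there is no need to engineer ``globally coordinated cancellation.'' Second, expanding $(\cdot-1)^2 = (\cdot)^2 - 2(\cdot) + 1$ yields $P_n = Q_2 - 2Q_1 + Q_0$, where $Q_0$ is a single product of 2-variable linear forms, $Q_1$ is a sum of $e(G)$ such products, and $Q_2$ is a sum of $e(G)^2$ such products; each product has $n/2$ factors, each a linear form in $O(1)$ variables, giving $O(e(G)^2 \cdot n) = O(n^3)$ (your $O(n)\times O(n)\times O(n)$ accounting is off in the factors, though it lands on the same total). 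Third, the support of $P_n$ is exactly $\{a : e(G[\operatorname{supp}(a)]) \neq 1\}$, i.e.\ the ``different-from-one'' disjointness function over $G$. Fourth, and crucially, the lower bound is not obtained by showing a per-partition embedding of $\neqOneDisj{\Omega(n)}$; rather, the paper constructs a \emph{single} hard distribution over the support (sample a random induced matching from $G$, then assign each matched pair a value from $\{(0,0),(1,0),(0,1)\}$) and shows that, because $G$ is an expander, any balanced partition is crossed by $\Omega(n)$ edges of the random matching with overwhelming probability. This is what allows the Kaibel--Weltge/Razborov corruption bound for unique-disjointness to be applied simultaneously against every balanced rectangle. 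Without this distributional lifting step, the multipartition lower bound does not follow from a per-partition argument, because the rectangles in the decomposition may each use a different partition.
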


The family of polynomials $(P_n)_{n \geq 1}$ in Theorem~\ref{thm:main1}
can be computed by a uniform sequence of circuits.
That is, there is a Turing machine that gets $n$ as input, runs in time $\poly(n)$ and outputs a circuit computing $P_n$.
The polynomial $P_n$ is described using an expander graph $G$ with $n/2$ vertices. 
It is of the form
\begin{equation}
\label{eq:intro-Pn}
P_n(x) = \sum_{a\in\bcube{n/2}}\left(\sum_{\{u,v\}\in E(G)}a_ua_v - 1\right)^2\prod_{i=1}^{n/2} x_{i,a_i}.\end{equation}
We prove below that there is a small $\Sigma \Pi \Sigma$ circuit for $P_n$.
Using the fact that $G$ is an expander, we show that the monotone circuit complexity of $P_n$ is high.
This is based on a reduction from monotone arithmetic circuit lower bounds to a problem in communication complexity. Several recent results exploit the connection between monotone circuit lower bounds and communication complexity~\cite{RazYehudayoff11, HY13, Srinivasan19, CDM21, CDGM22, CGM22}. 

All together, this leads to the strongest possible separation between monotone and non-monotone computation in the algebraic setting. 
A single negation gate which performs the subtraction of two monotone depth three circuits is exponentially powerful even compared to general monotone circuits.

A padding argument allows us to deduce new separation between monotone and non-monotone arithmetic circuits for constant degree polynomials. This allows us to solve Open Problem~9 in the survey of Shpilka and Yehudayoff~\cite{SY10}. Before this work, such a separation for polynomials of constant degree, was known only between monotone and general arithmetic \emph{formulas} \cite{HY09}.  
\begin{theorem}
\label{thm:main2}
For every constant $k\in\Naturals$ and $n\in\Naturals$,
there is an $n$-variate polynomial $P = P_{n,k}$ of total degree $k$ such that the following hold:
    \begin{enumerate}
        \item $P$ can be computed by a $\Sigma\Pi\Sigma$ circuit of size $O(n(\log n)^2)$. 
        \item Any monotone arithmetic circuit computing $P$ has size at least $(n/k)^{\Omega(k)}$.
    \end{enumerate}
\end{theorem}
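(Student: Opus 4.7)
The plan is to deduce Theorem~\ref{thm:main2} from Theorem~\ref{thm:main1} by a variable-padding argument: start from the hard polynomial of Theorem~\ref{thm:main1} on $m=2k$ variables (so that its degree is exactly $k$), and then replace each variable by a sum of $t=\Theta(n/k)$ fresh variables to inflate the variable count up to $n$ while leaving the total degree at $k$.

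Concretely, let $G$ be a constant-degree expander on $k$ vertices and let $Q(z) = \sum_{a\in\bcube{k}}\bigl(\sum_{\{u,v\}\in E(G)} a_u a_v - 1\bigr)^2 \prod_{i=1}^{k} z_{i,a_i}$ be the instance of the polynomial from~\eqref{eq:intro-Pn} on $2k$ variables. By Theorem~\ref{thm:main1}, $Q$ admits a $\Sigma\Pi\Sigma$ circuit of size $O(k^3)$ and any monotone arithmetic circuit for $Q$ has size at least $2^{\Omega(k)}$. Set $t=\lfloor n/(2k)\rfloor$, introduce fresh variables $\{y_{i,b,j}\}$ with $i\in[k]$, $b\in\bset$, $j\in[t]$, and define $P_{n,k}(y)$ to be the polynomial obtained from $Q$ by the substitution $z_{i,b}\mapsto \sum_{j=1}^{t} y_{i,b,j}$. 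Then $P_{n,k}$ has at most $n$ variables and total degree $k$.

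For the upper bound, substituting a linear form for each variable in a $\Sigma\Pi\Sigma$ circuit yields another $\Sigma\Pi\Sigma$ circuit of the same depth; the number of gates is unchanged at $O(k^3)$, while the bottom-layer wire count grows by at most a factor of $t$. This gives total size $O(k^3 t)=O(k^2 n)$, which is comfortably within $O(n(\log n)^2)$ for constant $k$ (and indeed for any $k\leq \log n$). For the lower bound, I would appeal to the multipartition communication complexity route used to prove the lower bound in Theorem~\ref{thm:main1}: each of the $2^{\Omega(k)}$ ``hard'' monomials $\prod_i z_{i,a_i}$ of $Q$ splits under the substitution into $t^{k}$ padded monomials $\prod_i y_{i,a_i,j_i}$, each of which is a faithful copy of the original combinatorial rectangle with the same disjointness structure across the relevant partition. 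One can recover the original problem from any such padded instance by restricting to a single index $j_i$ per block, so hardness is preserved, and the combinatorial multiplicity of choices contributes an extra factor of $t^{k}$ to the monotone circuit lower bound. This yields monotone complexity at least $2^{\Omega(k)}\cdot t^{k} = (n/k)^{\Omega(k)}$.

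The main obstacle is justifying the multiplicative gain of $t^{k}$. A naive projection argument (setting all but one $y_{i,b,j}$ to zero) only recovers the original $2^{\Omega(k)}$ bound and gains nothing from the padding. To extract the factor $t^k$, one has to trace the multipartition communication argument through the substitution: for each relevant partition $\pi$ of the $z$-variables, lift it to a family of partitions of the $y$-variables by distributing each block $\{y_{i,b,j}\}_{j}$ among the sides of $\pi$, and show that the sum of disjointness lower bounds over this family blows up by $t^{k}$. Either this has to be done by hand, adapting the specific expander-based communication argument of Theorem~\ref{thm:main1}, or by invoking a generic ``sum-substitution'' lifting lemma in the spirit of Hrube\v{s}--Yehudayoff~\cite{HY09}; I expect the latter to be both cleaner and sufficient for the stated bound.
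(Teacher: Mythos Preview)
Your approach has a fatal flaw: the substitution $z_{i,b}\mapsto \sum_{j=1}^t y_{i,b,j}$ is \emph{monotone}, so it preserves small monotone circuits. Concretely, since $Q$ is a fixed monotone polynomial in $2k$ variables (with $k$ constant), it has a monotone circuit of some constant size $c_k$; composing with the $2k$ linear sums yields a monotone circuit for $P_{n,k}$ of size $c_k + O(kt) = O_k(n)$. In fact your $P_{n,k}$ even has a \emph{monotone} $\Sigma\Pi\Sigma$ formula of size $O(2^k k t)=O_k(n)$, namely $\sum_a c_a \prod_{i}\bigl(\sum_j y_{i,a_i,j}\bigr)$ with $c_a\geq 0$. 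So the hoped-for $t^k$ amplification is not merely unproven---it is false, and no lifting lemma in the style of~\cite{HY09} can rescue it.

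The paper's padding goes in the opposite direction. It starts from $P_G$ on an expander with $N=\Theta(k\log(n/k))$ vertices (so the monotone lower bound is already $2^{\Omega(N)}=(n/k)^{\Omega(k)}$), and defines a degree-$k$ polynomial $Q_{k,G}$ in $m=k\cdot 2^{N/k}\approx n$ variables by introducing one fresh variable $x_{i,a}$ for each of the $2^{N/k}$ possible monomials in the $i$-th block of $N/k$ coordinate pairs. The crucial point is that now $P_G$ is a monotone \emph{monomial} substitution of $Q_{k,G}$ (namely $x_{i,a}\mapsto \prod_j x_{(i-1)N/k+j,a_j}$), so the lower bound transfers \emph{from} $P_G$ \emph{to} $Q_{k,G}$: $\monoC(Q_{k,G})\geq \monoC(P_G)-O(mN)\geq (m/k)^{\Omega(k)}$. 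The price is that the upper bound is no longer free---one must verify directly (Lemma~\ref{lemmaSPSUpperBoundForDegreeKPoly}) that $Q_{k,G}$ has a small $\Sigma\Pi\Sigma$ circuit, which is where the specific structure of $P_G$ is used.
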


Such a padding argument is only possible because we prove a \emph{strongly} exponential lower bound in Theorem~\ref{thm:main1} (strengthening ~\cite{CDM21}) and because of the simplicity of our upper bounds for the polynomial $P_n$ defined above (it is not clear how to get such a separation from the strongly exponential lower bound of ~\cite{CDGM22}).

\paragraph{Separation between arithmetic and Boolean models.}
One of the major challenges in the study of monotone arithmetic complexity is proving that there are explicit monotone polynomials $P$ such that for every monotone polynomial $Q$, the monotone complexity of $P \cdot Q$ is high~\cite{HY21}. All of our proof techniques seem to fail for this problem.  
We take a step towards this problem.
We identify an explicit monotone polynomial $P$
such that $P$ has small non-monotone arithmetic circuits but every power of $P$ has high monotone complexity. As far as we know, this is the first separation of this kind. 

We do this via a connection to \emph{Boolean} circuit lower bounds. For every $n$-variate monotone polynomial $P$ over $\R$, we can define a monotone Boolean function $f_P:\{0,1\}^n \to \{0,1\}$ via the correspondence 
\[f_P(x) = 1 \iff P(x)>0.\]
Our first observation is that a monotone circuit for $P$ leads to a monotone circuit for $f_P.$

\begin{observation}
If $P$ can be computed by
    a monotone arithmetic circuit of size $s$, then $f_P$ can be computed by a monotone Boolean circuit of size $s$.    
\end{observation}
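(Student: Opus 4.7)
The plan is to construct the monotone Boolean circuit by a direct gate-by-gate simulation of the monotone arithmetic circuit. Let $\Phi$ be a monotone arithmetic circuit for $P$ of size $s$. We build a monotone Boolean circuit $\Phi'$ of the same underlying DAG by applying the following local substitutions: every input variable $x_i$ stays as the Boolean variable $x_i$; every nonnegative constant $c$ is replaced by the Boolean constant $\mathbf{1}$ if $c>0$ and by $\mathbf{0}$ if $c=0$; every $+$ gate is replaced by $\vee$; and every $\times$ gate is replaced by $\wedge$. The size of $\Phi'$ is at most the size of $\Phi$, so it suffices to verify correctness.

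The key step is the following invariant, proved by induction on the depth of a gate $g$ of $\Phi$: for every Boolean input $x \in \{0,1\}^n$, if $g(x) \in \Reals_{\geq 0}$ denotes the real value computed by $g$ in $\Phi$ and $g'(x) \in \{0,1\}$ denotes the value computed by the corresponding gate in $\Phi'$, then
\[
g'(x) \;=\; [\,g(x) > 0\,].
\]
The base cases (inputs and constants) hold by construction. For the inductive step, crucially we use that $\Phi$ is monotone: every subcircuit computes a polynomial with nonnegative coefficients, hence on any Boolean (nonnegative) input, the value at every gate is a nonnegative real. For nonnegative reals $a,b$, one has $a+b>0$ iff $a>0$ or $b>0$, and $a\cdot b>0$ iff $a>0$ and $b>0$; this is exactly what the replacements $+\mapsto\vee$ and $\times\mapsto\wedge$ implement. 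Applying the invariant at the output gate yields $\Phi'(x) = [P(x) > 0] = f_P(x)$, completing the proof.

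There is no substantial obstacle here; the only point to be careful about is the use of nonnegativity, which is guaranteed precisely by monotonicity of $\Phi$ together with the Boolean (hence nonnegative) inputs. Without monotonicity, cancellations of the form $a+(-a)=0$ would break the equivalence between $+$ and $\vee$, so the argument is really an exact translation of the algebraic semantics into the Boolean one rather than a reduction that loses anything.
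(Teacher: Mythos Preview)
Your proof is correct and follows exactly the approach sketched in the paper: replace $+$ by $\vee$ and $\times$ by $\wedge$, with the invariant $g'(x)=[g(x)>0]$ established by induction using nonnegativity of all intermediate values. You have simply filled in the details the paper omits.
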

\begin{proof}[Sketch]
    The Boolean circuit for $s$ is obtained by replacing $+$ gates by $\vee$ gates and $\times$ gates by $\wedge$ gates.
\end{proof}
 One advantage is that $f_P$ corresponds not just to $P$ but also to all the powers of $P$. So, a monotone lower bound for $f_P$ leads to a monotone lower bound to all the powers of $P$. A second important advantage is that we know how to prove lower bounds for monotone Boolean complexity. 
Razborov~\cite{Raz85} and many works that followed~\cite{AlB87, Tardos88, HR00, CKR22, CGRSS25} developed the approximation method for proving Boolean monotone circuit complexity lower bounds.
In particular, the known lower bounds for the perfect bipartite matching
function already imply that any power of the permanent polynomial is
superpolynomially~\cite{Raz85} (and even exponentially~\cite{CGRSS25}) hard
for monotone arithmetic circuits.

We can extend this to show separations. To do this, we also need to upper bound the non-monotone circuit complexity of $P$ or $f_P$. A central tool in proving arithmetic circuit upper bounds in through linear algebra and the determinant~\cite{Csanky1976, Valiant79a, Berkowitz84,MV97}.
All together we get the following strong separation. 

\begin{theorem}
\label{thm:main3}
    There is a monotone $n$-variate polynomial $P = P_n$, over the reals, such that the following hold:
    \begin{enumerate}
        \item $P$ can be computed by an arithmetic circuit of size polynomial in $n$ and depth $O((\log n)^2)$.
        \item Any monotone Boolean circuit computing the Boolean function $f_P$ that corresponds to $P$ has size at least $n^{\Omega((\log n)^{1/2})}$.
    \end{enumerate}
\end{theorem}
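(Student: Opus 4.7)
The plan is to realise $P_n$ as a Gram-type determinant. Fix a matrix $V = V_n \in \R^{r \times n}$ (with $r$ and the entries of $V$ to be specified by a structured algebraic/combinatorial construction), and define
\begin{equation*}
P_n(x) \;\defeq\; \det\!\bigl(V\,\diag(x)\,V^{\transp}\bigr).
\end{equation*}
By the Cauchy--Binet formula,
\begin{equation*}
P_n(x) \;=\; \sum_{S \in \binom{[n]}{r}} (\det V_{S})^{2} \prod_{i \in S} x_i,
\end{equation*}
so $P_n$ has non-negative coefficients (is arithmetically monotone) and $f_{P_n}(x) = 1$ iff $\{i : x_i = 1\}$ contains a basis of the real matroid $\mathcal{M}(V)$ represented by the columns of $V$. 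Item~1 of the theorem is then immediate: building the $r \times r$ matrix $V\,\diag(x)\,V^{\transp}$ from $x$ is a single round of $O(rn)$ products of depth $O(\log n)$, and the $r \times r$ determinant lies in arithmetic $\NC^2$ by Csanky's (or Berkowitz's) algorithm~\cite{Csanky1976,Berkowitz84}, yielding total size $\poly(n)$ and depth $O((\log n)^2)$ for any choice of $V$.

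For item~2 I would apply Razborov's approximation method~\cite{Raz85}, in the refined form of \cite{AlB87,Tardos88,HR00,CKR22,CGRSS25}, to the monotone Boolean function $f_{P_n}$. The positive test inputs are taken to be indicator vectors of bases of $\mathcal{M}(V)$ (each is a minterm of $f_{P_n}$), and the negative test inputs are indicator vectors of complements of cocircuits of $\mathcal{M}(V)$ (each is a maxterm). Replacing the AND/OR gates of a purported small monotone circuit by their sunflower-closure approximators then reduces the lower bound to combinatorial estimates on how bases and cocircuits of $\mathcal{M}(V)$ intersect, a setup which is by now fairly routine once the matroid has the right ``pseudorandom'' flavour.

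The crucial step, and the place where the main technical difficulty lies, is constructing $V$ --- equivalently the matroid $\mathcal{M}(V)$ --- so that its basis/cocircuit structure is rich enough to drive the approximation method through, while its matrix representation remains real and of bounded rank so as to keep the $\NC^2$ upper bound cheap. The exponent $\sqrt{\log n}$ in the claimed lower bound is the classical Andreev/Razborov balance point, arising when the rank $r$ (which equals the size of every minterm) is taken to be $\Theta(\sqrt{\log n})$ so as to optimise the trade-off between the size of the sunflower-free approximator system (growing roughly like $r^{O(r)}$) and the number of positive tests that can be covered. A plausible construction of $V$ in keeping with the paper's span-program motivation would be a low-rate Reed--Solomon-type generator matrix composed with a combinatorial gadget that plants clique-like minterms of size $\Theta(\sqrt{\log n})$; verifying that the resulting matroid satisfies the required intersection estimates --- the matroid-theoretic analogue of the sunflower bounds underlying Razborov's original clique lower bound --- is the main step that I would need to execute with care.
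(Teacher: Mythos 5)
Your upper bound is exactly the paper's: $P_n = \det(V\,\diag(x)\,V^{\transp})$, Cauchy--Binet, and an $\NC^2$ determinant. The lower-bound skeleton (approximation method, minterms $=$ bases, maxterms $=$ complements of cocircuits) is also the right family of ideas and matches the paper's abstract criterion (\Cref{thm:sunflower-gen}).

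However, there is a genuine gap, and it is precisely the point you defer as ``fairly routine once the matroid has the right pseudorandom flavour.'' Over $\Reals$ the hard part is not a sunflower-intersection calculation on a generic matroid; it is arranging for the \emph{negative} test distribution to have high Hamming weight at all. A complement of a cocircuit is the set of columns lying inside some hyperplane, and for a generic (or random sign) real matrix of rank $r$ essentially every hyperplane contains only $r-1$ columns --- anticoncentration works \emph{against} you. So the negative inputs would have tiny support and the approximation method has nothing to grab onto. The paper handles this by taking $M$ to be $n \times m$ with $m = n^2$, with each column an independent uniformly random \emph{sparse} $0/1$ vector of Hamming weight $\Theta(\log m)$; then a random $u \in \{-1,0,1\}^n$ is orthogonal to an $\Omega(1)$ fraction of columns (by a local central-limit/Rademacher estimate, \Cref{lem:D0ind-Reals}), giving a usable $\cald_0$. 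Even so, this only yields a \emph{weak} form of bounded independence, which forces the paper to fall back on the classical \Erdos--Rado sunflower lemma rather than robust sunflowers, and that is exactly why the final bound is only quasipolynomial $m^{\Omega(\sqrt{\log m})}$ rather than exponential.

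Your parameter choice is also off in a way that matters: you set the rank (and hence the minterm size) to $\Theta(\sqrt{\log n})$, reading the exponent off the Andreev/Razborov clique bound. In the paper, the rank is $n = \sqrt{m}$ (so minterms have size $\sqrt{N}$ in $N = m$ variables), and the $\sqrt{\log N}$ shows up as the DNF \emph{width} $w$ in the approximation step, together with the column sparsity $s = \Theta(w^2)$, not as the rank. With rank $\Theta(\sqrt{\log n})$ and a Reed--Solomon-style $V$ over $\Reals$, you would face exactly the hyperplane-density obstruction above, and no clique-like gadget you attach can fix it without something equivalent to the paper's sparse-column trick. So the proof proposal is correct on item~1 and on the abstract lower-bound strategy, but the decisive construction of $V$ --- the part that makes item~2 true --- is missing and the stated heuristics point in the wrong direction.
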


It follows that any monotone arithmetic circuit
computing a power of $P$ must be of super-polynomial size as well. 

\paragraph{Separations between Boolean models.} An adaptation of our construction to work over the field $\Fbb_2$ leads to an improved separation between monotone and non-monotone circuits in the Boolean setting as well. 
We take a small detour to explain the context better. The results in
\cite{AlB87, Tardos88} show that there exists a monotone function in $\P$
that requires monotone circuits of size $2^{n^{1/6 - o(1)}}$. Building on
these results, de Rezende and Vinyals~\cite{deRezendeVinyals25} have
recently shown
that there exists a monotone function in $\P$ that requires monotone
circuits of size $2^{n^{1/3 - o(1)}}$. Very recently, Cavalar et
al.~\cite{CGRSS25} improved Razborov's result and showed that the matching
function for bipartite graphs (computable in $\RNC$ \cite{Lovasz79} and
in $\quasiNC$ \cite{FGT19}) 
and another explicit monotone function computable in $\mathbf{L}$ require monotone circuits of size $2^{n^{1/6 - o(1)}}$.
Earlier works~\cite{GKRS18,DBLP:journals/toc/GargGK020} also exhibited an
explicit function in $\NC$
requiring monotone circuits of size
$2^{n^{\eps}}$, where $\eps > 0$ is an unspecified constant.
 
 In this paper, we obtain the following theorem, proving the strongest
 quantitative lower bounds for a function in $\NC$ (except for the $o(1)$ terms, it also matches the best known lower bound for a function in $\P$).

\begin{theorem}
\label{thm:main4}
    For every sufficient large $n\in\Naturals$,
    there is a monotone Boolean function $f:\{0,1\}^n \to \{0,1\}$ such that the following hold:
    \begin{enumerate}
        \item $f$ can be computed by a uniform Boolean circuit of size polynomial in $n$
            and depth $O(\log n)^2$, i.e., $f$ is in uniform $\NC^2$
        \item Any monotone Boolean circuit computing $f$ has size at least $2^{n^{1/3-o(1)}}$.
    \end{enumerate}
\end{theorem}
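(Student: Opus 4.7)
The plan is to construct $f$ from a linear-algebraic problem over $\Fbb_2$ whose decision procedure is a monotone span program of polynomial dimension. Fix a constant-degree expander $G$ on $m = n/2$ vertices, and regard each Boolean input $x\in\bcube{n}$ as selecting, for every vertex $i\in[m]$, one of two labels $x_{i,0},x_{i,1}$. I would define $f(x)=1$ iff there is a choice of one enabled label per vertex whose combination with a fixed, $G$-determined collection of vectors over $\Fbb_2$ satisfies a prescribed linear condition — morally, the $\Fbb_2$-shadow of the quadratic constraint $\sum_{\{u,v\}\in E(G)}a_u a_v \neq 1$ appearing in~\eqref{eq:intro-Pn}. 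Equivalently, $f$ is evaluated by a monotone span program over $\Fbb_2$ whose defining matrix has $\poly(n)$ rows and columns.

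From this representation the first part follows from standard parallel linear algebra: deciding whether a submatrix of a $\poly(n)$-dimensional $\Fbb_2$-matrix has a prescribed rank (or whether an $\Fbb_2$-linear system is consistent) lies in uniform $\NC^2$, via Mulmuley-style algorithms or Berkowitz-style parallel computation of the characteristic polynomial in arbitrary characteristic. Since the span-program data for $f$ is generated uniformly from the expander $G$, the resulting circuit family is uniform, and the depth and size bounds in the statement are read off directly from the parallel linear-algebra subroutine.

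The bulk of the work is the monotone lower bound, which I would establish by applying the approximation method in the refined form recently developed in~\cite{CGRSS25, deRezendeVinyals25}. I plan to identify a positive test distribution supported on inputs encoding ``good'' assignments $a$ (those for which the $\Fbb_2$-linear constraint holds) and a negative test distribution in which, starting from a good assignment, a short random perturbation along the edges of $G$ is applied to destroy feasibility. The expander mixing property of $G$ should ensure that these two distributions look nearly-indistinguishable to any small monotone DNF approximator, which is exactly the statistical input that the plucking and sunflower-style analyses of~\cite{CGRSS25, deRezendeVinyals25} require in order to derive the $2^{n^{1/3-o(1)}}$ estimate.

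The main obstacle I foresee is reconciling the two sides of the construction: the span-program realization forces the input bits to interact in a highly algebraic way, and this redundancy must not collapse the combinatorial separation between the positive and negative test distributions. Concretely, the approximator produced by the plucking procedure needs to miss a definite fraction of either the good inputs or the perturbed ones, and this will require carefully matching the expansion parameters of $G$ to the density and threshold parameters that appear in the quantitative form of the approximation method. If these parameters can be aligned — as I expect the construction in~\eqref{eq:intro-Pn}, arithmetised over $\Fbb_2$ with a judicious choice of the linear constraint, will allow — then the resulting lower bound matches the best currently known for any function in $\P$ while placing the hard function in uniform $\NC^2$.
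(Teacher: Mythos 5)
The proposal does not match the paper's proof and, as sketched, has genuine gaps on both sides of the separation.

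The paper's hard function for Theorem~\ref{thm:main4} is $f_M$, the ``full-rank'' function: given a generator matrix $M \in \Fbb_2^{r\times m}$ of a carefully chosen binary linear code, $f_M(x)=1$ iff the columns of $M$ indexed by $\supp(x)$ have rank $r$. The expander-graph construction and the quadratic constraint $\sum_{\{u,v\}\in E(G)} a_u a_v \neq 1$ of~\eqref{eq:intro-Pn} are used only for the \emph{arithmetic} separation in Section~\ref{sec:thm1}, where the lower bound mechanism is multipartition communication complexity, not the approximation method. Trying to transport the quadratic expander construction directly into the Boolean world, as you do, creates two problems that your sketch doesn't resolve: (i) the constraint is quadratic, so it is not a priori a monotone span program, and ``an $\Fbb_2$-shadow'' of it is not actually specified; and (ii) the function you describe — ``there exists a choice of one enabled label per vertex satisfying the linear condition'' — is an existential quantification over a potentially exponential set, so neither monotonicity nor membership in uniform $\NC^2$ is immediate.

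The lower bound side is further from the paper. The paper's argument does not rely on expander mixing. Instead it defines $\cald_1$ as a uniform random weight-$N$ string and $\cald_0$ by picking a random $u \in \Fbb_2^r$ and turning on exactly the columns orthogonal to $u$. It then needs (a) spreadness of $\cald_1$, derived from the code distance (every large subspace avoids a constant fraction of columns, Lemma~\ref{lemRandomVectorEscapesSmallSubspaces}); and (b) a $(\cald_0,\eps)$-sunflower bound, which is obtained by first showing $\cald_0$ is $t$-wise independent via the \emph{dual} distance of the code (Lemma~\ref{lem:D0ind-F2}) and then invoking Bazzi/Tal to reduce to the robust sunflower lemma for the uniform distribution (Lemma~\ref{lem:F2-neg-sunflower}). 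The exponent $1/3$ is not an artifact of expansion parameters; it is forced by a parameter balancing between the $w \approx \sqrt{t}/\log m$ width at which Bazzi's theorem still applies, and the spreadness $q \approx m/N \approx d/r$, with the explicit choice $m \approx n^{3/2}(\log n)^2$ (Corollary~\ref{coroWellBehavedBinaryCodesFromReedSolomon} via concatenated Reed--Solomon). Your proposal has none of these components and offers no route to the exponent $1/3$; ``expander mixing makes the two distributions indistinguishable to small DNFs'' is not a substitute for the $t$-wise-independence-plus-Bazzi reduction, and there is no candidate for the code-theoretic properties (distance and dual distance) that actually drive the calculation. In short: the approach needs to be replaced by the $f_M$/code framework of Sections~\ref{sec:mon-ckt-lb}--\ref{sec:thm2}.
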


The function $f$ from Theorem~\ref{thm:main4} has an additional important
property. As any monotone function is specified by its min-terms, it is
interesting to compare the ``complexity'' of the min-terms to the monotone
complexity of the function. Concretely, Jukna and Seiwert~\cite{JS20}
studied monotone functions whose min-terms are the bases of a matroid.
Their motivation was comparing the power of greedy algorithms and pure
dynamic programming algorithms. In their terminology, the function $f$ from
Theorem~\ref{thm:main4} gives an explicit problem for which a greedy
algorithm exactly solves the problem whereas the cost of approximately
solving the problem using pure dynamic programming is high. They proved
that such problems exists using counting arguments, but we provide the
first explicit example of such a function, solving an open problem from
their work (see~\cite[Problem 1]{JS20}). While there have been other lower
bounds recently for monotone Boolean functions defined via linear
algebra~\cite{GKRS18, CGRSS25}, they do not seem to yield a lower bound for
a matroid-based function as above. 

\subsection{Challenges and Proof Ideas}\label{sec:proof-idea}

\paragraph{Arithmetic lower bounds.} In order to prove Theorem~\ref{thm:main1}, we follow an idea of Hrube\v{s} and Yehudayoff~\cite{HY13} in the \emph{non-commutative} setting,\footnote{In this model, variables do not commute.} which was in turn inspired by results from communication complexity~\cite{Razborov92,DBLP:journals/jacm/FioriniMPTW15}.\bnote{Added this citation.} Let $P(x_{1,0},x_{1,1},\ldots, x_{n,0}, x_{n,1})$ be a (set-multilinear) polynomial such that each monomial in $P$ has the form $x_{1,a_1}\cdots x_{n,a_n}$ for some $a\in \{0,1\}^n.$ It is known~\cite{Nisan91} that if $P$ has a `simple'\footnote{The technical term for this is an Algebraic Branching Program (ABP). A similar but more complicated decomposition can be given for general circuits.} monotone non-commutative circuit of size $s$, then we can write
\[
P = \sum_{i=1}^s Q_i\cdot R_i
\]
where each $Q_i$ depends on the first half of the variables ($x_{j,b}$ for $j\in [n/2]$) and $R_i$ on the latter half, and the polynomials are all monotone. This is analogous to the setting of communication complexity, where we can use a small $2$-party communication protocol for a function $f:\{0,1\}^{n/2}\times \{0,1\}^{n/2}\rightarrow \{0,1\}$ to decompose $f$ into a small sum of Boolean (and hence non-negative) \emph{rectangles}. The work of~\cite{HY13} shows how to exploit this connection to obtain a non-monotone versus monotone separation. The separating example is inspired by the \emph{Unique-Disjointness} function. This function has a simple algebraic description via inner products which leads to a non-monotone upper bound, and the lower bound uses ideas from Razborov's lower bound~\cite{Razborov92} on the communication complexity of this problem.  

Our set-up is the standard \emph{commutative} setting, which means that we get a similar decomposition to the one above, except that each $Q_i$ depends on (an unknown subset of) half of the variables (or more precisely $x_{j,b}$ for some $j\in S_i$  where $|S_i| = n/2$). Now, the analogy is no longer with the standard model of communication complexity but with the model of \emph{multipartition} communication complexity, where each rectangle in the decomposition is allowed to use a different partition of the inputs. This model and its variants have been investigated before in works of~\cite{DHJ2004,Hayes2011,HY16}. The work of~\cite{Hayes2011,Jukna15,HY16} in particular discovered a technique to `lift' some structured lower bounds from standard communication complexity to the multipartition setting, by defining variants of the hard functions (for the standard communication model) using an expander graph. In particular, the lower bounds for the Unique-disjointness function are amenable to this lifting technique. The polynomial $P_n$ defined in (\ref{eq:intro-Pn}) is an algebraic variant of this lifted Boolean function (analogous to how the separating example of~\cite{HY13} is related to the Unique-disjointness function). The lower bound argument combines a \emph{non-deterministic} communication complexity lower bound for Unique-Disjointness (this is due to Razborov but we use a version due to~\cite{KaibelWeltge15}). To lift the lower bound, we need to construct a hard distribution that allows to carry out a covering argument. This hard distribution uses some ideas from~\cite{Srinivasan19}.

To get a separation for constant-degree polynomials (hence resolving the open problem from~\cite{SY10}), we use a padding argument. The variables of $P_n$ naturally come in $n$ groups, each of size $2$. We collect these groups into $k$ buckets of size $n/k$ each. There are now $2^{n/k}$ possible monomials in each group and we define a new polynomial $Q_k$ treating these monomials as our variables. It easily follows that $Q_k$ is at least as hard as $P_n$ since a circuit for $Q_k$ easily yields a circuit for $P_n$ by replacing each variable of $Q_k$ by the corresponding monomial. The converse, however, is not clear. Nevertheless, because our depth-$3$ non-monotone circuits for $P_n$ are very simple, we are also able to modify this construction to get a non-monotone circuit for $Q_k$ of small size and depth.

\paragraph{Arithmetic and Boolean upper bounds.}

To prove \Cref{thm:main3,thm:main4},
we consider polynomials and Boolean functions
corresponding to full-rank sets of vectors.
Let $M$ be an $r\times n$ matrix over a field $\Fbb$. We are going to consider both the cases in which the field is a finite field and $\R$.
For any subset $S\subseteq [n]$ of the columns of $M$, let $M[S]$ be the restriction of $M$ to the columns in $S$.  
Let $f_M : \{0,1\}^n \to \{0,1\}$ be the monotone Boolean function associated to $M$ 
defined as
\[\text{$f_M(x)=1$ iff the matrix $M[S]$ has full rank,}\]
where $S=\{i \in [n] : x_i=1\}$ is the set of columns of $M$ that $x$ indicates.
Equivalently, the function $f_M(x)$ evaluates to $1$
iff the linear system $y^{\mathsf{T}} M[S] = 0$
has a unique solution. 
This can be checked by a Boolean circuit of polynomial size and depth $O(\log n)^2$~\cite{ABO99,Mulmuley87}.

When $\Fbb = \Reals$, we also associate with $M$
an $n$-variate multilinear polynomial $P_M$ via
\begin{equation}\label{eqDefinitionHomLinPoly}
    P_M(x)
    \defeq \sum_{\substack{S\subseteq [n]:\\ |S| = r}} \det(M[S])^2\cdot \prod_{i\in S} x_i.
\end{equation}
For $x \in \{0,1\}^n$, it follows that $P_M(x)>0$ iff $f_M(x)=1$. The
Cauchy–Binet formula allows to compute $P_M$ efficiently with a
non-monotone circuit (and in poly-log depth). This leads to our arithmetic
non-monotone upper bounds. 

It is tempting to try to derive our separations with the type of functions considered
in~\cite{GKRS18,CGRSS25}.
These functions correspond to checking whether a certain system of linear
equations is satisfiable.
More concretely,
there is an implicit matrix $M$ over a field $\bbf$,
and the input string is interpreted as encoding
a subset $S$ of the columns of $M$.
The function accepts if 
$M[S]$ spans a fixed target vector $v \neq 0$.
However,
unlike the full-rank property which is tightly related to
the determinant polynomial, it does not seem possible to connect such a
span restriction with an efficiently constructible polynomial.
Moreover, 
whereas 
a full-rank set of vectors is a basis of a vector space (and thus of a matroid),
the same cannot be said of 
vectors spanning~$v$.
This is crucial for our solution of the Problem~1 of~\cite{JS20}.

Finally, 
as we discuss in the \hyperref[sec:open]{\emph{Open Problems}} section,
the function 
$f_M$
can be computed by a rarely studied
monotone model of computation which we name \emph{monotone rank programs}.
These are restrictions of \emph{monotone span programs}, 
a widely studied monotone computational
model~\cite{DBLP:conf/coco/KarchmerW93,DBLP:journals/combinatorica/BabaiGW99}
that computes the functions from~\cite{GKRS18,CGRSS25} and is equivalent to
linear secret-sharing schemes~\cite{DBLP:conf/codcry/Beimel11}.
It is possible that monotone rank programs are much weaker
than monotone span programs,
just like \emph{monotone dependency programs} (another restriction of
monotone span programs)~\cite{DBLP:conf/dimacs/PudlakS96}.
If that is true, then our work 
extends the exponential monotone circuit lower bound
shown for monotone span programs~\cite{GKRS18}
to an even simpler computational model.
\bnote{Should this be in the results section, the technical
discussion, or entirely in the Open Problems section? I fear it may dillute
the punch from the results section. I also fear that it would be missed
in the Open Problems section.}

\paragraph{Monotone Boolean lower bounds.} We now discuss the monotone Boolean circuit lower bounds in this paper. Recall from above that there is a fixed matrix $r\times n$ $M$ and we want to prove a lower bound for the Boolean function $f_M:\{0,1\}^n\rightarrow \{0,1\}$ which decides if a given subset of the columns of $M$ is full-rank or not. 

To prove such a lower bound, we turn to the approximation technique of
Razborov~\cite{Raz85} which has seen many refinements over the
years~\cite{AlB87,Rossman2014, CKR22,CGRSS25}.
Based on these works,
we abstract
out the core of the
approximation method as applied to monotone Boolean circuit lower bounds
into a general and clean statement (Theorem~\ref{thm:sunflower-gen}) below.
Informally, the statement says that to prove a monotone circuit lower bound
for a Boolean function $f$, it suffices to devise two distributions
$\mathcal{D}_0$ and $\mathcal{D}_1$, supported on the $0$s and $1$s of $f$
respectively, that satisfy two conditions:
\begin{itemize}
    \item (Spreadness of $\mathcal{D}_1$): The probability that any $w$ variables are simultaneously set to $1$ under $\mathcal{D}_1$ is at most $q^{-w}$ (for as large a $q$ and $w$ as possible).
    \item (Robust Sunflowers for $\mathcal{D}_0$): Any $w$-DNF of size
        $s^k$ contains a subset that is a \emph{robust sunflower} with
        respect to $\mathcal{D}_0$. Informally, this is a subset of the
        terms that can be 
        replaced by a single term
        without changing the DNF much on inputs
        from $\mathcal{D}_0.$
\end{itemize}
The lower bound obtained from these two conditions is roughly $(q/s)^w.$ At a high-level, the spreadness of $\cald_1$ says that it is supported on strings of low weight, and the sunflower property of $\cald_0$ says that it is supported on strings of high weight. 
By the monotonicity of $f$,
these weight are ``opposite of what they should be''; the 1s of $f$ should be higher than the 0s of $f$.

To use this technique for $f_M,$ we need to define the distributions which in turn depend on the choice of $M.$ The cleanest example is that of a random $r\times n$ matrix over $\F_2$ where $r \approx n^\eps$. Standard calculations show that such a matrix satisfies two linear algebraic properties: 
\begin{itemize}
    \item P1: Most subsets of $r$ columns are linearly independent.
    \item P0: \emph{All} subsets of columns of cardinality $r' = \tilde{\Omega}(r)$ are linearly independent.
\end{itemize}
P1 allows us to easily obtain a spread $\mathcal{D}_1$ with as low a Hamming weight as possible: simply choose a random subset of size $r$ and set these bits to $1$. This is a spread distribution with spreadness parameter $q \approx (n/r).$ 

Choosing the distribution $\mathcal{D}_0$ is slightly trickier: To sample a large number of columns that are linearly \emph{dependent}, we choose a random vector $u\in \F_2^r$ and pick all the columns that are orthogonal to $u.$ This distribution has \emph{uniform} marginals (and thus relatively high Hamming weight) and is also $r'$-wise independent by property P0 above. This last property allows us to prove the robust sunflower property for $\mathcal{D}_0$ by replacing $\mathcal{D}_0$ by the uniform distribution, since low-width $w\approx \sqrt{r'}$ DNFs cannot distinguish between the two~\cite{Bazzi09, Tal17} except with error $\exp(-\tilde{\Omega}(\sqrt{r'}))$. Recent bounds on robust sunflower lemmas with respect to product distributions~\cite{ALWZ21, BellChueluechaWarnke2021,Rao2025} can now be used to argue the desired property with the parameter $s\approx \sqrt{r'}$. 
Optimizing the parameters leads to  choosing $r$ as $r \approx n^{2/3}$
which leads to a lower bound of $\approx \exp(n^{1/3}).$
Interestingly, this is the first 
time Bazzi's theorem is employed 
to show monotone circuit lower bounds.

We note that Bazzi's theorem is the only overhead in our lower bound,
otherwise we could have matched the best known monotone circut lower
bound for an $\mathbf{NP}$ function ($\approx \exp(n^{1/2})$~\cite{CKR22}).
Nonetheless, our reduction to the robust sunflower lemma via Bazzi
is more efficient than the reduction from matching-sunflowers
via ``blocky'' families of~\cite{CGRSS25},
which gives us a lower bound of 
higher order
for an $\NC$ function
(theirs being $\approx \exp(n^{1/6})$).
On the other hand, 
their upper bound is better (the function is in
$\mathbf{L}$),
which makes the results arguably incomparable.

To obtain explicit matrices with the above properties,
we use constructions of binary linear codes with sufficiently high distances and dual distances.
By translating properties P0 and P1 into the coding theory language, we can obtain P0 by proving that the dual distance of the code is at least $r'$, and P1 by proving that its distance is sufficiently large.

Finally, to prove the separation between the arithmetic and Boolean models, we need to argue the above lower bound for a \emph{real} matrix $M$. A natural strategy is to repeat the above argument, say, for random matrices where each column is an independent sign vector $v\in \{-1,1\}^r$. However this leads to a problem with defining the distribution $\mathcal{D}_0$ as above, since the point thus sampled is unlikely to have a high Hamming weight
(by standard anti-concentration arguments~\cite{LittlewoodOfford1943, Erdos1945_LO}, the chance that two random sign vectors are orthogonal is small), and thus $\mathcal{D}_0$ has marginals that are small. 

We need to have many real vectors so that many of their subsets are linearly dependent. {\em How to achieve this?}
The main idea is to choose a sparse matrix. 
We sample each column of $M$ to be a random Boolean vector of low Hamming weight. This now gives a distribution  $\mathcal{D}_0$ that has higher marginals and satisfy a weak form of bounded independence. Unfortunately, we are unable to apply this weak form in conjunction with the strongest robust sunflower lemmas. 
We are, however, able to use it along with the classical sunflower lemma~\cite{ER60} and thus prove a superpolynomial lower bound \emph{\`{a} la} Razborov~\cite{Raz85} (instead of an exponential lower bound).

\paragraph{Hamming balls and subspaces.} On the way to establishing the spreadness property for $\mathcal{D}_1$, we prove a technical result about intersection patterns of subspaces and the Boolean cube, which we believe is independently interesting. Let $\F$ be any field and $V$ be a subspace of $\F^n$ of dimension $d$. It is a standard fact that $|V\cap \{0,1\}^n|\leq 2^d$ (see e.g.~\cite{AY24}). We extend this bound from the full Hamming cube
to a Hamming ball of radius $s$.
We prove that the number of points of the Boolean cube and Hamming weight at most $s$ in $V$ is at most $\binom{d}{\leq s}.$ These bounds are all easily seen to be tight for a vector space $V$ generated by $d$ standard basis vectors.

\subsection{Open problems}
\label{sec:open}

\paragraph{The power of monotone rank programs.}

A \emph{monotone span program of size $k$
over a field $\bbf$} is a $\bbf$-matrix $A$
with $k$ rows,
together with
a labelling of the rows with an input variable from
$\set{x_1,\dots,x_n}$.
The span program accepts a binary string $x \in \blt^n$
if
the set 
$A_x$
of rows whose labels are satisfied by $x$
span the vector $e_1 = (1,0,\dots,0)$
(any other vector can be chosen by a proper change of basis).
There is a linear-size monotone span program computing the functions from~\cite{GKRS18,CGRSS25}.
Using standard tricks, it is not hard to compute $f_M$ with a monotone span
program as well.

Interestingly, the function $f_M$ can also be computed by a seemingly simpler
monotone model which we 
name
\emph{monotone rank programs}.
Just like a monotone span program, a monotone rank program is a matrix
$A$ with a labelling of the rows.
The difference is that a monotone rank program accepts an input $x$ iff $A_x$ is
full-rank.
Such models have been studied in the context of linear secret sharing
schemes~\cite{DBLP:conf/scn/NikovNP04}
and have been called ``non-redundant'' as there are no dependencies
in $A_x$ for any minimally accepting $x$.

A related model called \emph{monotone dependency programs}
accepts $x$ iff $A_x$ is linearly dependent.
This was shown to be exponentially weaker than monotone span programs
in~\cite{DBLP:conf/dimacs/PudlakS96}.
It remains an open problem to determine whether monotone rank programs 
are weaker
than monotone span programs.

\paragraph{Monotone complexity of matching over planar graphs.}

There are two parts for proving a separation in computational complexity. We have two computational classes $A$ and $B$,
and we need to come up with a problem $P$ that belongs to $B$ but does not belongs to $A$.
In our context, $A$ is a monotone class and $B$ is a non-monotone class. 
We have used techniques from communication complexity \cite{Razborov92, KaibelWeltge15} and approximation methods \cite{Raz85} to prove lower bounds against $A$.  
To prove that $P$ belongs to $B$, we have used either the power of depth-3 circuits with negations, or the power of linear algebra. 

There is also a different path for proving separations. In his
work~\cite{Valiant80}, Valiant used the fact the number of perfect
matchings
in a planar graphs is given by the Pfaffian \cite{Kasteleyn1967} so we can
reduce it to computing the determinant polynomial. 
Again, the upper bound comes from linear algebra. 
Valiant proved that (for some planar graphs) the Pfaffian can be computed
by a non-monotone arithmetic circuit of polynomial size; more
precisely, by an algebraic branching program of polynomial size. But we
are interested in separating Boolean monotone circuits and arithmetic
non-monotone circuits. We know that the upper bound holds, but we do not
know if deciding the perfect matching problem is hard for planar graphs for
monotone Boolean circuits. 

\paragraph{Constant-depth separations.}

\Cref{thm:main1} shows a polynomial computable by depth-3 arithmetic
circuits
requiring maximal monotone arithmetic complexity, improving previous
separations of this kind~\cite{CDM21}.
In the Boolean setting,
a line of
works~\cite{Ajtai1987,Okolnishnikova1982,Raz1992,Grigni1992,DBLP:journals/combinatorica/BabaiGW99,Chen2017,GKRS18,Cavalar2023,CGRSS25}
has either directly or indirectly
studied the relative power of constant-depth and monotone circuits,
recently 
discovering
a monotone function 
computable by constant-depth circuits 
which requires
superpolynomial size monotone circuits~\cite{CGRSS25}.
It remains an open question whether there exists a monotone function 
computable by some constant-depth circuit model 
(in fact, even $O(\log n)$-depth)
requiring exponential-size monotone circuits.
Is there a matrix $M$ such that $f_M$ can exhibit this separation?

This type of question is also interesting from the point of view algebraic
separations.
Is there a polynomial $P$ computable by constant-depth or even
logarithmic-depth arithmetic circuits and such that $f_P$
requires superpolynomial or even exponential size monotone Boolean circuits?
This looks plausible in light of several nontrivial algorithms
which have recently been implemented with constant-depth arithmetic
circuits~\cite{DBLP:conf/focs/0003W24}.

\paragraph{Other questions.}

There are also more specific questions about the techniques we used to obtain our results:
\begin{itemize}
    \item Can we improve our lower bounds in Section~\ref{sec:thm1} to $\Omega\left(\binom{n}{k}\right)$ for an $n$-variate polynomial of constant degree $k$?
    \item Can we make the bounds in Section~\ref{sec:thm3}
        (Theorem~\ref{thm:main3}) exponential? This requires proving a
        better robust sunflower lemma for distributions that are only
        weakly independent (in the sense of Lemma~\ref{lem:D0ind-Reals}).
        One approach would be to show that our distribution 
        fools low-width DNFs. 
        Note that recent work has shown limitations
        on what kinds of distributions can fool
        DNFs~\cite{DBLP:conf/coco/AlekseevGGMR0025}.
    \item Can we make the random choice of matrices $M_n$ in Sections~\ref{sec:thm3} explicit?
        In Section~\ref{sec:thm2}, our construction uses explicit constructions of binary linear codes with high distances and dual distances.
    \item Does the lower bound in Section~\ref{sec:thm3} also hold if each column (in $M_n$) is just $O(1)$-sparse?
\end{itemize}

\ifnum\anonymous=0
\subsection{Acknowledgements}

PM, TBF and SS started their collaboration at the \emph{Workshop on Algebraic Complexity Theory (WACT) 2025} in Bochum and would like to thank the organizers of this workshop for their hospitality.
SS would also like to thank Rohit Gurjar and Roshan Raj for pointing him to polynomial identities related to matroids using the Cauchy-Binet formula; Radu Curticapean for a discussion  related to subgraph  polynomials; Duri Janett for references related to lifting theorems; and Susanna de Rezende for clarifications regarding~\cite{deRezendeVinyals25}.
BC acknowledges support of
EPSRC project EP/Z534158/1 on ``Integrated Approach to Computational
Complexity: Structure, Self-Reference and Lower Bounds''.
\fi

\paragraph{Organization of the paper.}
The rest of the paper is organized as follows. The proofs of Theorem \ref{thm:main1} and Theorem \ref{thm:main2} are presented in Section \ref{sec:thm1}. Section \ref{sec:mon-ckt-lb} contains the necessary framework for our lower bounds of Boolean monotone circuits. The proof of Theorem \ref{thm:main3} and Theorem \ref{thm:main4} are given in Section \ref{sec:thm2} and in Section \ref{sec:thm3} respectively.   

\section{Arithmetic separations}

\label{sec:thm1}

The goal of this section is to prove strong separations between monotone and non-monotone algebraic circuits. 

\subsection{Polynomials from graphs and their $\Sigma\Pi\Sigma$ upper bounds}
\label{sectConstructionGraphPolynomials}

Let $G$ be a graph with vertex-set $[n]$.
Denote by $E(G)$ the edge-set of $G$,
and let $e(G) \defeq |E(G)|$.
Let $X$ be the set of $2n$ variables $X \defeq \{x_{1,0},x_{1,1},\ldots, x_{n,0},x_{n,1}\}$.
Let $P_G$ be the nonnegative polynomial defined by
\[P_G \defeq \sum_{a\in\bcube{n}}\left(\sum_{\{u,v\}\in E(G)}a_ua_v - 1\right)^2\prod_{i=1}^n x_{i,a_i}.\]
For every divisor $k$ of $n$, let
$Q_{k, G}$ be the degree-$k$ polynomial in the $k 2^{n/k}$ variables
$\setbar{x_{i, a}}{i\in[k], a\in\bcube{n/k}}$,
defined by
\[Q_{k,G} \defeq \sum_{a\in\bcube{n}}
\left(\sum_{\{u,v\}\in E(G)}a_ua_v - 1\right)^2\prod_{i=1}^{k} x_{i,(a_{(i-1)n/k + 1},\dotsc,a_{in/k})}.\]
Note that $P_G = Q_{n,G}$.
In the other direction,
by applying the substitution
\begin{equation}\label{eqSubstitutionPolyQtoPolyP}
    x_{i, a}\gets \prod_{j\in[n/k]}x_{(i-1)n/k + j,a_j}
\end{equation}
to the polynomial
    $Q_{k,G}$ we obtain the polynomial $P_G$.
We start by proving some upper bounds for the polynomials $Q_{k,G}$.

\begin{lemma}
\label{lemmaSPSUpperBoundForDegreeKPoly}
    For all $G,k$ as above,
    the polynomial $Q \defeq Q_{k, G}$ has a $\Sigma\Pi\Sigma$ formula of size $O(e(G)^2 k2^{n/k})$.
\end{lemma}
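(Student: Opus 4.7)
The plan is to expand the squared factor in the definition of $Q_{k,G}$, and to show that each resulting monomial in the $a$-variables, once summed against $\prod_i x_{i,a^{(i)}}$, factorizes across the $k$ blocks into a product of $k$ linear forms in the $x$-variables, each of size at most $2^{n/k}$. Counting terms then gives the stated bound.

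First I would write
\[
\left(\sum_{\{u,v\}\in E(G)}a_ua_v - 1\right)^{\!2}
= \sum_{e,e'\in E(G)} a_u a_v a_{u'} a_{v'} \;-\; 2\sum_{e\in E(G)} a_u a_v \;+\; 1,
\]
so that $Q_{k,G}$ becomes a linear combination of $e(G)^2 + e(G) + 1$ expressions of the form
\[
T_S \;\defeq\; \sum_{a\in\bcube{n}} \Bigl(\prod_{j\in S} a_j\Bigr)\,\prod_{i=1}^{k} x_{i, a^{(i)}},
\]
where $S\subseteq [n]$ has size at most $4$ and $a^{(i)} \defeq (a_{(i-1)n/k+1},\dotsc,a_{in/k})$ denotes the $i$th block of $a$.

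The key observation is the block-factorization of $T_S$. For each $i\in[k]$, let $S_i$ denote the (positions within block $i$ of the) elements of $S$ lying in block $i$. Since the block-assignments $a^{(1)},\dotsc,a^{(k)}$ vary independently as $a$ ranges over $\bcube{n}$, and since $\prod_{j\in S} a_j = \prod_{i=1}^k \prod_{j\in S_i}(a^{(i)})_j$, one gets
\[
T_S \;=\; \prod_{i=1}^{k}\; L_{i, S_i}, \qquad \text{where } L_{i, S_i} \;\defeq\; \sum_{b\in\bcube{n/k}}\Bigl(\prod_{j\in S_i} b_j\Bigr)\, x_{i,b}.
\]
Each $L_{i, S_i}$ is a linear form in the variables $\{x_{i,b}\}_b$ with at most $2^{n/k}$ nonzero terms, so $T_S$ is a $\Pi\Sigma$ formula of size $O(k\cdot 2^{n/k})$.

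Finally I would sum over the $O(e(G)^2)$ terms produced by the expansion with their signs $(+1, -2, +1)$, obtaining a $\Sigma\Pi\Sigma$ formula for $Q_{k,G}$ of total size $O(e(G)^2 \cdot k \cdot 2^{n/k})$ as claimed. There is no real obstacle here: the only point to be slightly careful about is indexing, namely bookkeeping the correspondence between a global index $j\in[n]$, its block $\lceil j/(n/k)\rceil$, and its in-block position, which is what makes the block-factorization step clean.
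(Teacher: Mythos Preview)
Your proposal is correct and follows essentially the same approach as the paper: expand the square, then show each resulting term factors across the $k$ blocks into a product of $k$ linear forms of size at most $2^{n/k}$. The paper carries out the same argument but with an explicit case split (writing $Q = Q_2 - 2Q_1 + Q_0$ and, for $Q_1$ and $Q_2$, distinguishing whether the relevant vertices fall in the same block or different blocks), whereas your $T_S$ formulation handles all cases uniformly; the content is identical.
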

\begin{proof}
For every $p\in\Naturals$, let
    \[Q_p
    \defeq \sum_{a\in\bcube{n}} \left(\sum_{\{u,v\}\in E(G)}a_ua_v\right)^p
    \prod_{i=1}^{k} x_{i,(a_{(i-1)n/k + 1},\dotsc,a_{in/k})}.
    \]
We can write $Q$ as   
\begin{align*}
         Q 
        &= \sum_{a\in\bcube{n}}
        \left(\left(\sum_{\{u,v\}\in E(G)}a_ua_v\right)^2 - 2\left(\sum_{\{u,v\} \in E(G)}a_ua_v\right) + 1\right)
        \prod_{i=1}^{k} x_{i,(a_{(i-1)n/k + 1},\dotsc,a_{in/k})} \\
        &= Q_2 - 2Q_1 + Q_0 .
    \end{align*}
First,
        \begin{align*}
        Q_0
        &= \prod_{i=1}^k\left(\sum_{a\in\bcube{n/k}}x_{i, a}\right) .
        \end{align*}
Second,  
        \begin{align*}
        Q_1
        &= \sum_{\{u,v\} \in E(G)} \sum_{a\in\bcube{n}} a_ua_v
        \prod_{i=1}^{k} x_{i,(a_{(i-1)n/k + 1},\dotsc,a_{in/k})} \\
        &= \sum_{\{u,v\} \in E(G)} Q_{1,uv} 
         \end{align*}
         where
          $$Q_{1,uv} \defeq \sum_{a\in\bcube{n}} a_u a_v
        \prod_{i=1}^{k} x_{i,(a_{(i-1)n/k + 1},\dotsc,a_{in/k})}.$$
    For every $u\in [n]$, let $(i_u, j_u)\in[k]\times [n/k]$ be the unique pair such that
    \[u = (i_u - 1)n/k + j_u.\]
   If $i_u\neq i_v$, then
    \begin{align*}
        Q_{1,uv}
        &=
        \left(\sum_{a\in\bcube{n/k}} a_{j_u} x_{i_u,a}\right)\cdot
        \left(\sum_{a\in\bcube{n/k} } a_{j_v} x_{i_v,a}\right)\\
        & \qquad \cdot
        \prod_{i\in[k]\drop\set{i_u, i_v}} \left(\sum_{a\in\bcube{n/k}} x_{i,a}\right).
    \end{align*}
If $i_u = i_v$, then
    \begin{align*}
        Q_{1,uv}
        &=
        \left(\sum_{a\in\bcube{n/k}} a_{j_u}  a_{j_v} x_{i_u,a}\right)\cdot
        \prod_{i\in[k]\drop\set{i_u}} \left(\sum_{a\in\bcube{n/k}} x_{i,a}\right).
    \end{align*}
Third,        \begin{align*}
        Q_2
        &= \sum_{\{u,v\}, \{r,s\} \in E(G)}  Q_{2,uv,rs}
    \end{align*}
where
$$Q_{2,uv,rs}
    \defeq \sum_{a\in\bcube{n}} a_u a_v a_r a_s\prod_{i=1}^{k} x_{i,(a_{(i-1)n/k + 1},\dotsc,a_{in/k})}.$$
Similarly to the cases for $Q_{1,uv}$ above,
we can compute $Q_{2,uv,rs}$ using a $\Pi \Sigma$ monotone formula of size $O(k 2^k)$.
\end{proof}

\begin{corollary}
    \label{cor:SPS-ubd}
For all $G$ as above,
the polynomial $P_G$ has a $\Sigma\Pi\Sigma$ formula of size $O(e(G)^2n)$.
\end{corollary}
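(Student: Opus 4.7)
The plan is to derive the corollary as an immediate instantiation of Lemma~\ref{lemmaSPSUpperBoundForDegreeKPoly} at the endpoint $k = n$. The text just above the lemma already notes that $P_G = Q_{n,G}$: indeed, with $k=n$ each bucket has size $n/k = 1$, so the variable $x_{i,(a_{(i-1)n/k+1},\dotsc,a_{in/k})}$ collapses to $x_{i,a_i}$, and the two defining sums match term by term.

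Granted this identification, I would simply plug $k=n$ into the bound of Lemma~\ref{lemmaSPSUpperBoundForDegreeKPoly}. The stated size is $O(e(G)^2 \, k \, 2^{n/k})$, which at $k=n$ becomes $O(e(G)^2 \, n \, 2^1) = O(e(G)^2 n)$, exactly the claim. Since the lemma produces a $\Sigma\Pi\Sigma$ formula, the resulting formula for $P_G$ is also $\Sigma\Pi\Sigma$.

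There is essentially no obstacle here, since the corollary is obtained by direct specialization. The only point worth double-checking is the notational identification $Q_{n,G} = P_G$, to make sure the singleton-tuple indexing on the $Q$-side is really the same polynomial as $P_G$ up to renaming of variables; once that is in hand, the size count is mechanical.
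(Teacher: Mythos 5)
Your proposal is correct and matches the paper's own proof, which is exactly the one-line specialization of Lemma~\ref{lemmaSPSUpperBoundForDegreeKPoly} at $k=n$ using the identity $P_G = Q_{n,G}$ stated just above the lemma. The size calculation $O(e(G)^2\,n\,2^{n/n}) = O(e(G)^2 n)$ is the intended derivation.
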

\begin{proof}
    This follows from Lemma~\ref{lemmaSPSUpperBoundForDegreeKPoly} using $k$ equals $n$.
\end{proof}

\subsection{Monotone arithmetic complexity and communication complexity}

Our monotone arithmetic circuit lower bound in based on communication complexity. This connection was utilized in several works~\cite{RazYehudayoff11, Jukna15, Srinivasan19}.
We start by defining the \emph{multipartition rectangle number} of a Boolean function and prove (for completeness) how it is related to the monotone arithmetic complexity of a polynomial.

    Let $X$ be a finite set. We say that $R\subseteq \Pcal(X)$ is a \emph{rectangle} 
    if there is a partition $\set{Y, Z}$ of $X$ such that
    \[R = \setbar{S\cup T}{S\in \Ycal, T\in \Zcal}\]
    for some $\Ycal\subseteq\Pcal(Y)$ and $\Zcal\subseteq\Pcal(Z)$.
    We say that a rectangle is \emph{balanced} if $|Y|, |Z|\in [1/3, 2/3)\cdot |X|$.
    
    For every $a\in\bcube{n}$, let 
    \[\supp(a)\defeq \setbar{i\in[n]}{a_i=1}\]
and for $S\subseteq\bcube{n}$, let
    \[\supp(S)\defeq \setbar{\supp(a)}{a\in S}\subseteq \Pcal([n]).\]
    
The \emph{multipartition rectangle number} $\mpcC(f)$ of $f : \{0,1\}^n \to \{0,1\}$  is the minimum number $r$ such that there is a family $\set{R_i}_{i\in [r]}$ of balanced 
rectangles (with respect to $[n]$) such that
    \[\supp(\inv{f}(1)) = \bigcup_{i\in[r]} R_i.\]

The connection between arithmetic complexity and rectangle numbers is established as follows.
    Let $X\defeq \set{x_1,\dotsc, x_m}$ and let $\Xcal\defeq \set{X_1,\dotsc,X_n}$ be a partition of $X$ to $n$ parts. 
    We say that a polynomial $P$ is \emph{$\Xcal$-set-multilinear} if every monomial of $P$ has exactly one element of $X_i$ for every $i\in[n]$.
    We say that a pair $(g, h)$ of polynomials is a \emph{monotone pair} (with respect to $\Xcal$)
    if there is a partition $\set{Y, Z}$ of $X$ such that $\Xcal$ is a refinement of $\{Y,Z\}$
    and $g\in \Reals[Y]$ and $h\in\Reals[Z]$
    and $g$ and $h$ are monotone polynomials.
    We say that a nonnegative pair $(g, h)$ is \emph{balanced} if $|\Xcal_Y|, |\Xcal_Z|\in [1/3, 2/3)\cdot|\Xcal|$
    where for every $W\subseteq X$,
    \[\Xcal_W\defeq\setbar{B\in\Xcal}{B\subseteq W}.\]
    In words, the number of parts of $\Xcal$ in each of $Y,Z$ is balanced. 

The following structural lemma 
was proved in~\cite{RazYehudayoff11} for general (not necessarily set-multilinear) monotone circuits and
in~\cite{Yehudayoff19} for ordered polynomials. The proof for the set-multilinear case below is basically the same.

\begin{lemma}\label{lemmaNonnegPairDecompForMonotoneCircuits}
    If $P$ is a monotone $\Xcal$-set-multilinear polynomial
    computed by a monotone circuit of size $s$
    then there is a family of balanced monotone pairs $\set{(g_i,h_i)}_{i\in[s]}$ such that 
    \[P = \sum_{i\in[s]} g_ih_i,\]
    and each product $g_ih_i$ only contains monomials from $P$.
\end{lemma}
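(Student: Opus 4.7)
The plan is to adapt the structural decomposition of Raz--Yehudayoff to the set-multilinear commutative setting. The first step is a standard cleanup: because $\Phi$ is monotone and $P$ is set-multilinear on $\mathcal{X}$, no monomial cancellations can occur, so any non-set-multilinear monomial produced at an internal gate cannot survive to the output and can be discarded. After this cleanup (still of size $\le s$), every gate $v$ computes a set-multilinear polynomial $P_v$ on some block support $\mathcal{X}_v \subseteq \mathcal{X}$; at every sum gate $v = u+w$ one has $\mathcal{X}_v = \mathcal{X}_u = \mathcal{X}_w$, and at every product gate $v = u\cdot w$ one has $\mathcal{X}_v = \mathcal{X}_u \sqcup \mathcal{X}_w$.

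Next, I assign to each parse tree $T$ of the cleaned circuit a ``balanced separator'' product gate $v_T$. Writing $N := |\mathcal{X}|$, I start at the root and descend: at every sum gate follow the unique child chosen in $T$; at every product gate $u \cdot w$ descend into the child whose block support is $\ge 2N/3$ if one exists, and otherwise stop and declare this product gate to be $v_T$. For $N \ge 2$ the descent must terminate at such a product gate, since along the ``heavy'' path the block support remains $\ge 2N/3$ whereas any leaf has block support $1$. A crucial consequence is that every product-gate ancestor traversed during the descent has its ``branch-off'' child (the one not descended into) of block support at most $N/3$.

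I then group the parse trees by their separator. For each product gate $v = u\cdot w$ of $\Phi$, the aggregate contribution of parse trees with $v_T = v$ factors as $P_u \cdot P_w \cdot Q_v$, where $Q_v$ is a monotone polynomial on $\mathcal{X}\setminus\mathcal{X}_v$ obtained by summing over all top-path configurations compatible with the descent rule and combining the branch-off subtree contributions. Since the branch-off block supports are pairwise disjoint subsets of $\mathcal{X}\setminus\mathcal{X}_v$, each of size $\le N/3$, and $|\mathcal{X}_u|, |\mathcal{X}_w| < 2N/3$, a greedy argument produces a partition $\{Y, Z\}$ of $X$ refining $\mathcal{X}$, with $\mathcal{X}_u \subseteq \mathcal{X}_Y$, $\mathcal{X}_w \subseteq \mathcal{X}_Z$, and $|\mathcal{X}_Y|, |\mathcal{X}_Z| \in [N/3, 2N/3)$. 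Along this partition, $Q_v$ factors as $Q_v^Y \cdot Q_v^Z$ (each branch-off subtree contributes a factor on its own side), and we take $g_v := P_u \cdot Q_v^Y$ and $h_v := P_w \cdot Q_v^Z$, which is a balanced monotone pair whose product consists only of monomials of $P$. Ranging $v$ over the $\le s$ product gates of $\Phi$, padding with zero pairs if necessary, yields the claimed family indexed by $[s]$.

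The main obstacle I anticipate is the clean factorization of $Q_v$ across different descent choices and the balancing argument. Different parse trees with the same separator may take different paths through sum-gate ancestors, so the branch-off gate sets vary, but I expect that summing over these configurations still yields a polynomial factoring along any chosen bipartition of $\mathcal{X}\setminus\mathcal{X}_v$, because each branch-off subtree is supported on its own block set (disjoint from the others) and thus contributes an independent multiplicative factor. Verifying this factorization carefully and checking feasibility of the greedy balancing step in all parameter regimes (in particular when one of $|\mathcal{X}_u|, |\mathcal{X}_w|$ is close to $1$ and the branch-offs must absorb the imbalance) is the delicate technical step of the proof.
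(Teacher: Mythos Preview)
Your cleanup step and the heavy-child descent are fine, but the factorization of $Q_v$ that you flag as ``delicate'' is in fact false for circuits (it would hold for formulas, where the root-to-$v$ path is unique). When $v$ can be reached along several paths---because of different sum-gate choices above it, or because $v$ has multiple parents---the branch-off block sets along distinct paths may partition $\mathcal{X}\setminus\mathcal{X}_v$ in \emph{incompatible} ways. Concretely, take $N=9$, $\mathcal{X}_v=\{X_1,\dots,X_6\}$, and two root-to-$v$ paths whose branch-offs are $\{X_7,X_8\},\{X_9\}$ and $\{X_7,X_9\},\{X_8\}$ respectively. Then $Q_v = P_{b_1}P_{c_1}+P_{b_2}P_{c_2}$ with $P_{b_1}$ on $\{X_7,X_8\}$, $P_{c_1}$ on $\{X_9\}$, $P_{b_2}$ on $\{X_7,X_9\}$, $P_{c_2}$ on $\{X_8\}$; for generic monotone choices of these four polynomials, the partial-monomial matrix of $Q_v$ has rank $2$ across every nontrivial split of $\{X_7,X_8,X_9\}$, so $Q_v$ factors along no balanced bipartition. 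Refining the grouping to $(v,\text{path})$ rescues the factorization but destroys the count, since the number of paths can be far larger than $s$.

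The paper sidesteps this completely. Instead of stopping at a product gate with both children below $2N/3$ and then trying to redistribute via $Q_v$, one locates (by the same heavy-child descent) a gate $v$ with $|\mathcal{X}_v|\in[N/3,2N/3)$ directly: at a product gate the larger child carries at least half the support, so the first time the support drops below $2N/3$ it is still at least $N/3$. Because the circuit is monotone and syntactically set-multilinear, the output splits as $P = P_v\cdot A + B$, where $A$ is monotone and set-multilinear on $\mathcal{X}\setminus\mathcal{X}_v$ (the sum over parse trees through $v$ with the sub-tree at $v$ factored out) and $B$ is computed by the circuit with $v$ set to $0$, which has strictly fewer edges. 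Now $(P_v,A)$ is already a balanced monotone pair---no rebalancing needed---and one inducts on $B$. This yields at most $s$ pairs, each consisting only of monomials of $P$ by monotonicity.
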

\begin{proof}[Proof sketch]
In a nutshell, the proof uses that monotone circuits for set-multilinear polynomials are syntactically set-multilinear, and the lemma follows by induction on the number of edges
by locating a single product gate that computes a balanced pair. 
\end{proof}

Now, 
let $X$ be the set of $2n$ variables $X \defeq \{x_{1,0},x_{1,1},\ldots, x_{n,0},x_{n,1}\}$
and $\Xcal \defeq \{X_1,\ldots,X_n\}$ be its partition with $X_i \defeq \{x_{i,0},x_{i,1}\}$.
For any $\Xcal$-set-multilinear monomial $m\defeq \prod_{i=1}^n x_{i,a_i}$, let
    \[\Assignment(m)\defeq \setbar{i\in[n]}{a_i = 1}.\]
For a $\Xcal$-set-multilinear polynomial $P$, let 
$$\Assignment(P)\defeq \setbar{\Assignment(m)}{m\in\supp(P)}\subseteq\Pcal([n]).$$

\begin{lemma}\label{lemmaRelationshipBetweenMultipartCommComplAndMonoArithCompl}
    Let $f : \{0,1\}^n \to \{0,1\}$ and
    let $P$ be a monotone $\Xcal$-set-multilinear polynomial.
    If $\supp(\inv{f}(1)) = \Assignment(P)$, then
    \[\monoC(P) \geq \mpcC(f).\]
\end{lemma}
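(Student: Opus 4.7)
The plan is to use Lemma~\ref{lemmaNonnegPairDecompForMonotoneCircuits} to translate a monotone arithmetic circuit for $P$ into a multipartition rectangle cover of $\supp(f^{-1}(1))$. Let $s \defeq \monoC(P)$ and apply Lemma~\ref{lemmaNonnegPairDecompForMonotoneCircuits} to get a family of balanced monotone pairs $\set{(g_i, h_i)}_{i \in [s]}$ (each with respect to some partition $\set{Y_i, Z_i}$ of $X$ refined by $\Xcal$) such that $P = \sum_{i \in [s]} g_i h_i$ and each product $g_i h_i$ contains only monomials appearing in $P$.

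The first step is to observe that each factor $g_i$ is itself $\Xcal_{Y_i}$\nbd-set-multilinear (and analogously for $h_i$). Indeed, if some monomial of $g_i$ failed to contain exactly one variable from some part $X_j \in \Xcal_{Y_i}$, then multiplying by any monomial of $h_i$ would produce a monomial of $g_i h_i$ that is not $\Xcal$\nbd-set-multilinear, contradicting that $g_i h_i$ contains only monomials of the $\Xcal$\nbd-set-multilinear polynomial $P$. Consequently, the set $\Assignment(g_i) \subseteq \Pcal(\set{j : X_j \in \Xcal_{Y_i}})$ is well-defined by the same construction used for $P$, and similarly for $\Assignment(h_i)$.

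The second step is to verify that for each $i$, the set $R_i \defeq \Assignment(g_i h_i)$ is a balanced rectangle with respect to $[n]$. Since $g_i$ involves only variables in $Y_i$ and $h_i$ only variables in $Z_i$, every monomial of $g_i h_i$ factors uniquely as (monomial of $g_i$)$\cdot$(monomial of $h_i$), and hence
\[R_i = \setbar{A \cup B}{A \in \Assignment(g_i),\ B \in \Assignment(h_i)},\]
which is a rectangle relative to the partition $\set{\Xcal_{Y_i}, \Xcal_{Z_i}}$ of $[n]$ (identifying the index set of $\Xcal$ with $[n]$). Balance follows from the balance condition of the pair: $\card{\Xcal_{Y_i}}, \card{\Xcal_{Z_i}} \in [1/3, 2/3)\cdot n$.

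The final step combines these facts. Since all of $P$, $g_i$, and $h_i$ are monotone (their coefficients are nonnegative), no cancellation can occur in the sum $P = \sum_{i} g_i h_i$, so $\supp(P) = \bigcup_i \supp(g_i h_i)$ and thus $\Assignment(P) = \bigcup_{i \in [s]} R_i$. By hypothesis $\supp(f^{-1}(1)) = \Assignment(P)$, so we have covered $\supp(f^{-1}(1))$ by $s$ balanced rectangles, giving $\mpcC(f) \leq s = \monoC(P)$. The main subtlety is the set-multilinearity argument for the individual factors $g_i$ in the first step; the rest is essentially a translation between polynomial and set-theoretic language.
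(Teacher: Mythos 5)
Your proposal is correct and follows the same route as the paper: invoke Lemma~\ref{lemmaNonnegPairDecompForMonotoneCircuits} to decompose $P$ into balanced monotone pairs, translate each product $g_ih_i$ into a balanced rectangle via $\Assignment$, and conclude. The extra details you supply (set-multilinearity of the individual factors $g_i,h_i$, and the no-cancellation observation justifying $\supp(P)=\bigcup_i\supp(g_ih_i)$) are left implicit in the paper but are exactly the right points to flag.
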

\begin{proof}
    Let $C$ be a monotone circuit of size $s$ computing $P$.
    By Lemma~\ref{lemmaNonnegPairDecompForMonotoneCircuits}, we know that there is a family $\set{(g_i,h_i)}_{i\in[s]}$ of balanced monotone pairs such that
    \[P = \sum_{i\in[s]}g_ih_i.\]
    This decomposition of $P$ implies the following decomposition of its support
    \[\supp(P) = \bigcup_{i\in[s]}\setbar{m_1m_2}{m_1\in\supp(g_i), m_2\in\supp(h_i)}.\]
    Thus, we get the following decomposition 
    \[\Assignment(P) = \bigcup_{i\in[s]}\setbar{\Assignment(m_1)\cup\Assignment(m_2)}{m_1\in\supp(g_i), m_2\in\supp(h_i)},\]
    which, by hypothesis, implies 
    \[\supp(\inv{f}(1)) = \bigcup_{i\in[s]}\setbar{\Assignment(m_1)\cup\Assignment(m_2)}{m_1\in\supp(g_i), m_2\in\supp(h_i)}.\]
    The final observation is that for any balanced monotone pair $(g, h)$,
    the set 
    \[R\defeq \setbar{\Assignment(m_1)\cup\Assignment(m_2)}{m_1\in\supp(g), m_2\in\supp(h)}\]
    is a balanced rectangle of $[n]$. Together with the decomposition above, this concludes the proof of the lemma.

\end{proof}


\subsection{A multipartition communication complexity lower bound}

In this section, we prove the multipartition communication complexity lower bounds, which in turn implies the monotone circuit lower bound. 
The problem we consider is defined over a graph $G$ with vertex-set $[n]$.
For a subset $U$ of the vertices of $G$,
we denote by $G[U]$ the induced graph of $G$ on $U$.
Denote by $f_G: \{0,1\}^n \to \{0,1\}$ the function defined by
    \[f_G(a)=1 \Longleftrightarrow e(G[\supp(a)])\neq 1.\]
Note that the support of the polynomial $P_G$ is  the same as $f_G^{-1}(1)$.

Our lower bounds hold as long as $G$ is an \emph{expander graph}, since this is required by Lemma~9 from~\cite{Srinivasan19}, which we restated as Lemma~\ref{lemmaMatchingSamplingProperties}.
For concreteness, the definition of expander graph used by~\cite{Srinivasan19} is that $G$ is a $d$-regular graph with the second largest eigenvalue of its adjacency matrix at most $d^{0.75}$.
Our main lemma is the following.
\begin{lemma}\label{lemmaLowerBoundMultipartitionCC}
    If $G$ is an expander graph with vertex-set $[n]$, then
    \[\mpcC(f_G) \geq 2^{\Omega(n)}.\]
\end{lemma}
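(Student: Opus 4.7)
My plan is to lower-bound $\mpcC(f_G)$ by a distributional covering argument. I will define a probability distribution $\mu$ supported on $\supp(\inv{f_G}(1))$, the family of subsets $S\subseteq [n]$ with $e(G[S])\neq 1$, and prove
\[\mu(R) \leq 2^{-\Omega(n)}\]
for every balanced rectangle $R$ contained in $\supp(\inv{f_G}(1))$. The lemma then follows by a union bound: if $\set{R_i}_{i\in[r]}$ is a covering family of balanced rectangles, summing gives $r\cdot 2^{-\Omega(n)} \geq \mu\paren*{\supp(\inv{f_G}(1))} = 1$, so $r\geq 2^{\Omega(n)}$.

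The construction of $\mu$ exploits the expansion of $G$ through a large matching $M\subseteq E(G)$. Using the expander-matching sampling lemma (Lemma~9 of~\cite{Srinivasan19}), I will fix $M$ of size $\Theta(n)$ enjoying a ``universal-cut'' property: for \emph{every} balanced bipartition $(Y,Z)$ of $[n]$, the number of edges of $M$ crossing the cut is $\Omega(n)$. The distribution $\mu$ is then designed to mimic the hard distribution for Unique-Disjointness on the edges of $M$: each matching edge is independently flipped between states such as ``both endpoints in'', ``one endpoint in'', or ``neither'' with carefully chosen probabilities, so that the induced marginal on any $\Omega(n)$-sized subfamily of cross-matching edges looks like a sequence of independent UDISJ coordinates. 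Some additional care---for instance, taking $M$ to be an induced matching, or performing mild rejection---is needed to ensure $G[S]$ contains no edges beyond the selected matching edges, so that the condition $e(G[S])\neq 1$ reduces exactly to the UDISJ acceptance condition $\abs*{x\cap y}\neq 1$.

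To upper bound $\mu(R)$ for a fixed balanced rectangle $R$ with partition $(Y,Z)$, I will project the support of $\mu$ onto the $k=\Omega(n)$ cross-matching edges of $M$ relative to this cut. Since $R\subseteq\supp(\inv{f_G}(1))$, the set $R$ cannot contain any $S$ with $e(G[S])=1$; by the design of $\mu$, this translates into a statement about UDISJ, namely that the projection of $R$ onto the cross-matching coordinates is a combinatorial rectangle covering only $1$-inputs of UDISJ on $k$ coordinates. Applying the non-deterministic communication complexity lower bound for UDISJ in the form of~\cite{KaibelWeltge15} to this projected rectangle yields $\mu(R)\leq 2^{-\Omega(k)} = 2^{-\Omega(n)}$, as required.

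The main obstacle is handling the multipartition aspect uniformly: each rectangle may be aligned with a different balanced bipartition, so a naive single-partition reduction to UDISJ does not suffice, and a union bound over the $2^{\Theta(n)}$ balanced bipartitions would wash out any per-partition bound. The fixed expander-based matching $M$ is precisely what circumvents this: because $M$ is chosen \emph{before} the adversary, and its universal-cut property guarantees $\Omega(n)$ cross-edges for every balanced bipartition, the same distribution $\mu$ and the same UDISJ lower bound apply uniformly to every rectangle, regardless of which partition it is aligned with. A subtler technical point is ensuring that $G[S]$ under $\mu$ does not contain spurious non-matching edges that would break the UDISJ correspondence; handling this is where the sparse structure of the matching (inducedness or near-inducedness, obtainable from expansion) is used.
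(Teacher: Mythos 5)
Your overall strategy — a distributional covering argument that lifts the Razborov/Kaibel--Weltge corruption bound for Unique-Disjointness through an expander matching — is the same as the paper's, and you cite the right ingredients (Lemma~9 of~\cite{Srinivasan19} and~\cite{KaibelWeltge15}). However, there is a genuine gap at a pivotal step.

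You propose to \emph{fix} a single matching $M$ of size $\Theta(n)$ with a ``universal-cut'' property: every balanced bipartition of $[n]$ is crossed by $\Omega(n)$ edges of $M$. No such matching exists, for any graph, no matter how good an expander. Given any matching $M$, one can place both endpoints of each matching edge on the same side of the bipartition, filling up with unmatched vertices to make the sides balanced; this yields a balanced bipartition with \emph{zero} cross-matching edges, so the universal-cut property fails. Lemma~\ref{lemmaMatchingSamplingProperties} does not give you the property you want: it says that for each \emph{fixed} balanced partition, a \emph{random} matching $M$ crosses it with probability $1-\exp(-\Omega(n))$, not that a single $M$ crosses all of them. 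A naive union bound over the roughly $2^n$ balanced partitions would require $\exp(-\gamma m)\cdot 2^n < 1$, i.e.\ $\gamma\alpha > \ln 2$, which is not guaranteed and, by the above construction, cannot hold.

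The fix is the one the paper uses: keep $M$ random and make it \emph{part of the hard distribution}. That is, $\mu$ is the joint distribution over $(M,a)$, and for a fixed balanced rectangle $R$ with partition $\{A,B\}$ one bounds
\[
\Pr_{M,a}[a\in R] \leq \Pr_M\bigl[|M\cap E_G(A,B)|<s'\bigr] + \sum_{E:|E|\geq s'}\Pr_{M,a}\bigl[a\in R \mid M\cap E_G(A,B)=E\bigr]\cdot\Pr_M\bigl[M\cap E_G(A,B)=E\bigr].
\]
The first term is small by Lemma~\ref{lemmaMatchingSamplingProperties} applied to the specific partition $\{A,B\}$, and the second term is bounded by the UDISJ corruption lemma applied to the induced rectangle $R_E$ on the cross-edges. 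Since each rectangle $R$ is handled separately in this conditional fashion, no union bound over partitions is ever needed. Your high-level reduction to UDISJ on the cross-matching coordinates is otherwise sound; the needed repair is precisely this replacement of ``fixed matching with universal-cut'' by ``random matching, conditioned per rectangle''. (Two minor additional points: the states on each matching edge should be drawn from $\{(0,0),(1,0),(0,1)\}$ rather than including ``both in''; and inducedness of $M$, which you correctly flag, is what keeps $e(G[\supp(a)])=0$ so that $a$ lies in $f_G^{-1}(1)$ without rejection.)
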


The main ingredient in the lower bound is the following corruption-like lemma.
It follows from a communication complexity perspective of a result by Kaibel and Weltge~\cite{KaibelWeltge15}; see Lemma 5.10 in Roughgarden's lecture notes~\cite{Roughgarden15}.
\begin{lemma}\label{lemmaRazborovUniqueDisjointnessCorruptionLemma}  
    For $l\in\Naturals$, we define the $2l$-variate \emph{different-from-1 disjointness function} $\neqOneDisj{l}$ as
    \[\neqOneDisj{l}(b_1\ldots, b_l,c_1,\ldots, c_l)=1 \Leftrightarrow |\supp((b_1,\dotsc,b_l))\cap \supp((c_1,\dotsc,c_l))|\neq 1.\]
    Let $l\in [n]$ and $g\defeq \neqOneDisj{l}$.
    For every $i\in[l]$, choose independently and uniformly at random an
    element $(b_i, b_{l+i})$ of the set $\set{(0,0), (1,0), (0,1)}$,
    and let $b \defeq (b_1,\dotsc, b_{2l})$ be the corresponding random assignment over $2l$-variables.
    If $R$ is a rectangle of $[2l]$ with partition
    $\set{\set{1,\dotsc, l}, \set{l+1,\dotsc,2l}}$ such that
    $R\subseteq \supp(\inv{g}(1))$, 
    then
    \[
        \Pr_b[b\in R]< 2^{-l/2}.
    \]
\end{lemma}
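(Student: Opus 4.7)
My plan is to translate the probability statement into a combinatorial count of disjoint pairs in a rectangle avoiding the ``intersection~$1$'' pattern, and then to invoke a corruption-style estimate of Kaibel and Weltge~\cite{KaibelWeltge15} (which, as noted after the lemma statement, is a clean reformulation of Razborov's~\cite{Razborov92} communication lower bound for Unique-Disjointness).

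First, I would shift the indices of the second half by $l$, setting $\mathcal{Z}'\defeq\setbar{T-l}{T\in\mathcal{Z}}\subseteq\calp([l])$. Under the sampling procedure each coordinate $i\in[l]$ independently lands in $S^\star\defeq\supp((b_1,\dots,b_l))$, in $T^\star\defeq\supp((b_{l+1},\dots,b_{2l}))-l$, or in neither, each with probability $1/3$. In particular $S^\star\cap T^\star=\emptyset$ always, and every disjoint pair $(S,T')\subseteq[l]\times[l]$ is realised with probability exactly $3^{-l}$. The event $b\in R$ is equivalent to $(S^\star,T^\star)\in\mathcal{Y}\times\mathcal{Z}'$, whence
\[
    \Pr[b\in R] \;=\; \frac{N(R)}{3^l},\qquad N(R)\defeq\card{\setbar{(S,T')\in\mathcal{Y}\times\mathcal{Z}'}{S\cap T'=\emptyset}}.
\]
Moreover the hypothesis $R\subseteq\supp(g^{-1}(1))$ becomes the condition that every $(S,T')\in\mathcal{Y}\times\mathcal{Z}'$ satisfies $\card{S\cap T'}\neq 1$. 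The lemma then reduces to proving $N(R)<(3/\sqrt{2})^l$.

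To bound $N(R)$ I would partition the disjoint pairs by their union $U\defeq S\cup T'$ and set $\mathcal{A}(U)\defeq\setbar{S\subseteq U}{S\in\mathcal{Y},\,U\setminus S\in\mathcal{Z}'}$ so that $N(R)=\sum_U\card{\mathcal{A}(U)}$. The rectangle condition forces any two distinct elements of $\mathcal{A}(U)$ to differ on at least two coordinates of $U$ (otherwise one obtains a cross-pair with intersection of size exactly~$1$), so $\mathcal{A}(U)$ is a binary code on $U$ of minimum Hamming distance at least $2$ and $\card{\mathcal{A}(U)}\leq 2^{\card{U}-1}$. This slice-wise bound alone only yields the trivial $\Pr[b\in R]\leq 1/2$. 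The improvement to $2^{-l/2}$ comes from viewing $\mathcal{Y}\times\mathcal{Z}'$ as a $0$-rectangle of the pattern matrix $M_{S,T}\defeq\mathbbm{1}[\card{S\cap T}=1]$ and estimating its mass under the disjoint-pair distribution via the rank/nonnegative-rank argument of~\cite{KaibelWeltge15} (presented in the communication-complexity style of Lemma~5.10 of~\cite{Roughgarden15}), which simultaneously exploits the cross-constraints between $\mathcal{A}(U_1)$ and $\mathcal{A}(U_2)$ for distinct unions $U_1\neq U_2$.

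\textbf{The main obstacle} is precisely this global aggregation: the slice-wise estimate is short by a factor of $2^{l/2}$, and closing the gap requires handling all unions simultaneously via the bilinear/rank inequality of Kaibel--Weltge. The remaining subtlety is to verify that their estimate specialises cleanly to our coordinate-wise $3$-valued product distribution---rather than the uniform-on-a-slice distribution used in standard presentations---and it is at this final step that the constant $\sqrt{2}$, and hence the stated factor $2^{-l/2}$, is determined.
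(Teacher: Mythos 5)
Your proposal takes essentially the same route as the paper: translate the probability to a statement about a rectangle avoiding $\lvert S\cap T\rvert=1$, observe that this is precisely a ``unique-disjointness'' $1$-rectangle, and invoke the Kaibel--Weltge corruption bound as presented in Roughgarden's Lemma~5.10. The paper's actual proof is exactly that one-line observation and citation, so the core is shared.

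The place where you go astray is the closing ``remaining subtlety.'' You worry that Kaibel--Weltge's estimate must be re-specialized from a ``uniform-on-a-slice'' distribution to the coordinate-wise $3$-valued product distribution. But there is nothing to specialize: Roughgarden's Lemma~5.10 (following Kaibel--Weltge) is stated and proved \emph{directly} under the product distribution in which each $i\in[\ell]$ is independently put in $S$, in $T$, or in neither with probability $1/3$ each --- which is exactly your random $b$. So the paper's proof needs no distributional translation at all; it only needs the trivial containment that a $\neqOneDisj{l}$ $1$-rectangle avoids $\lvert S\cap T\rvert=1$ and is hence a unique-disjointness $1$-rectangle in the Kaibel--Weltge sense, after which the cited lemma applies verbatim with its constant $2^{-\ell/2}$. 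Your slice-wise digression (partitioning by $U=S\cup T'$ and bounding $\mathcal{A}(U)$ as a distance-$\ge 2$ code) correctly illustrates why a naive union-by-union count only yields $\Pr[b\in R]\le 1/2$, but it is not needed and does not contribute to the final bound; the gap is closed entirely by the external corruption lemma, not by repairing the slice-wise estimate.
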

\begin{proof}
    As the proof of Lemma 5.10 in the notes~\cite{Roughgarden15} works for any unique-disjointness 1-rectangle, this lemma follows from the fact that any different-from-one disjointness 1-rectangle is also a unique-disjointness 1-rectangle.
\end{proof}

Next, we need to define a ``hard distribution'' on $f_G^{-1}(1)$ such that the measure of all balanced $1$-rectangles contained in $f_G^{-1}(1)$ of is small. While the function $f_G$ bears some similarity to the different-from-1 disjointness function, this does not immediately follow from the lemma above, because we do not a priori know the partition we need to deal with. 

Consider the following algorithm, as defined in~\cite{Srinivasan19},
to sample a random matching from $G$:
for $m\in [n]$ given as input,
\begin{enumerate}
    \item Set $M\gets \emptyset$.
    \item For $i=1,2,\ldots,m$, do the following:
    \begin{enumerate}
        \item Remove all vertices from $G$ that are at distance at most $2$ from any vertex in $M$. Let $G_i$ be the resulting graph.
        \item Choose a uniformly random edge $e_i$ from $E(G_i)$, and add it to $M$.
    \end{enumerate}
    \item Output $M$.
\end{enumerate}
Now given a matching $M$, we use the following procedure to define a random input $a\in\set{0,1}^n$ to $f_G$:
\begin{enumerate}
    \item For each $u\in [n]\drop V(M)$, set $a_u = 0$.
    \item For each $\{u,v\}\in M$, choose independently and uniformly at random an
        element $(a_u, a_v)$ from the $\set{(0,0), (1,0), (0,1)}$.
\end{enumerate}

Note that $e(G[\supp(a)]) = 0$ by our choice of $a$, implying that the procedure above defines a probability distribution over $f_G^{-1}(1)$.
Hence, Lemma~\ref{lemmaLowerBoundMultipartitionCC} immediately follows from the claim that for every balanced rectangle $R \subseteq f_G^{-1}(1)$,
\[ \Pr_a[\supp(a)\in R]\leq 2^{-\Omega(n)}.\]
So it remains to prove this inequality.

The useful properties of the random input $a$ rely on the following properties of the random matching $M$,
which were stated in Lemma 9 from~\cite{Srinivasan19}.
\begin{lemma}\label{lemmaMatchingSamplingProperties}
    If $G$ is an expander with vertex-set $[n]$,
    then there is a constant $\alpha > 0$ such that,
    for $m\defeq \lceil \alpha n \rceil$ and
    for $M$ being a random matching sampled by the above algorithm using $m$ as input,
    the following hold:
    \begin{enumerate}
        \item $M$ is an induced matching of cardinality $m$.
        \item For every balanced partition $\set{A, B}$ of $[n]$,
            \[\Pr_M[|M\cap E_G(A, B)| \leq \gamma m]\leq \exp(-\gamma m),\]
            for an absolute constant $\gamma > 0$.
    \end{enumerate}
\end{lemma}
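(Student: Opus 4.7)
The plan is to analyze the greedy matching algorithm step-by-step, using expansion to show that (a) the algorithm never gets stuck before completing $m$ iterations, and (b) each chosen edge lies across the balanced cut with constant probability conditioned on previous choices. Let $d$ be the degree of $G$, so $G$ has $dn/2$ edges. Since we only remove vertices at distance at most $2$ from endpoints of matching edges already chosen, any newly added edge $e_i$ has both endpoints at distance more than $2$ from all prior edges of $M$; this immediately gives that $M$ is an induced matching. The cardinality $|M|=m$ will be immediate once we show that $G_i$ has at least one edge for every $i \leq m$, which is the content of the first combinatorial claim.

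For the non-emptiness of $G_i$, I will use the expander mixing lemma. At step $i$, we have removed all vertices within distance $2$ from the endpoints of the $i-1$ edges already in $M$. The total number of such vertices is at most $2(i-1)d^2 \leq 2md^2$. Let $T_i$ denote this removed set and $U_i \defeq [n]\drop T_i$. The expander mixing lemma for a $d$-regular graph with second eigenvalue at most $d^{0.75}$ gives
\[
\bigl| e(G[U_i]) - \tfrac{d|U_i|^2}{2n}\bigr| \leq d^{0.75}|U_i|.
\]
Choosing $\alpha$ small enough so that $|T_i|\leq 2\alpha n d^2 \leq n/2$, the main term dominates the error, and $G_i$ has $\Omega(dn)$ edges. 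This also proves the first item.

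For the concentration statement, consider the filtration generated by the choices $e_1,\dots,e_i$ and let $Z_i \defeq \mathbbm{1}[e_i \in E_G(A,B)]$. It suffices to show that for every $i \leq m$ and every history with $G_i$ available,
\[
\Pr[Z_i = 1 \mid e_1,\dots,e_{i-1}] \geq 2\gamma
\]
for a constant $2\gamma>0$. Since $\{A,B\}$ is balanced, $|A|,|B|\in[n/3,2n/3]$, and restricting to $U_i$ preserves balance up to the removal of at most $2md^2$ vertices. Applying the expander mixing lemma to the bipartite count $e_{G[U_i]}(A\cap U_i, B\cap U_i)$ gives $\Omega(dn)$ crossing edges in $G_i$, while $|E(G_i)| = O(dn)$, so the ratio is a constant $2\gamma$. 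The random variables $Z_i$ are not independent, but their conditional expectations are each at least $2\gamma$, so by an Azuma-type (or Chernoff-style) bound for sums of $[0,1]$ random variables with conditional expectation at least $2\gamma$, one has $\Pr[\sum_i Z_i \leq \gamma m] \leq \exp(-\Omega(m))$, which after adjusting $\gamma$ gives the claimed bound.

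The main obstacle is tuning $\alpha$ (and therefore $m$) carefully enough that throughout all $m$ iterations the sets $T_i$ stay small compared to $n$, so that the expander mixing lemma continues to give both (i) many edges surviving in $G_i$ and (ii) a balanced split of the surviving vertices between $A$ and $B$. Once $\alpha$ is chosen small enough so that $2md^2 \leq n/10$ (say), both applications of the mixing lemma go through uniformly in $i$, and the two items of the lemma follow. Since this is precisely Lemma~9 of~\cite{Srinivasan19}, I would write the argument in the form above and refer to that paper for the numerical constants.
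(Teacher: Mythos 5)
The paper does not prove this lemma; it is explicitly cited from Lemma~9 of~\cite{Srinivasan19}, so there is no in-paper proof to compare your attempt against. Your reconstruction correctly captures what that proof must be. For item (1), removing all vertices within distance $2$ of existing matching endpoints forces each new edge to be far from all previous ones, so $M$ is automatically an induced matching; and for $\alpha$ small enough the set of removed vertices stays a small fraction of $[n]$, so the expander mixing lemma (or even a direct edge-count) shows $G_i$ is nonempty at every step. For item (2), the mixing lemma shows a constant fraction of the surviving edges in $G_i$ cross any balanced cut $\set{A,B}$, so each $e_i$ crosses with conditional probability bounded below by a constant; a coupling of $\sum_i Z_i$ to a binomial, or an Azuma-type bound, then gives the exponential tail. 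One point worth stating explicitly is that the mixing-lemma step requires the degree $d$ to be a sufficiently large absolute constant so that $d\,|A'||B'|/n$ dominates the error $d^{0.75}\sqrt{|A'||B'|}$; this is folded into the expander parameters inherited from~\cite{Srinivasan19}, to which you correctly defer for the exact constants.
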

Let $\set{A, B}$ be any valid partition witnessing that $R$ is a balanced rectangle of $[n]$.
By Lemma~\ref{lemmaMatchingSamplingProperties}, we know that $M$ is an induced matching of $G$ and, with probability at least $1-2^{-\Omega(n)}$, we have
\[|M\cap E_G(A,B)|\geq \gamma m \eqdef s'.\]
For a fixed induced matching 
\[E\defeq \set{\{u_1,v_1\},\dotsc, \{u_s,v_s\}}\subseteq E(G)\]
with $s\geq s'$ and $u_i\in A$ and $v_i\in B$ for every $i\in[s]$,
we condition on the event that the random matching $M$ satisfies 
\[M\cap E_G(A,B) = E.\]
We can now define the following random variables:
for every $i\in [s]$, let
\[b_i\defeq a_{u_i}\text{ and }b_{s+i}\defeq a_{v_i},\]
where $a\in\set{0,1}^n$ is the random assignment associated to $M$.
Let 
\[b\defeq (b_1,\dotsc, b_{2s})\]
be the random $2s$-variable Boolean assignment obtained by the definition above.
By the definition of the assignment $a$,
we obtain that each pair $(b_i, b_{s+i})$ is chosen independently and uniformly at random from the set $\set{(0,0), (1,0), (0,1)}$.
Let $g$ be the $2s$-variate different-from-one disjointness function.
We observe that
\[f_G(a) = 1 \iff g(b) = 1\]
for the assignments $a$ and $b$ defined above. 
Also note that $b$ is a random assignment to $g$ that satisfies the assumptions of Lemma~\ref{lemmaRazborovUniqueDisjointnessCorruptionLemma},
and, furthermore, $\supp(a)\in R$ implies that $\supp(b)\in R_E$, where
\[R_E\defeq
\setbar{S\cup T}
{S\subseteq \set{1,\dotsc,s}, T\subseteq \set{s+1,\dotsc,2s},
(\setbar{u_i}{i\in S}\cup\setbar{v_{i-s}}{i\in T})\in R}
.\]
As $R$ is an rectangle with partition $\set{A,B}$ and we have
$\setbar{u_i}{i\in S}\subseteq A$ and
$\setbar{v_{i-s}}{i\in T}\subseteq B$, we get that
$R_E$ is a rectangle of $[2s]$ with partition $\set{\set{1,\dotsc, s}, \set{s+1,\dotsc,2s}}$.
Moreover, we have $R_E\subseteq \supp(\inv{g}(1))$, since
any $S\cup T\in R_E$ satisfies
\[\setbar{u_i}{i\in S}\cup\setbar{v_{i-s}}{i\in T}\in R\subseteq \supp(\inv{f}(1)),\]
which
implies that $g(1_S,1_T) = 1$, where we have identified $S$ and $T$ with their corresponding indicator vectors.

Thus, $R_E$ is a rectangle satisfying the hypothesis of Lemma~\ref{lemmaRazborovUniqueDisjointnessCorruptionLemma}, so we can apply this lemma to obtain that
\[\Pr_a[b\in R_E] \leq 2^{-\Omega(s)} = 2^{-\Omega(n)},\]
which implies that
\begin{align*}
    \Pr_{M,a}[a\in R\ |\ M\cap E_G(A,B) = E]
    &\leq \Pr_{M,a}[b\in R_E\ |\ M\cap E_G(A, B) = E] \\
    &= \Pr_a[b\in R_E]
    \leq 2^{-\Omega(n)}.
\end{align*}
Therefore, we get
\begin{align*}
    \Pr_{M,a}[a\in R] 
    & = \Pr_{M,a}[a\in R, |M\cap E_G(A, B)| < s'] \\
     &\;+ \sum_{E\subseteq E(G) : |E|\geq s'} \Pr_{M,a}[a\in R\ |\ M\cap E_G(A,B) = E]\cdot \Pr_M[M\cap E_G(A,B) = E] \\
   & \leq 2^{-\Omega(n)}.
\end{align*}

\subsection{Monotone arithmetic circuit lower bounds}

Let us now prove our monotone arithmetic circuit lower bounds.
\begin{corollary}[Theorem~\ref{thm:main1}]
    \label{coroMonotoneArithmeticLowerBound}
    If $G$ is an $n$-vertex expander graph,
    then $\monoC(P_G)\geq 2^{\Omega(n)}$.
\end{corollary}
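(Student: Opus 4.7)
The plan is to obtain this lower bound as an almost immediate consequence of the two main technical lemmas developed above: the structural reduction from monotone arithmetic complexity to multipartition rectangle complexity (Lemma~\ref{lemmaRelationshipBetweenMultipartCommComplAndMonoArithCompl}) and the multipartition rectangle lower bound for $f_G$ (Lemma~\ref{lemmaLowerBoundMultipartitionCC}). Concretely, I would argue that $\monoC(P_G)\geq \mpcC(f_G)\geq 2^{\Omega(n)}$, where the first inequality instantiates Lemma~\ref{lemmaRelationshipBetweenMultipartCommComplAndMonoArithCompl} with $P = P_G$ and $f = f_G$, and the second is Lemma~\ref{lemmaLowerBoundMultipartitionCC}, applicable since $G$ is an expander.

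To invoke Lemma~\ref{lemmaRelationshipBetweenMultipartCommComplAndMonoArithCompl}, I first need to check its structural hypotheses against the natural partition $\Xcal = \{X_1,\dotsc,X_n\}$ with $X_i = \{x_{i,0}, x_{i,1}\}$. The polynomial $P_G$ is $\Xcal$-set-multilinear because in its defining sum every term has the form $(\,\cdot\,)^2 \prod_i x_{i,a_i}$, which contains exactly one variable from each $X_i$. It is monotone because each coefficient appearing in front of a monomial is a square, and hence nonnegative.

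Next I would verify the key identification $\supp(f_G^{-1}(1)) = \Assignment(P_G)$. Expanding $P_G$, the monomial $\prod_i x_{i,a_i}$ has nonzero coefficient exactly when $\bigl(\sum_{\{u,v\}\in E(G)} a_u a_v - 1\bigr)^2 \neq 0$, i.e.\ when $\sum_{\{u,v\}\in E(G)} a_u a_v \neq 1$. But $\sum_{\{u,v\}\in E(G)} a_u a_v$ simply counts the edges of $G$ both of whose endpoints lie in $\supp(a)$, namely $e(G[\supp(a)])$. Hence the nonzero monomials of $P_G$ are indexed by those $a$ with $e(G[\supp(a)])\neq 1$, which by definition of $f_G$ are precisely the elements of $f_G^{-1}(1)$. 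Translating monomials to supports via the map $\Assignment$ gives the desired equality.

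With these hypotheses in place, chaining the two lemmas yields $\monoC(P_G)\geq \mpcC(f_G)\geq 2^{\Omega(n)}$, which is the statement of the corollary. There is no real obstacle at this step: the conceptual difficulties have already been absorbed into the preceding lemmas, namely the decomposition of set-multilinear monotone circuits into balanced monotone pairs and the corruption-style argument that lifts the Kaibel--Weltge different-from-$1$-disjointness bound to the multipartition setting using the random induced matching from an expander. The remaining work here is purely a matter of verifying that $P_G$ sits inside the set-multilinear monotone framework and that its support encodes $f_G$.
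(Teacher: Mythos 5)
Your proposal is correct and follows essentially the same route as the paper: identify $\Assignment(P_G) = \supp(f_G^{-1}(1))$ via the observation that $\sum_{\{u,v\}\in E(G)} a_u a_v = e(G[\supp(a)])$, then chain Lemma~\ref{lemmaRelationshipBetweenMultipartCommComplAndMonoArithCompl} with Lemma~\ref{lemmaLowerBoundMultipartitionCC}. The only difference is that you additionally spell out the set-multilinearity and monotonicity hypotheses, which the paper leaves implicit.
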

\begin{proof}
    We have that 
    \begin{align*}
        \Assignment(P_G)
        &= \setbar*{\Assignment\left(\prod_{i=1}^n x_{i,a_i}\right)}
        {a\in\bcube{n}, \sum_{\{u,v\} \in E(G)}a_ua_v \neq 1} \\
        &= \setbar{\supp(a)}{a\in\bcube{n}, e(G[\supp(a)])\neq 1} \\
        &= \supp(\inv{f_G}(1)),
    \end{align*}
    thus, by Lemmas~\ref{lemmaRelationshipBetweenMultipartCommComplAndMonoArithCompl} and~\ref{lemmaLowerBoundMultipartitionCC},
    we obtain
    \[\monoC(P_G) \geq \mpcC(f_G) \geq 2^{\Omega(n)}.\]
\end{proof}

\begin{corollary}[Theorem~\ref{thm:main2}]\label{coroMonotoneArithmeticLowerBoundDegreeK}
    Let $k\in\Naturals$ and $n\in\Naturals$.
    Let $G$ be a graph with $nk$ vertices.
    If $G$ is an expander graph,
    then
    \[\monoC(Q_{k,G})\geq (m/k)^{\Omega(k)},\]
    where $m\defeq k2^n$ is the number of variables in the degree-$k$ polynomial $Q_{k, G}$.
\end{corollary}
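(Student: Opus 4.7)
The plan is a padding argument that transfers the lower bound from Corollary~\ref{coroMonotoneArithmeticLowerBound} to $Q_{k,G}$. Since $G$ is an expander on $N = nk$ vertices, that corollary gives $\monoC(P_G) \geq 2^{\Omega(nk)}$. Because $m/k = 2^n$, the target bound $(m/k)^{\Omega(k)}$ coincides with $2^{\Omega(nk)}$, so it suffices to transport the $P_G$ lower bound across the substitution~(\ref{eqSubstitutionPolyQtoPolyP}), which converts $Q_{k,G}$ into $P_G$ by replacing each variable by a product of $n$ variables of $P_G$.

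Concretely, starting from any monotone arithmetic circuit $C$ of size $s$ computing $Q_{k,G}$, I will construct a monotone arithmetic circuit $C'$ for $P_G$ as follows. For each variable $x_{i,a}$ of $Q_{k,G}$ (indexed by $i \in [k]$ and $a \in \{0,1\}^n$) that appears at some leaf of $C$, I create a single shared product subcircuit of $n-1$ multiplication gates that computes the monomial $\prod_{j=1}^n x_{(i-1)n + j, a_j}$ in the variables of $P_G$, and I feed its output in place of every occurrence of $x_{i,a}$ at the leaves. By equation~(\ref{eqSubstitutionPolyQtoPolyP}), $C'$ computes $P_G$; it is monotone because each substituting subcircuit uses only product gates; and its size is at most $s + v(n-1) \leq sn$, where $v \leq s$ is the number of distinct variables of $C$.

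Combining the size bound with the lower bound on $\monoC(P_G)$ yields $sn \geq 2^{\Omega(nk)}$, and hence $s \geq 2^{\Omega(nk)}/n = 2^{\Omega(nk)} = (m/k)^{\Omega(k)}$, with the $\log n$ loss absorbed into the $\Omega(\cdot)$ in the exponent. There is no real conceptual obstacle here; the only sanity checks are that the substitution preserves monotonicity (the substitutes are pure products) and that the multiplicative overhead $n$ in the size of $C'$ does not degrade the exponent (it does not, since $\log n$ is a lower-order term compared to $nk$).
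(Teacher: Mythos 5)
Your proof is correct and follows the same substitution argument as the paper: replace each variable $x_{i,a}$ of $Q_{k,G}$ by the corresponding monomial of $P_G$ (sharing the product subcircuit per distinct variable), observe that this preserves monotonicity, and transfer the $2^{\Omega(nk)}$ lower bound from Corollary~\ref{coroMonotoneArithmeticLowerBound}. The only cosmetic difference is that you bound the overhead by $v(n-1)\leq sn$ where $v\leq s$ is the number of leaves, whereas the paper uses the cruder $O(mn)$; both are negligible against $2^{\Omega(nk)}$ and yield the same conclusion.
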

\begin{proof}
    By a substitution (Equation~\ref{eqSubstitutionPolyQtoPolyP}),
    we can convert any monotone circuit of size $s$ computing $Q_{k,G}$ into a monotone circuit of size $s + O(mn)$ computing $P_G$.
    Thus, by Corollary~\ref{coroMonotoneArithmeticLowerBound}, we have $2^{\Omega(nk)}\leq s + O(mn)$, so
    \[s
    \geq 2^{\Omega(nk)}
    = \left(2^{n}\right)^{\Omega(k)}
    = \left(m/k\right)^{\Omega(k)}.\]
    Therefore, we obtain
    \[\monoC(Q_{k,G})\geq (m/k)^{\Omega(k)}.
        \qedhere
    \]
\end{proof}

\section{A criterion for monotone Boolean circuit lower bounds}\label{sec:mon-ckt-lb}

The goal here is to describe a criterion for proving monotone Boolean circuit lower bounds for a function $f: \{0,1\}^n \to \{0,1\}$.
The criterion is based on the existence of distributions
 $\cald_0$ and $\cald_1$ over $\{0,1\}^n$ satisfying some properties.
 We start with two important definitions. 
\begin{definition}[Sunflower]
    \label{def:abstract-sunflower}
    For $S \subseteq [n]$,
    let $\indset{S}$ denote the function
    \[  \indset{S}(x) \coloneqq \bigwedge_{i\in S} x_i
    \]
    where by convention $\indset{\emptyset} \equiv 1$.

    We say that a family of sets $\cals \sseq \power{[n]}$
    is a \emph{$(\cald,\eps)$-sunflower}
    if $\card{\cals} \geq 2$
    and
    \begin{equation}
        \label{eq:abstract-sunflower}
        \Pr_{x \flws \cald}
        \left[  
            \exists S \in \cals : 
            \indset{S \sm K}(x)
            =
            1
        \right] > 1-\eps,
    \end{equation}
    where $K \defeq \bigcap_{S\in\cals} S$.
    The family $\cals$ is called $\ell$-uniform if $|S|=\ell$ for every $S \in \cals$.
    Let
    $r(\cald, \ell,\eps)$ be the minimum integer $r$ such that
every $\ell$-uniform family of sets of size at least $r^\ell$ 
    contains a $(\cald,\eps)$-sunflower (if no such $r$ exists
    then it is $\infty$).
\end{definition}

\begin{definition}[Spread]
    For an integer $t$ and $q >0$,
    a distribution $\cald$ over $\bset^n$ 
    is \emph{$t$-wise $q$-spread} if,
    for every $A \sseq [n]$ such that $\card{A} \leq t$,
    $$\Pr_{x \flws \cald}[
        \indset{A}(x) = 1
        ] 
        \leq q^{-\card{A}}.$$
\end{definition}

We show that a monotone circuit lower bound for a function $f\ffrom\bcube{n}\fto\bset$
follows immediately from a spreadness bound on a distribution $\cald_1$ of
accepting inputs,
and a sunflower bound on a distribution $\cald_0$ of rejecting inputs.

\begin{theorem}[A lower bound criterion]
    \label{thm:sunflower-gen}
    Let $n\in\Naturals$ and let $f : \{0,1\}^n \to \{0,1\}$ be a monotone function.
    Let $t\in\Naturals$ and $q\in(0,1)$.
    Let $\cald_0, \cald_1$ be distributions over $\{0,1\}^n$ such that
    $\cald_1$ is $t$-wise $q$-spread, and let
    \[\alpha\defeq \min \set*{\Pr_{x \flws \cald_0}[f(x)=0],\; \Pr_{x \flws \cald_1}[f(x)=1]}. \]
    Let $w\in\Naturals$ such that $w\leq t/2$, and
    define
    \[r_w \defeq \max_{\ell \in [2w]} r(\cald_0, \ell, \alpha n^{-3w}).\]
    If $8r_w\leq q \leq r_w n$,
    then any Boolean monotone circuit computing $f$ has size at least
    \begin{equation*}
        \left( 
            \frac{c \alpha q}{r_w} 
        \right)^w
    \end{equation*}
    where $c$ is a universal positive constant. 
\end{theorem}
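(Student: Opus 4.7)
I would prove this by adapting Razborov's approximation method~\cite{Raz85} to the abstract two-distribution framework set up by the hypotheses. Let $C$ be a monotone circuit of size $s$ computing $f$. Walking from inputs to output, I would replace each gate $g$ of $C$ by an \emph{approximator} $A_g$: a monotone $w$-DNF in which, for every $\ell\in[2w]$, no $\ell$-uniform sub-family of terms has cardinality $\geq r_w^\ell$. By definition of $r_w$, any violation of this condition exposes a $(\cald_0,\alpha n^{-3w})$-sunflower which we \emph{pluck}, replacing the offending sub-family $\cals$ by the single core term $\indset{K}$ with $K=\bigcap_{S\in\cals}S$. Each input $x_i$ is represented by $\indset{\{i\}}$; at the output, comparing the resulting $w$-DNF $A$ against $f$ yields the lower bound.

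\textbf{Approximate gates and error bookkeeping.}
The approximate OR of $A_1,A_2$ is the union of their terms followed by plucking to restore the above size constraint. Plucking only \emph{enlarges} the represented function, so the only errors it can introduce are false positives; and by the definition of $r_w$ each pluck contributes at most $\alpha n^{-3w}$ to the $\cald_0$-error. The approximate AND of $A_1,A_2$ takes all pairwise unions of terms (a DNF of width $\leq 2w\leq t$), truncates away terms of size $>w$, and finally plucks. Truncation only \emph{shrinks} the represented function, so it only produces false negatives; by $t$-wise $q$-spreadness of $\cald_1$, each truncated term $T$ (with $|T|\leq 2w\leq t$) costs at most $q^{-|T|}$ against $\cald_1$. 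A direct counting using the size bound $|A_i|\leq O(r_w^w)$ yields per-gate $\cald_1$-error $O(w(r_w/q)^{w+1})$ and per-gate $\cald_0$-error $O(wr_w^w\cdot\alpha n^{-3w})$.

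\textbf{Finish.}
Summing over the $s$ gates and using that $C$ correctly computes $f$ on the support of $\cald_1$ gives
\[
    \Pr_{x\flws\cald_1}[A(x)=1]\;\geq\;\alpha\;-\;O\!\left(sw(r_w/q)^{w+1}\right).
\]
On the other hand, the size constraint on $A$ together with $t$-wise $q$-spreadness of $\cald_1$ yields
\[
    \Pr_{x\flws\cald_1}[A(x)=1]\;\leq\;\sum_{\ell=1}^{w}r_w^{\ell}\cdot q^{-\ell}\;\leq\;\frac{2r_w}{q},
\]
where the last inequality uses $q\geq 8r_w$. Meanwhile the hypothesis $q\leq r_wn$, combined with being inside the range $s\leq (c\alpha q/r_w)^w$, ensures that the accumulated $\cald_0$-error is $o(\alpha)$; in particular $A$ cannot contain the empty term, since $\Pr_{x\flws\cald_0}[f(x)=0]\geq\alpha$ together with negligible $\cald_0$-error would be contradicted by $A\equiv 1$. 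Combining the two displayed inequalities and solving for $s$ produces the advertised bound $s\geq(c\alpha q/r_w)^w$ for a universal constant $c$.

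\textbf{Main obstacle.}
The delicate part is the parameter bookkeeping rather than any single conceptual step: the tolerance $\alpha n^{-3w}$ in the definition of $r_w$ and the regime $8r_w\leq q\leq r_wn$ must be calibrated precisely so that (i) the $\cald_0$-error from plucking remains negligible throughout the target range of $s$, and (ii) the $\cald_1$-error from truncation is what actually drives the contradiction. I expect the main care to be verifying that these two one-sided error accumulations fit together cleanly across both gate types and input widths $\ell\in[2w]$.
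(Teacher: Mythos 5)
Your approach is essentially the same as the paper's: Razborov's approximation method, building a $(w,r_w)$-DNF gate-by-gate, plucking $(\cald_0,\eps)$-sunflowers to control size (one-sided $\cald_0$-error), truncating wide terms in AND gates (one-sided $\cald_1$-error), and then deriving a contradiction from the $q$-spreadness bound on the final approximator. The paper proves the same underlying content by first reducing to an inner lemma for conditioned distributions $\cald_b^*$ supported exactly on $f^{-1}(b)$, and then reasons about two-sided \emph{agreement} with $f$ under the mixture $\cald=(\cald_0+\cald_1)/2$, which cleanly hides the $\alpha$-dependence in the conditioning step; you instead carry $\alpha$ explicitly through the one-sided error accounting. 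Both are valid, and your route is arguably more direct. Two points that deserve care if you were to write this out fully: (i) your final step ``combining the two displayed inequalities and solving for $s$'' implicitly needs a case split: when $\alpha\leq O(r_w/q)$ you get no contradiction from $\alpha - O(\cdot) \leq 2r_w/q$, but in precisely that regime $(c\alpha q/r_w)^w < 1$ so the claimed bound is vacuous; when $\alpha$ is above that threshold the contradiction does yield $s\geq \Omega(\alpha)(q/r_w)^w$, which dominates $(c\alpha q/r_w)^w$. (ii) Your AND gate orders ``truncate, then pluck''; the paper does the reverse. Either order works, but the error bookkeeping is slightly cleaner when you pluck on the width-$\leq 2w$ union first (so $r$-smallness is established before the truncation error is charged against $\cald_1$-spreadness).
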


We prove Theorem~\ref{thm:sunflower-gen} using the approximation method of
Razborov~\cite{Raz85},
generalising the ``tailored sunflower'' approach
of recent works~\cite{CKR22,DBLP:conf/coco/BlasiokM25,CGRSS25}
which began with~\cite{Rossman2014}.
Certain sunflower criterions have previously appeared
in~\cite{Cavalar2020a,DBLP:conf/coco/BlasiokM25}.
Our presentation of the method is similar to~\cite{CGRSS25},
with the difference that we consider arbitrary distributions.
The following is the main lemma.

\begin{lemma}
    \label{lem:sunflower-lowerbound}
    Let $n\in\Naturals$ and let $f : \{0,1\}^n \to \{0,1\}$ be a monotone function.
    Let $t\in\Naturals$ and $q\in(0,1)$.
    Let $\cald_0, \cald_1$ be distributions over $\{0,1\}^n$ such that
    $\cald_1$ is $t$-wise $q$-spread and 
    $\supp(\cald_i) \sseq f^{-1}(i)$ for every $i \in \blt$.
    Let $w\in\Naturals$ such that $w\leq t/2$, and
    define
    \[r_w \defeq \max_{\ell \in [2w]} r(\cald_0, \ell, n^{-3w}).\]
    If $8r_w\leq q \leq r_w n$,
    then any Boolean monotone circuit computing $f$ has size at least
    \begin{equation*}
        \left( 
            \frac{c q}{r_w} 
        \right)^w
    \end{equation*}
    where $c$ is a universal positive constant. 
\end{lemma}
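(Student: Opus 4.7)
I will apply Razborov's approximation method~\cite{Raz85} in the refined ``tailored sunflower'' form used in recent works such as~\cite{Rossman2014,CKR22,CGRSS25}. Let $C$ be a monotone circuit of size $s$ computing $f$. Inductively associate to each gate $g$ of $C$ a monotone $w$-DNF \emph{approximator} $\tilde{g}$ satisfying the \emph{size invariant}: the number of length-$\ell$ terms is at most $r_w^\ell$ for every $\ell \in \{1,\dots,w\}$, and no empty term is ever kept. For an input variable $x_i$, set $\tilde{x_i} \defeq \set{\set{i}}$. I will track the two error quantities
\[\eps_1(g) \defeq \Pr_{\cald_1}[\tilde{g}(x) = 0,\, g(x) = 1], \qquad \eps_0(g) \defeq \Pr_{\cald_0}[\tilde{g}(x) = 1,\, g(x) = 0],\]
both zero at the leaves.

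\textbf{Approximating $\vee$ and $\wedge$.}
For an OR gate $g = g_1 \vee g_2$, start from $F \defeq \tilde{g_1} \cup \tilde{g_2}$. While the size invariant is violated, i.e.\ for some $\ell \leq w$ there are more than $r_w^\ell$ length-$\ell$ terms, the definition of $r_w$ exhibits a $(\cald_0, n^{-3w})$-sunflower $\cals$ within these terms; \emph{pluck} by replacing $\cals$ with the single term $\bigcap_{S \in \cals} S$. Iterate until the invariant is restored and set $\tilde{g} \defeq F$. For an AND gate $g = g_1 \wedge g_2$, form the distributed DNF $\setbar{S \cup T}{S \in \tilde{g_1},\, T \in \tilde{g_2}}$, \emph{truncate} (discard) every term whose size exceeds $w$, and then pluck as above.

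\textbf{Per-gate error analysis.}
A pluck only enlarges the accepting set of the DNF (the kernel is contained in every petal), so it cannot affect $\eps_1$; the inputs whose acceptance status flips from $0$ to $1$ are precisely those violating the event in~(\ref{eq:abstract-sunflower}), so each pluck contributes at most $n^{-3w}$ to $\eps_0$. A truncation only shrinks the accepting set, so it cannot affect $\eps_0$; and since any truncated term has length at least $w+1 \leq t$, the $t$-wise $q$-spreadness of $\cald_1$ bounds its contribution to $\eps_1$ by $q^{-(w+1)}$. A straightforward count shows that the number of plucks and truncations processed at any one gate is polynomial in $r_w^w$, yielding per-gate contributions $B_0 \leq \poly(r_w^w)\cdot n^{-3w}$ and $B_1 \leq \poly(r_w^w)\cdot q^{-(w+1)}$.

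\textbf{Closing the argument.}
Summing over the $s$ gates of $C$ gives $\eps_0(\tilde{C}) \leq s B_0$ and $\eps_1(\tilde{C}) \leq s B_1$. The output approximator $\tilde{C}$ is a $w$-DNF obeying the size invariant, so either $\tilde{C}$ is the empty DNF --- in which case $\tilde{C} \equiv 0$ and $\eps_1(\tilde{C}) = 1$ --- or every term of $\tilde{C}$ has length between $1$ and $w$, and a union bound over terms combined with $q \geq 8 r_w$ yields
\[\Pr_{\cald_1}[\tilde{C} = 1] \leq \sum_{\ell=1}^{w} r_w^\ell\, q^{-\ell} \leq \frac{2 r_w}{q} \leq \frac{1}{4},\]
so that $\eps_1(\tilde{C}) \geq 3/4$. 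In either case $s B_1 \gtrsim 1$, and absorbing the polynomial-in-$r_w^w$ overhead of $B_1$ against the slack afforded by the hypothesis $q \leq r_w n$ (as well as the much smaller $n^{-3w}$ budget of $B_0$) rearranges to $s \geq (cq/r_w)^w$ for a universal $c>0$.

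\textbf{Main obstacle.}
The delicate step is the bookkeeping: choosing the per-length term-count invariant $r_w^\ell$ tightly enough to force the union-bound contradiction on $\cald_1$ at the output, yet loose enough that each gate needs only polynomially many plucks and truncations. A subtler conceptual point is that the $\cald_0$-mass added by a single pluck is bounded by the tail of~(\ref{eq:abstract-sunflower}) \emph{uniformly} in whatever the remainder of the current DNF already accepts. The final arithmetic --- balancing $B_0 \leq \poly(r_w^w)\cdot n^{-3w}$ and $B_1 \leq \poly(r_w^w)\cdot q^{-(w+1)}$ against the two-sided hypothesis $8r_w \leq q \leq r_w n$ --- then recovers exactly the advertised $(cq/r_w)^w$ bound, and this balancing is precisely where the precise form of the hypotheses on $q$ and $r_w$ enters.
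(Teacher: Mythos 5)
Your proposal follows the same overall approach as the paper's proof: Razborov's approximation method, where each gate is inductively assigned a low-width, ``small'' monotone DNF approximator, maintained by a plucking step that replaces $(\cald_0,\eps)$-sunflowers with their cores and a truncation step that discards long terms at $\wedge$-gates, with one-sided errors on $\cald_0$ and $\cald_1$ tracked per gate. The structure is right, and the per-gate $\cald_0$-error accounting (at most $n^{-3w}$ per pluck, $\poly$ many plucks per gate) matches the paper's argument in spirit.

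However, there is a genuine gap in the treatment of the empty term. Your ``size invariant'' stipulates that ``no empty term is ever kept,'' yet this is in direct tension with the crucial claim that ``a pluck only enlarges the accepting set of the DNF.'' A $(\cald_0,\eps)$-sunflower can have empty core $K=\emptyset$ --- this is in fact forced whenever the oversize family consists of $1$-uniform singletons, since any two distinct singletons have empty intersection --- and in that case the replacement by $K$ turns the DNF into the constant $1$. The paper keeps the empty term, so the pluck operation genuinely only increases the accepting set, $\cald_1$-error never goes up under plucking, and the degenerate $F\equiv 1$ case is handled explicitly at the very end (their Claim~\ref{claim:dnfs-are-dumb}: if $\emptyset \in \cals$ then $\Pr_{\cald}[F = f]=1/2$). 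If instead you discard the empty core as your invariant requires, the approximator collapses to the constant $0$ in a single step; the $\cald_1$-mass lost is $\Pr_{\cald_1}[\text{pre-pluck DNF}=1]$, which is not controlled by either the spreadness bound (the pre-pluck family has \emph{broken} the smallness invariant, which is why you're plucking) or the sunflower parameter $\eps$ (which only governs $\cald_0$). So the inequality $\eps_1(\tilde{C}) \leq s B_1$ with the $B_1$ you quote is not justified under your stated policy. Your final case analysis compounds this by considering only ``$\tilde{C}$ empty $\Rightarrow \tilde{C}\equiv 0$'' and ``all terms have length $1$ to $w$,'' silently omitting $\tilde{C}\equiv 1$.

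A secondary issue is the bookkeeping. You write $B_1 \leq \poly(r_w^w)\cdot q^{-(w+1)}$, treating each truncated term uniformly as contributing $q^{-(w+1)}$. The paper instead uses the fact that \emph{after plucking} the family is $r_w$-small, so the truncation error is $\sum_{\ell=w+1}^{2w} r_w^\ell q^{-\ell}\leq (2r_w/q)^w$ --- a geometric decay that is essential for the final arithmetic. With your coarser $\poly(r_w^w)\cdot q^{-(w+1)}$ bound, rearranging $sB_1\gtrsim 1$ to $s\geq(cq/r_w)^w$ requires roughly $q\gtrsim r_w^w$, which is much stronger than the hypothesis $q\geq 8r_w$; the slack $q\leq r_w n$ does not absorb this. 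So even setting the empty-term issue aside, the ``polynomial overhead absorption'' in your last step does not close without the sharper truncation bound. The paper also uses $q\leq r_w n$ for a different purpose --- to dominate the plucking $\cald_0$-error $n^{-w}$ by $(2r_w/q)^w$ --- not to absorb a $\poly(r_w^w)$ loss.
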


Before proving our main lemma, let us use it to prove Theorem~\ref{thm:sunflower-gen}.
\begin{proof}[Proof of \Cref{thm:sunflower-gen}]
For every $b \in \blt$, we have
    $\Pr_{x \flws \cald_b}[f(x)=b] \geq \alpha > 0$.
    Let $\cald_b^*$ be the distribution $\cald_b$
    conditioned on the event $f(x)=b$.
First, the distribution $\cald_1^*$ is $t$-wise $(\alpha q)$-spread because for every $A \in \binom{[n]}{\leq t}$, 
    \begin{equation*}
        \Pr_{x \flws \cald_1^*}
        \left[  
            \indset{A}(x) = 1
        \right]
        \leq
        \frac{
            1
        }{\alpha}
        \cdot
        \Pr_{x \flws \cald_1}
        \left[  
            \indset{A}(x) = 1
        \right]
        \leq
        (\alpha q)^{-\card{A}}.
    \end{equation*}
Second, every $(\cald_0, \alpha \eps)$-sunflower $\cals$ is a 
    $(\cald_0^*, \eps)$-sunflower because
    \begin{equation*}
        \Pr_{x \flws \cald_0^*}
        \left[  
            \forall S \in \cals 
            \;\;
            \indset{S \sm K}(x)
            =
            0
        \right] 
        \leq
        \frac{
            1
        }{\alpha}
        \cdot
        \Pr_{x \flws \cald_0}
        \left[  
            \forall S \in \cals 
            \;\;
            \indset{S \sm K}(x)
            =
            0
        \right] 
        <
        \frac{\eps \alpha}{\alpha}
        =
        \eps,
    \end{equation*}
    so $r(\cald_0^*, l, \eps)\leq r(\cald_0, l, \alpha\eps)$.
    The result now follows by \Cref{lem:sunflower-lowerbound}.
\end{proof}

For the rest of this section, we prove Lemma~\ref{lem:sunflower-lowerbound}.
Let $\cald_0,\cald_1,q,t,w,\ell$ be as in the assumptions of the lemma.
We denote by $\cald$ the distribution 
$$\cald \defeq (\cald_0+\cald_1)/2.$$
Our goal is to approximate a given monotone circuit by a monotone DNF formula
\[
    \indfml{\cals} \coloneqq \bigvee_{S\in\cals} \indset{S}
\]
for a given family of sets $\cals \sseq \power{[n]}$ satisfying some properties.

We say that $\indfml{\cals}$ is \emph{$r$-small} if,
for every $\ell\in [n]$,
\[\card{\cals \cap \binom{[n]}{\ell}} \leq r^{\ell}.\]
We say that $\indfml{\cals}$ has \emph{width $w$}
if $\card{S} \leq w$ for every $S \in \cals$.
Finally, we say that $\indfml{\cals}$ is a $(w,r)$-DNF
if it is both $r$-small and has width $w$.
The approximation method now proceeds in the following two steps.

\begin{claim}
    \label{claim:dnfs-approximate}
    There is a universal constant $c \in (0,1)$ such that if a monotone circuit of size at most $(c q/r_w)^w$
    computes $f$
    then there is a 
    $(w, r_w)$-DNF $F$ such that
    \[\Pr_{x \sim \cald}[F(x) = f(x)] \geq 
    0.9.\]
\end{claim}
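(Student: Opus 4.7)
The plan is to apply the standard Razborov approximation method, inductively constructing a $(w, r_w)$-DNF approximator for the given monotone circuit $C$ of size $S \leq (cq/r_w)^w$, tracking disagreement with $C$ separately on $\cald_0$ and $\cald_1$.

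I would attach to each wire $g$ of $C$ an approximator $\indfml{\cals_g}$. At input wires $x_i$, set $\cals_{x_i} \defeq \set{\set{i}}$. For an OR gate combining labels $\cals_1, \cals_2$, first form $\cals \defeq \cals_1 \cup \cals_2$; for an AND gate, first form the Minkowski sum $\cals \defeq \setbar{S_1 \cup S_2}{S_1 \in \cals_1, S_2 \in \cals_2}$. Then apply two fix-ups to restore the $(w, r_w)$ invariant. \emph{Trimming} deletes any set of width $> w$: since the trimmed DNF is dominated pointwise by the original, this can only flip $1$'s to $0$'s, contributing at most $\Pr_{x \flws \cald_1}[\indset{S}(x)=1] \leq q^{-|S|} \leq q^{-w-1}$ to the $\cald_1$-error (valid since $|S| \leq 2w \leq t$, by the spreadness assumption). \emph{Plucking} handles any level $\ell \leq 2w$ with more than $r_w^\ell$ sets: the definition of $r_w$ guarantees a $(\cald_0, n^{-3w})$-sunflower $\cals' \sseq \cals$ at level $\ell$, which we replace by its core $K$; since $K \sseq S$ for every $S \in \cals'$, this only flips $0$'s to $1$'s and contributes at most $n^{-3w}$ to the $\cald_0$-error.

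Next, I would bound the number of fix-ups per gate. Any $(w, r_w)$-DNF has at most $\sum_{\ell=0}^w r_w^\ell = O(r_w^w)$ terms, so an OR gate's initial label has $O(r_w^w)$ terms and an AND gate's label has $O(r_w^{2w})$ terms. Each trim removes one set, and each pluck strictly reduces the total set count (it removes $\geq 2$ sets at level $\ell$ and adds $\leq 1$ set at a level $\ell' < \ell$). Thus the total number of fix-ups per gate is at most $O(r_w^{2w})$.

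Finally, I would aggregate errors over all $S$ gates, obtaining
\[
\Pr_{x \flws \cald_0}[F(x) \neq f(x)] \leq S \cdot O(r_w^{2w}) \cdot n^{-3w}, \qquad \Pr_{x \flws \cald_1}[F(x) \neq f(x)] \leq S \cdot O(r_w^{2w}) \cdot q^{-w-1}.
\]
Substituting $S \leq (cq/r_w)^w$ and applying the hypothesis $8r_w \leq q \leq r_w n$, both quantities reduce to expressions that can be made $\leq 0.05$ for a sufficiently small universal constant $c$, yielding $\Pr_\cald[F = f] \geq 0.9$. The main obstacle is the delicate constant management: the quadratic blow-up $O(r_w^{2w})$ from the Minkowski sum in AND gates threatens to overwhelm the budget $(cq/r_w)^w$, and only the precise interplay between $q$, the sunflower parameter $n^{-3w}$, and the hypothesis $8r_w \leq q \leq r_w n$ permits the argument to close. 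One must also carefully choose a global monovariant (e.g., the total set count rather than a per-level bound), since pluckings at one level can create new sets at lower levels and thereby trigger cascading pluckings.
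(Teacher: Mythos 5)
The overall scaffolding follows the paper closely — a gate-by-gate approximator built from a union/Minkowski-sum step followed by plucking and trimming, with pluck errors charged to $\cald_0$ and trim errors charged to $\cald_1$ — but the aggregation of the trimming (large-width) error is too coarse to close the argument.

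Concretely, you bound each trimmed set's contribution uniformly by $q^{-(w+1)}$ and multiply by your fix-up count $O(r_w^{2w})$, yielding a per-gate $\cald_1$ error of $O(r_w^{2w})\,q^{-(w+1)}$. Summed over $S \leq (cq/r_w)^w$ gates this gives
\[
(cq/r_w)^w \cdot O(r_w^{2w}) \cdot q^{-(w+1)} = c^w \cdot O(r_w^w/q) \leq c^w \cdot O(r_w^{w-1}),
\]
where the last step uses $q \geq 8r_w$. This quantity is \emph{not} $O(1)$ for any universal constant $c$, because $r_w$ and $w$ both typically grow with $n$. The flat bound $q^{-(w+1)}$ is wasteful for trimmed sets of size close to $2w$, and a count-based monovariant (``total number of fix-ups'') cannot recover the missing decay. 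The paper instead exploits the stratification left by plucking: after plucking, the family is $r_w$-small, so there are at most $r_w^\ell$ sets at each level $\ell \in \set{w+1,\ldots,2w}$, each contributing at most $q^{-\ell}$, and the resulting geometric series $\sum_{\ell > w}(r_w/q)^\ell \leq 2(r_w/q)^{w+1} \leq (2r_w/q)^w$ is exactly the cancellation needed: against $S \leq (cq/r_w)^w$ gates the total $\cald_1$ error becomes $(2c)^w \leq 0.1$ for $c = 0.05$. Replace your count-times-max-error bound for the trim step with this per-level geometric sum and the argument closes.

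A minor stylistic note: your monovariant gives a fix-up count of $O(r_w^{2w})$ per gate, which is tighter than the paper's crude $n^{2w}$ bound on plucking iterations in \Cref{claim:pluck}. This tighter count is harmless but not needed; the paper's looser $n^{2w}\cdot n^{-3w} = n^{-w}$ plucking error already suffices because $n^{-w} \leq (2r_w/q)^w$ follows from $q \leq r_w n$.
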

\begin{claim}
    \label{claim:dnfs-are-dumb}
    For every $\delta \in (0,1/2)$ 
    and for every $(w, \delta q)$-DNF $F$,
    we have
    \[\Pr_{x \sim \cald}[F(x) = f(x)] \leq
    1/2+2 \delta.\]
\end{claim}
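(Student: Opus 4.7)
The plan is to reduce the claim to an upper bound on $\Pr_{x\sim\cald_1}[F(x)=1]$, using the fact that the two distributions $\cald_0$ and $\cald_1$ live on the $0$-inputs and $1$-inputs of $f$ respectively. Concretely, since $\supp(\cald_b)\subseteq f^{-1}(b)$,
\begin{equation*}
    \Pr_{x\sim\cald}[F(x)=f(x)]
    = \tfrac{1}{2}\Pr_{x\sim\cald_0}[F(x)=0]
    + \tfrac{1}{2}\Pr_{x\sim\cald_1}[F(x)=1]
    \leq \tfrac{1}{2} + \tfrac{1}{2}\Pr_{x\sim\cald_1}[F(x)=1],
\end{equation*}
so it suffices to show $\Pr_{x\sim\cald_1}[F(x)=1]\leq 4\delta$ (and in fact $2\delta$ should already suffice, leaving some slack).

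For this second step I would apply a union bound over the terms of the DNF $F=\indfml{\cals}$ and then invoke the spreadness of $\cald_1$. Writing $\cals_\ell\defeq \cals\cap\binom{[n]}{\ell}$, the $(w,\delta q)$-DNF structure gives $|\cals_\ell|\leq (\delta q)^\ell$ for each $\ell\leq w$, and the $t$-wise $q$-spreadness of $\cald_1$ (together with $w\leq t/2\leq t$) gives $\Pr_{x\sim\cald_1}[\indset{S}(x)=1]\leq q^{-|S|}$ for every $S\in\cals$ with $|S|\leq w$. Assuming the empty set is not a term (otherwise $F\equiv 1$ and $\Pr_{x\sim\cald}[F(x)=f(x)]=1/2$, and the claim is trivial), union bounding yields
\begin{equation*}
    \Pr_{x\sim\cald_1}[F(x)=1]
    \leq \sum_{\ell=1}^{w} |\cals_\ell|\cdot q^{-\ell}
    \leq \sum_{\ell=1}^{w} (\delta q)^\ell\cdot q^{-\ell}
    = \sum_{\ell=1}^{w}\delta^\ell
    \leq \frac{\delta}{1-\delta}
    \leq 2\delta,
\end{equation*}
where the last inequality uses $\delta<1/2$. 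Combining with the first display gives the bound $1/2+\delta\leq 1/2+2\delta$ claimed.

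There is no real obstacle here, so I view this mostly as a bookkeeping step in the approximation method. The only subtlety worth flagging is the degenerate case $\emptyset\in\cals$, which must be separated out (since then $\indset{\emptyset}\equiv 1$ and the union-bound summation is vacuous). Note also how the parameter regime is tight: spreadness $q$ matches the DNF smallness $\delta q$ so that the $q^{-\ell}$ factors cancel and the sum telescopes into a geometric series in $\delta$; this is precisely what drives the choice of $r_w$ relative to $q$ in the statement of Lemma~\ref{lem:sunflower-lowerbound} via Claim~\ref{claim:dnfs-approximate}.
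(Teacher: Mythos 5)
Your proof is correct and matches the paper's argument essentially step-for-step: you handle the degenerate case $\emptyset\in\cals$ separately, then reduce to bounding $\Pr_{x\sim\cald_1}[F(x)=1]$ by a union bound over the terms of each width $\ell\leq w$ combined with $\delta q$-smallness and $t$-wise $q$-spreadness, summing a geometric series in $\delta$. The only difference is cosmetic—you write out the split over $\cald_0$ and $\cald_1$ explicitly, which the paper leaves implicit.
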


Before proving the two claims, we prove the main lemma.

\begin{proof}[Proof of \Cref{lem:sunflower-lowerbound}]
    The proof uses the two claims above.
    When $r_w = \infty$, the theorem trivially holds so assume that $r_w < \infty$.
    Assume (towards a contradiction) that $f$ can be computed by a monotone circuit of size $\leq ( c q/ 2 r_w))^w$ where $c >0$ is the constant from 
    \Cref{claim:dnfs-approximate}.
    As $8r_w \leq q$ by assumption,
    Claim~\ref{claim:dnfs-approximate} implies that there is a $(w,q/8)$-DNF such that
    \[\Pr_{x \sim \cald}[F(x) = f(x)] \geq 
    0.9,\]
    which contradicts \Cref{claim:dnfs-are-dumb}.
\end{proof}

\begin{proof}[Proof of~\Cref{claim:dnfs-are-dumb}]
    Let $F \defeq \indfml{S}$ be a $(w, \delta q)$-DNF.
    If $F \equiv \one$ (i.e., $\emptyset \in \cals$),
    the claim is true by definition of $\cald$ and the fact that $\cald_b$ is supported in $\inv{f}(b)$ for every $b\in\set{0,1}$.
    Otherwise, note that $F(x)=1$ only if there exists
    $S \in \cals$ such that $\indset{S}(x)=1$.
    As $F$ is $\delta q$-small and has width $w$ and $\cald_1$ is $t$-wise $q$-spread with $w\leq t/2$,
    we have
    \begin{equation*}
        \Pr_{x \sim \cald_1}[F(x) = 1]
        \leq
        \sum_{k \in[w]}\sum_{S\in\cals\cap\binom{[n]}{k}}\Pr_{x\sim\cald_1}[\indset{S}(x)=1] 
        \leq
        \sum_{k\in[w]}
        (\delta q)^{k}q^{-k}
        \leq 2 \delta.
        \qedhere
    \end{equation*}
\end{proof}

\begin{proof}[Proof of~\Cref{claim:dnfs-approximate}]
    For convenience, 
    let 
    $r \defeq r_w$ 
    \bnote{Previous version was saying $r = r_0$ (?). Note that $r_w$
    increases with $w$.}
    and let $\eps\defeq n^{-3w}$.
    Let $C$ be a monotone circuit of size at most $(c q/r)^w$ computing
    $f$ where $c>0$ is to be determined. 
    We will construct a $(w,r)$-DNF for $C$
    gate-by-gate, inductively, starting at the input gates until we reach the
    output gate.
    Every input variable is already a~$(w,r)$-DNF.
    As we naively
    combine our inductively constructed $(w,r)$-DNFs, the number of
    terms might increase, potentially violating~$r$-smallness.
    In order to maintain smallness of our DNF, we approximate the naive
    combination 
    by the following procedure.
    The subsequent claim
    summarises the properties of the resulting DNF.

    \begin{algorithm}[H]
        \caption{Plucking procedure $\pluck(\cals)$}
        \label{alg:pluck}
        \begin{algorithmic}[1]
            \While{$\exists \ell\in [2w]\colon |\cals \cap
            \binom{[n]}{\ell}| > r^\ell$}
            \State
            Let $\cals' \subseteq \cals \cap \binom{[n]}{\ell}$ 
            be a $(\cald_0,\eps)$-sunflower with core $K$
            \State
            Let
            $\cals \leftarrow \set{K} \cup \set{S \in \cals : K \not\sseq S}$
            \EndWhile
        \end{algorithmic}
    \end{algorithm}

    Note that Line 2 of the algorithm is always possible by the choice of $r$.
        
    \begin{claim}
        \label{claim:pluck}
        If $F_{\cals}$ has width $2w$, then
        $F_{\pluck(\cals)}$ is 
        a $(2w,r)$-DNF with $F_{\pluck(\cals)} \geq F_{\cals}$
        and
        \begin{equation*}
            \Pr_{x \sim \cald_0}[F_{\pluck(\cals)}(x) > F_{\cals}(x)]
            \leq
            n^{-w}.
        \end{equation*}
    \end{claim}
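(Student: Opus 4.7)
\medskip

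\noindent\textbf{Proof proposal.} My plan is to analyze the iterative procedure directly, establishing each claimed property in order: termination, the width bound, $r$-smallness at termination, monotonicity (the pointwise inequality $F_{\pluck(\cals)} \geq F_\cals$), and finally the probabilistic bound via a union bound over iterations. First, the procedure terminates because every iteration strictly decreases $|\cals|$: we remove the sunflower $\cals'$ (of size at least $2$ by \Cref{def:abstract-sunflower}) and add a single set $K$. Since the initial $\cals$ is contained in $\binom{[n]}{\leq 2w}$ by the width-$2w$ hypothesis, the total number of iterations is at most $|\binom{[n]}{\leq 2w}| \leq n^{2w}$.

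Second, I would observe that the width and monotonicity properties are preserved by each iteration. For width: whenever the algorithm inserts a new set, it inserts the core $K$ of a sunflower $\cals' \sseq \binom{[n]}{\ell}$ with $\ell \in [2w]$, and $K = \bigcap_{S\in\cals'} S$ has size at most $\ell \leq 2w$. All other sets in the updated $\cals$ were present before. For the pointwise inequality, observe that in each iteration the DNF changes from $\bigvee_{S \in \cals'} \indset{S} \vee (\cdots)$ to $\indset{K} \vee (\cdots)$, where the remaining terms are identical; since $K \sseq S$ for every $S \in \cals'$, we have $\indset{K}(x) \geq \indset{S}(x)$ pointwise, so the DNF only grows. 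Composing across all iterations gives $F_{\pluck(\cals)} \geq F_\cals$. The $r$-smallness at termination is exactly the while-loop termination condition: $|\cals \cap \binom{[n]}{\ell}| \leq r^\ell$ for all $\ell \in [2w]$, and for $\ell > 2w$ this intersection is empty.

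Third, I would bound the probability that the plucked DNF strictly exceeds the original. For each iteration $i$, let $E_i$ be the event (over $x \sim \cald_0$) that the $i$-th iteration strictly increases the value of the running DNF at $x$. This happens only if $\indset{K^{(i)}}(x) = 1$ but $\indset{S}(x) = 0$ for every $S \in \cals'^{(i)}$; since $K^{(i)} \sseq S$ for all such $S$, this is equivalent to $\indset{K^{(i)}}(x) = 1$ and $\indset{S \sm K^{(i)}}(x) = 0$ for every $S \in \cals'^{(i)}$. In particular, the condition in~\eqref{eq:abstract-sunflower} fails at $x$, so by the $(\cald_0, \eps)$-sunflower property of $\cals'^{(i)}$ with $\eps = n^{-3w}$, we have $\Pr_{x \sim \cald_0}[E_i] \leq \eps$. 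Since $\{F_{\pluck(\cals)}(x) > F_\cals(x)\} \sseq \bigcup_i E_i$, a union bound over the at most $n^{2w}$ iterations yields
\[
\Pr_{x \sim \cald_0}\bigl[F_{\pluck(\cals)}(x) > F_\cals(x)\bigr] \leq n^{2w} \cdot n^{-3w} = n^{-w},
\]
as required.

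The main potential obstacle is the iteration-count bound: the union bound is only useful because the crude upper bound $n^{2w}$ on the initial size of $\cals$ combines with the choice $\eps = n^{-3w}$ to leave a net factor of $n^{-w}$. This is precisely the reason the definition of $r_w$ in \Cref{lem:sunflower-lowerbound} uses the parameter $\eps = n^{-3w}$, and why $\pluck$ must be analysed with the width-$2w$ invariant rather than with width $w$ (after each pluck, new terms of intermediate length can appear, so the invariant must cover all $\ell \in [2w]$). Once these parameters are aligned, all remaining steps are bookkeeping.
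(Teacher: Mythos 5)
Your proposal is correct and follows essentially the same strategy as the paper's proof: bound the number of iterations by $n^{2w}$ (you argue via strict decrease of $|\cals|$; the paper argues via strict decrease of $|\cals\cap\binom{[n]}{\ell}|$ at each step, summed over $\ell\le 2w$ — both give the same $n^{2w}$ bound because width $2w$ is maintained), observe monotonicity and the width bound from $K\subseteq S$ for all $S\in\cals'$ with $|K|\le 2w$, characterize the per-iteration error event as a failure of the sunflower condition \eqref{eq:abstract-sunflower} on $\cals'$, and conclude via union bound with $\eps = n^{-3w}$. All the steps align with the paper; no gaps.
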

    \begin{proof}
        Since the core $K$ is contained in all sets in $\cals'$ and $|\cals'| > 1$, \anote{``all'' to ``one''.}\tnote{one to all}
        the size of $\cals \cap \binom{[n]}{\ell}$ is necessarily reduced at Line 3,
        and eventually we obtain an $r$-small family.
        Therefore, as $\cals$ has width $2w$,
        the algorithm ends in at most $\sum_{i=0}^{2w}\binom{n}{i} \leq
        n^{2w}$ iterations.
        It follows that
        $F_{\pluck(\cals)}$ is 
        a $(2w,r)$-DNF.

        Because the core belongs to all sets in the sunflower, we get $F_{\pluck(\cals)} \geq F_{\cals}$.
        It remains to bound the error on $\cald_0$
        incurred in Line 3.
        Such an error happens only if
        $\indset{K}(x)=1$ and, 
        for all $S \in \cals$, we have 
        $\indset{S}(x) = 0$.
        Thus, the error of a single iteration (sunflower plucking)
        can be bounded by
        \begin{align*}
            \Pr_{x \sim \cald_0}[\indfml{\pluck(\cals)}(x) > \indfml{\cals}(x)]
            \leq
            \Pr_{x \flws \cald_0}[\forall S \in \cals, \   \indset{S \sm K}(x) = 0]
            \leq
            \Pr_{x \flws \cald_0}[\forall S \in \cals', \  \indset{S \sm K}(x) = 0]
            < \eps ,
        \end{align*}
        because $\cals'$ is a $(\cald_0,\eps)$-sunflower.
        As there are at most $n^{2w}$  iterations and $\eps = n^{-3w}$, the total error
        is at most $n^{-w}$.
    \end{proof}

    As remarked above, every input gate is already a $(w,r)$-DNF.
    Going over the gates of the circuit one-by-one (according to the circuit-order),
    we will inductively construct a $(w,r)$-DNF for each gate
    of the circuit.
    Let $g$ be a gate of the form 
    $g \defeq g_1 \circ g_2$ for a binary operation 
    $\circ \in \set{\lor,\land}$.
    Suppose we have inductively constructed
    two $(w,r)$-DNFs $\indfml{\cals}$ and $\indfml{\calt}$
    for $g_1$ and $g_2$ respectively.
    We will construct a DNF $\indfml{g}$
    such that
    \begin{align}
        E_{1,g} \defeq \Pr_{x \sim\cald_1}[\indfml{g}(x) < (\indfml{\cals} \circ \indfml{\calt})(x)] &~\leq~ (2r/q)^{w}
        \label{eq:1-err} 
        \end{align}
        and
        \begin{align}
        E_{0,g} \defeq \Pr_{x \sim\cald_0}[\indfml{g}(x) > (\indfml{\cals} \circ \indfml{\calt})(x)] &~\leq~ (2r/q)^w. 
        \label{eq:0-err} 
    \end{align}
 
    If $\circ = \lor$,
    we approximate 
    $\indfml{\cals} \lor \indfml{\calt}$
    by
    letting
    $F_g := \indfml{\pluck(\cals \cup \calt)}$.
    Note that,
    by \Cref{claim:pluck}, we have that $\indfml{\pluck(\cals \cup \calt)}$ is a $(w, r)$-DNF and
    \[\indfml{\pluck(\cals \cup \calt)} \geq 
    \indfml{\cals \cup \calt} = 
    \indfml{\cals} \lor \indfml{\calt}.\]
    Thus
    plucking introduces no errors on $\cald_1$, implying~\eqref{eq:1-err}.
    Moreover,
    plucking incurs errors at most $n^{-w} \leq (2r/q)^w$ errors on $\cald_0$
    by \Cref{claim:pluck} and $q\leq rn$, which implies \eqref{eq:0-err}. 

    If 
    $\circ = \land$,
    we approximate $g$ by first taking
    \begin{equation*}
        \calf 
        :=
        \pluck
        \left(
            \{ S \cup T : S \in \cals,\, T \in \calt \}
        \right).
    \end{equation*}
    By \Cref{claim:pluck},
    $\indfml{\calf}$
    is $r$-small since 
    $\card{S \cup T} \leq 2w$ for every $S \in \cals, T \in \calt$.
As in the $\lor$ case,
this creates no error on $\cald_1$
and the error on $\cald_0$ is at most $n^{-w} \leq (2r/q)^w$.
    We now define $\indfml{g}$ be removing all sets of width larger than $w$ from $\calf$.
    This removal does not create an error on $\cald_0$,
    since $\indfml{g} \leq\indfml{\calf}$.
    Since $\calf$ is $r$-small,
    the $q$-spreadness of $\cald_1$
    implies that
    \begin{equation*}
    \Pr_{x \sim \cald_1}[\indfml{g}(x) > \indfml{\calf}(x)] \leq
        \sum_{\ell = w+1}^{2w} 
        \sum_{S\in\calf\cap\binom{[n]}{\ell}}\Pr_{x\sim\cald_1}[\indset{S}(x)=1] 
        \leq
        \sum_{\ell = w+1}^{t} 
        r^\ell q^{-\ell}
        \leq
        (2r/q)^w,
    \end{equation*}
    where we used that $q \geq 2r$.
    We have thus shown \eqref{eq:1-err} and \eqref{eq:0-err}.
    
Finally, let $F$ be the $(w,r)$-DNF 
    of the output gate.
    By the union bound and \Cref{eq:1-err} ,
\begin{align*}
\Pr_{x \flws \cald_1}[C(x) > F(x)]
\leq \sum_{g}
E_{1,g} \leq 
   (c q/r)^w 
         (2r/q)^w 
        \leq
        (2c)^w  \leq
        0.1,
    \end{align*}
for $c = 0.05$.
A similar bound holds for $\Pr_{x \flws \cald_0}[C(x) < F(x)]$.
It follows that
\begin{equation*}
        \Pr_{x \flws \cald}[F(x) \neq f(x)]
        =
        \frac{1}{2} 
        \Pr_{x \flws \cald_1}[C(x) > F(x)] 
        + 
        \frac{1}{2} 
        \Pr_{x \flws \cald_0}[C(x) < F(x)] 
        \leq 0.1 .
        \qedhere
    \end{equation*}
\end{proof}

%

\section{Boolean separation}
\label{sec:thm2}

Recall that, given a matrix $M$ with $n$ rows and $m$ columns
over a field $\bbf$,
the Boolean function $f_M: \blt^m \to \blt$
accepts an input $S \sseq [m]$ if and only if $M[S]$ has full rank.
In this section, we only consider $\bbf \defeq \bbf_2$
in order to
prove 
\Cref{thm:main4}.

\subsection{Nonmonotone low-depth circuit upper bounds}

Let us first prove the upper bound part of Theorem~\ref{thm:main2}.
\begin{lemma}[Upper bound]
    \label{lemLowDepthUpperBoundForMatrixFunctions}
    For any matrix $M\in\Fbb^{n\times m}$, 
    the function $f_M$ is computed by a Boolean circuit of size $\poly(m)$ and depth $O(\log m)^2$.
\end{lemma}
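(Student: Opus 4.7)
The plan is to reduce evaluating $f_M(x)$ to a single rank computation over $\mathbb{F}_2$ on a matrix $M'(x)$ that depends on $x$ in the most trivial possible way, and then invoke the classical $\NC^2$ rank algorithm already cited in the excerpt.

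First I would define $M'(x)\in\mathbb{F}_2^{n\times m}$ entry-wise by $M'(x)_{i,j}\defeq M_{i,j}\cdot x_j$. In other words, $M'(x)$ is obtained from $M$ by zeroing out the columns $j$ with $x_j=0$ and leaving the other columns untouched. The nonzero columns of $M'(x)$ are exactly the columns of $M[S]$ for $S=\set{j:x_j=1}$, and the zero columns contribute nothing to the rank, so $\rk(M'(x))=\rk(M[S])$. Consequently $f_M(x)=1$ iff $\rk(M'(x))$ equals the ``full-rank threshold'', which is a fixed constant depending only on $M$ (either $n$, or more generally the common rank shared by all full-rank restrictions). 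Crucially, each entry of $M'(x)$ is either the constant $0$ or the variable $x_j$ (according to the fixed bit $M_{i,j}$), so producing $M'(x)$ from $x$ is free: it is just a hard-wiring of constants and a relabelling of input wires.

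Second, by the theorem of Mulmuley~\cite{Mulmuley87} and Allender--Beals--Ogihara~\cite{ABO99}, the rank of an $n\times m$ matrix over $\mathbb{F}_2$ can be computed by a uniform Boolean circuit of size $\poly(m)$ and depth $O(\log m)^2$. Applying this algorithm to $M'(x)$ and comparing the output to the fixed threshold $r$ using an $O(\log m)$-depth polynomial-size equality test gives a Boolean circuit for $f_M$ of the claimed size and depth. Since the reduction step adds only depth $O(1)$ and polynomial size, the overall bound is dominated by the rank algorithm.

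Because the reduction is only a re-labelling and the rank algorithm is off-the-shelf, there is essentially no obstacle; the only detail worth verifying is that the cited rank algorithm works over $\mathbb{F}_2$ (rather than only over $\mathbb{Q}$), which is exactly what Allender--Beals--Ogihara establish for arbitrary finite fields. In the actual write-up I would state the reduction as a one-line observation and immediately apply the cited $\NC^2$ rank bound.
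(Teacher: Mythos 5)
Your proof is correct and takes essentially the same approach as the paper: reduce $f_M(x)$ to a rank computation on the submatrix $M[S]$ (or equivalently the column-masked matrix $M'(x)$) and invoke the $\NC^2$ rank algorithm of Mulmuley and Allender--Beals--Ogihara. The paper states this in one line; your write-up just spells out the wiring of $M'(x)$ and the trivial threshold comparison, which is fine.
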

\begin{proof}
    On an input $1_S$ corresponding to the indicator vector of a set $S\subseteq [m]$,
    we need to compute the rank of the matrix $M[S]$.
    This is known to be computed by a Boolean circuit of size $\poly(m)$ and depth $O(\log m)^2$~\cite{ABO99, Mulmuley87}. 
\end{proof}

\subsection{Monotone Boolean circuit lower bounds via well-behaved codes}
\label{sectLowerBoundF2}
\tnote{Can we give a better name for these codes with 'good' distance and dual distance?}

Let us start by defining a special kind of linear binary codes, which are explicitly constructed in Section~\ref{sectExplicitConstructionWellBehavedCodes}.
We will only use the abstract properties of these codes
to define probability distributions that allow us to apply \Cref{thm:sunflower-gen} and obtain lower bounds for monotone circuits computing functions related to these codes.

\begin{definition}
    For $M\in\Fbb^{n\times m}$,
    we define the \emph{(linear) code $\Ccal_M$ of $M$} as
    \[\Ccal_M\defeq \setbar{\transpose{M}w}{w\in\Fbb^n}\subseteq\Fbb^m,\]
    and define the \emph{dual code $\Dcal_M$ of $M$} as
    \[\Dcal_M\defeq \setbar{w\in\Fbb^m}{Mw = 0}\subseteq\Fbb^m.\]
    We say that a code $\Ccal\subseteq\Fbb^m$ is a \emph{linear code}
    if there is a rank $n$ matrix $G\in\matrixset[\Fbb]{n}{m}$, called a \emph{generator matrix} of $\Ccal$, such that 
    \[\Ccal = \Ccal_G.\]
    For any linear code $\Ccal\subseteq\Fbb^m$,
    we say that a rank $m-n$ matrix $H\in\matrixset{(m-n)}{n}$ is a \emph{parity check matrix} of $\Ccal$ if
    \[\Ccal = \Dcal_H,\]
    and, in this case,
    we define the \emph{dual code $\dual{\Ccal}$ of $\Ccal$} as
    \[\dual{\Ccal}\defeq \Ccal_H,\]
    which is independent of the choice of a parity check matrix $H$.
    We define the \emph{distance} of a linear code $\Ccal\subseteq\Fbb^m$ as 
    \[d(\Ccal)\defeq \min_{x\in\Ccal, x\neq 0}\card{\setbar{i\in [m]}{x_i\neq 0}},\]
    and its \emph{dual distance} as $\dual{d}(\Ccal)\defeq d(\dual{\Ccal})$.
    For every $d,t\in\Reals_{\geq 0}$,
    we say that $\Ccal$ is \emph{$(d,t)$-well-behaved} if
    $d(\Ccal) > d$ and $\dual{d}(\Ccal) > t$.
\end{definition}

The main theorem of this section is the following.
\begin{theorem}
    \label{theoMonotoneLowerBoundsFromGoodCodes}
    Let $n\in\Naturals$,
    and $m\defeq m(n)\in\Naturals$,
    and $d\defeq d(n)\in\Naturals$,
    and $t\defeq t(n)\in\Naturals$.
    Let $\Ccal\subseteq\Fbb^m$ be any $(d,t)$-well-behaved linear code with generator matrix $M\in\matrixset{n}{m}$.
    If $2n < \codeEscProb$,
    then the monotone complexity of $f_M$ is at least
    \[\Omega\left(\frac{\codeEscProb}{n\sqrt{t}} \right)^{\sqrt{t}/(b \log m)},\]
    \bnote{altered statement. I don't think we need $w$ here. what do you think?}
    where $b\in\Reals$ is a sufficiently large positive constant.
\end{theorem}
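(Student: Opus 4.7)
The plan is to invoke \Cref{thm:sunflower-gen} with two distributions tailored to the code $\Ccal$.  For the accepting distribution $\cald_1$, I take the product distribution on $\bcube{m}$ with marginal $p\defeq Cn/\codeEscProb$ for a sufficiently large constant $C$; the hypothesis $2n<\codeEscProb$ ensures $p<1$.  Since coordinates are independent, $\Pr_{x\flws\cald_1}[\indset{A}(x)=1]=p^{|A|}$, so $\cald_1$ is $m$-wise $(\codeEscProb/(Cn))$-spread, supplying the parameter $q=\Theta(\codeEscProb/n)$ that I will feed to \Cref{thm:sunflower-gen}.  To verify that $\cald_1$ is mostly supported on $f_M^{-1}(1)$, I fix any nonzero $c\in\Fbb_2^n$: the distance hypothesis says $c^\transp M\in\Ccal$ has weight at least $\codeEscProb$, so the probability that $\supp(x)$ misses its support is at most $(1-p)^{\codeEscProb}\leq e^{-Cn}$.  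A union bound over the $2^n-1$ nonzero $c$ yields $\Pr_{\cald_1}[f_M=0]\leq 2^n e^{-Cn}\leq 1/2$.

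For the rejecting distribution $\cald_0$, I sample $u$ uniformly from $\Fbb_2^n\setminus\{0\}$ and output $x$ with $x_i=1$ iff $M_i\cdot u=0$.  Any such nonzero $u$ lies in the left kernel of $M[\supp(x)]$, so $\supp(\cald_0)\sseq f_M^{-1}(0)$.  The key feature of $\cald_0$ is approximate $t$-wise independence: for any $A\sseq[m]$ with $|A|\leq t$, the dual distance hypothesis implies $\{M_i\}_{i\in A}$ are linearly independent, so if $u$ were uniform over all of $\Fbb_2^n$ then $\Pr[\indset{A}(x)=1]=\Pr[M[A]^\transp u=0]=2^{-|A|}$; conditioning on $u\neq 0$ perturbs this only by a multiplicative $1+O(2^{-n})$.

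To bound $r_w$, let $\cals\sseq\binom{[m]}{\ell}$ be an $\ell$-uniform family with $\ell\leq 2w$ and $|\cals|>r^\ell$.  A robust sunflower lemma for the $1/2$-biased product distribution on $\bcube{m}$~\cite{ALWZ21,BellChueluechaWarnke2021,Rao2025} shows that for $r\geq A\log(1/\eps')$ there is a subfamily $\cals'\sseq\cals$ with core $K$ such that $\Pr_{U_{1/2}}[\exists S\in\cals':\indset{S\sm K}(x)=1]>1-\eps'$.  The DNF inside the probability has width at most $2w$ and size at most $r^{2w}$.  By Bazzi's theorem~\cite{Bazzi09,Tal17}, any $t'$-wise independent distribution fools every size-$s$ DNF within error $\eps''$ provided $t'\geq B\log^2(s/\eps'')$.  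Taking $\eps'=\eps''=\alpha m^{-3w}/2$ and $r=\Theta(\log(1/\eps'))=O(w\log m)$, the required independence becomes $t'=O(w^2\log^2 m)$.  Choosing $w=\lfloor\sqrt{t}/(b\log m)\rfloor$ for a sufficiently large constant $b$ forces $t'\leq t$, so $\cals'$ is also a $(\cald_0,\alpha m^{-3w})$-sunflower; hence $r_w=O(w\log m)=O(\sqrt{t})$.

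Substituting $q=\Theta(\codeEscProb/n)$, $r_w=O(\sqrt{t})$ and $w=\lfloor\sqrt{t}/(b\log m)\rfloor$ into \Cref{thm:sunflower-gen}, the side conditions $w\leq t/2$ and $8r_w\leq q\leq r_w n$ reduce to $t\geq 4$, $\codeEscProb\gg n\sqrt{t}$ (necessary for a nontrivial bound anyway), and $\codeEscProb\leq Cn^2\sqrt{t}$.  The resulting lower bound is
\[
\Bigl(\frac{c\alpha q}{r_w}\Bigr)^w
=\Omega\!\left(\frac{\codeEscProb}{n\sqrt{t}}\right)^{\sqrt{t}/(b\log m)},
\]
matching the statement.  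The main technical hurdle is the Bazzi/Tal transfer from $U_{1/2}$ to $\cald_0$: it is responsible for the $\log m$ factor in the denominator of the exponent and for forcing the exponent to scale like $\sqrt{t}$ rather than $t$.  Any improvement of Bazzi's bound tailored to our specific distribution $\cald_0$ would immediately strengthen the lower bound.
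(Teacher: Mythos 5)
Your proposal follows essentially the same strategy as the paper: a spread distribution $\cald_1$ on accepting inputs, a $t$-wise independent distribution $\cald_0$ on rejecting inputs (built from a random $u\in\F_2^n$ via $a_j = \one[\iprod{M_j}{u}=0]$), a Bazzi/Tal transfer to turn the robust sunflower lemma for the uniform distribution into a sunflower bound for $\cald_0$, and a final invocation of \Cref{thm:sunflower-gen} with $w\approx\sqrt t/\log m$. The two design choices you make differently are worth flagging.

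First, you use a $p$-biased product distribution for $\cald_1$ rather than the uniform distribution on a Hamming-weight slice. This is a harmless variant; your union bound over nonzero $c\in\F_2^n$, using that $c^\transp M$ has weight at least $d$, is a crisper route to $\Pr_{\cald_1}[f_M=1]\geq 1/2$ than the paper's chain through Lemma~\ref{lemRandomVectorEscapesSmallSubspaces} and Lemma~\ref{lemMostLargeSetOfColsAreFullRank}, though both rest on the same code-distance fact.

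Second, and this is where there is a genuine (if small) gap: you condition $u$ to lie in $\F_2^n\setminus\{0\}$ so that $\supp(\cald_0)\sseq f_M^{-1}(0)$. After this conditioning, $\cald_0$ is no longer exactly $t$-wise independent, only approximately so, and the version of Bazzi's theorem invoked (\Cref{theoBazziTheorem}) requires exact $T$-wise independence with uniform marginals. You acknowledge the perturbation is multiplicative $1+O(2^{-n})$ but do not carry that error through the Bazzi/Tal bound, which is not a black-box step. The cleaner way — and what the paper does — is to leave $u$ unconditioned (so $\cald_0$ is exactly $t$-wise independent by the dual-distance argument, cf.\ Lemma~\ref{lem:D0ind-F2}), accept that $\Pr_{\cald_0}[f_M=0] = 1-2^{-n}$ rather than $1$, and let the $\alpha$ parameter in \Cref{thm:sunflower-gen} absorb this; the theorem was stated in that generality precisely to avoid the approximation bookkeeping you have opened up. Your gap is easy to close by dropping the conditioning, but as written the Bazzi step does not literally apply.

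Everything else — the spreadness parameter $q=\Theta(d/n)$, the bound $r_w=O(w\log m)=O(\sqrt t)$, the side conditions, and the final exponent $\sqrt t/(b\log m)$ — matches the paper's proof.
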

In Section~\ref{sectExplicitConstructionWellBehavedCodes}, we are going to use this theorem with
\[m \approx n^{3/2}\log n,\text{ and }
  d \approx m-n,\text{ and }
  t \approx n. 
\]
The rest of this section is devoted to prove this theorem.
Henceforth we assume that $\calc$ and $M$ are as in the assumption of the
theorem.
Let us denote by $M_i\in\Fbb^n$ the $i$-th column of $M$ for every $i\in[m]$.
We first prove a few properties related to
the rank of submatrices of $M$.
Some of them are standard facts from coding theory, but we add the proof here for completeness.

\begin{lemma}[Theorem 10 from Chapter 1 of \cite{MacWilliamsSloane77}]
    \label{lemSmallSetOfColsAreLinInd}
    For every $S\subseteq [m]$ with $|S|\leq \codeInd$, 
    the set of columns of $M$ indexed by $S$ is linearly independent.
\end{lemma}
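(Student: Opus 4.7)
The plan is to leverage the dual distance hypothesis from the well-behavedness of $\calc$. Since $M$ is a generator matrix of $\calc$, the duality relation between generator and parity-check matrices tells us that $M$ plays the role of a parity-check matrix for the dual code $\dual{\calc}$; concretely, $\dual{\calc} = \Dcal_M = \{v \in \Fbb^m : Mv = 0\}$. The assumption that $\calc$ is $(d,t)$-well-behaved then translates to the following concrete statement that I would use as the engine of the proof: every nonzero vector $v \in \Fbb^m$ with $Mv = 0$ satisfies $|\supp(v)| > t$.

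The argument then proceeds by contraposition. Suppose some $S \subseteq [m]$ with $|S| \leq t$ indexes a linearly dependent set of columns of $M$. Then there exist coefficients $(\alpha_i)_{i \in S} \in \Fbb^S$, not all zero, such that $\sum_{i \in S} \alpha_i M_i = 0$. I would extend this to a vector $v \in \Fbb^m$ by setting $v_i \defeq \alpha_i$ for $i \in S$ and $v_i \defeq 0$ otherwise. By construction $v$ is nonzero, $\supp(v) \subseteq S$, and $Mv = 0$, so $v$ is a nonzero element of $\Dcal_M = \dual{\calc}$ of Hamming weight at most $|S| \leq t$. This contradicts $\dual{d}(\calc) > t$.

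I do not anticipate any real obstacle: the proof is a one-line translation between two equivalent viewpoints on the same object, namely that a minimal linear dependence among a subset of columns of a parity-check matrix corresponds exactly to a codeword of the same support in the code defined by that parity-check matrix. The only step that requires a brief justification is the identification $\dual{\calc} = \Dcal_M$, which follows directly from the definitions of generator matrix, parity-check matrix, and dual code given earlier in the section.
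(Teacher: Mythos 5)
Your proposal is correct and takes essentially the same route as the paper: extend the hypothetical linear dependence among columns indexed by $S$ to a nonzero $v \in \Fbb^m$ with $Mv = 0$ and $\supp(v) \subseteq S$, identify $\Dcal_M$ with $\dual{\Ccal}$, and conclude from $\dual{d}(\Ccal) > t$ that $|S| \geq |\supp(v)| > t$. The only cosmetic difference is that you frame $M$ explicitly as a parity-check matrix for $\dual{\Ccal}$, whereas the paper verifies $\Dcal_M = \dual{\Ccal}$ directly from the relation $HM^{\transp}=0$; both justifications are the same one-line duality observation.
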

\begin{proof} 
    Let $S$ be a subset of $[m]$ such that the set of columns of $M$ indexed by $S$ is linearly dependent.
    Then there is a nonzero vector $v\in\Fbb^m$ such that $\supp(v) \subseteq S$, and 
    \[\sum_{i\in [m]}v_i M_i = 0.\]
    Thus, $Mv = 0$, and $v\in\Dcal_M$.
    Note that $\Dcal_M = \dual{\Ccal}$,
    since, for any matrix $H\in\matrixset{(m-n)}{m}$ such that $\Ccal = \Dcal_H$,
    we have that 
    \[\dual{\Ccal} = \Ccal_H\text{ and }H\transpose{M} = 0.\]
    By the definition of the distance of $\dual{\Ccal}$ and the fact that $\Ccal$ is $(d,t)$-well-behaved,
    we get that
    \[|S| \geq |\supp(v)| \geq \dual{d}(\Ccal) > t.
        \qedhere
    \]
\end{proof}

\begin{lemma}
    \label{lemRandomVectorEscapesSmallSubspaces}
    For any subspace $V\subseteq \Fbb^n$ of dimension $D < n$,
    and for $u$ being a uniform random column of $M$,
    we have
    \[
    \prob{u}{u\in V} \leq 1-\codeEscProb/m \eqdef \Delta.
    \]
\end{lemma}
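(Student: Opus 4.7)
The plan is to exploit the distance property of $\Ccal$ by translating containment in $V$ into a linear equation defining a single codeword. Since $D = \dim V < n$, the orthogonal complement $\orthvs{V} \subseteq \Fbb^n$ has dimension $n - D \geq 1$, so there exists a nonzero vector $w \in \orthvs{V}$. For any column $u = M_i$ of $M$, the containment $u \in V$ implies $\iprod{u}{w} = 0$, i.e., $(\transp{M}w)_i = 0$.

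Now I would use that $\Ccal$ is $(\codeEscProb,\codeInd)$-well-behaved to control the codeword $\transp{M}w \in \Ccal$. Since $w \neq 0$, this codeword is nonzero, and by the definition of distance we have
\[
    \card*{\setbar{i \in [m]}{(\transp{M}w)_i \neq 0}} = \text{wt}(\transp{M}w) \geq d(\Ccal) > \codeEscProb.
\]
Equivalently, the set of zero coordinates of $\transp{M}w$ has size strictly less than $m - \codeEscProb$.

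Putting the two observations together, since $u = M_i$ for a uniformly random $i \in [m]$,
\[
    \prob{u}{u \in V}
    \leq \prob{i}{(\transp{M}w)_i = 0}
    = \frac{\card*{\setbar{i \in [m]}{(\transp{M}w)_i = 0}}}{m}
    < \frac{m - \codeEscProb}{m}
    = 1 - \frac{\codeEscProb}{m},
\]
which is exactly the claimed bound. There is no real obstacle: the proof is a direct one-line application of the distance hypothesis, once one notices that a single nonzero witness $w \in \orthvs{V}$ suffices (we do not need to union bound over all of $\orthvs{V}$). The assumption $2n < \codeEscProb$ in the theorem is not used here; it enters later when the probabilistic bound is iterated across multiple columns.
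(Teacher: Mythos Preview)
Your proof is correct and is essentially identical to the paper's: both pick a nonzero $w$ orthogonal to $V$, observe that every column in $V$ corresponds to a zero coordinate of the codeword $\transpose{M}w \in \Ccal$, and then invoke the distance bound $d(\Ccal) > \codeEscProb$. The one implicit step (also implicit in the paper) is that $\transpose{M}w \neq 0$, which holds because $M$, being a generator matrix, has full row rank $n$.
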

\begin{proof} 
    As $V$ has dimension $D\leq n-1$,
    there is a nonzero vector $c\in\Fbb^n$ such that, for every $v\in V$, we have $\transpose{v}c = 0$.
    Let $C\subseteq [m]$ be the set of the columns of $M$ that are in $V$, and let 
    \[w\defeq \transpose{M}c\in\Ccal_M = \Ccal.\]
    Note that, for every $i\in C$, we have $w_i = \transpose{M_i}c = 0$.
    Thus, by the definition of the distance of $\Ccal$ and the fact that $\Ccal$ is $(d,t)$-well-behaved, we get that
    \[m - |C| \geq |\supp(w)| \geq d(\Ccal) > d.\]
    and, consequently,
    \[
        \prob{u}{u\in V}
        \leq \frac{|C|}{m}
        \leq 1-\frac{d}{m}.
        \qedhere
    \]
\end{proof}

\begin{lemma}
    \label{lemMostLargeSetOfColsAreFullRank}
    For any $N\defeq N(m)\in\Naturals$ with $N\leq m$, and
    for a random set $S$ distributed uniformly in $\binom{[m]}{N}$,
    we have
    \[\prob{S}{\text{$M[S]$ is not full rank}} \leq 2^n\Delta^N.\]
\end{lemma}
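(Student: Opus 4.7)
The plan is to reduce the event ``$M[S]$ is not full rank'' to a union over hyperplanes and apply Lemma~\ref{lemRandomVectorEscapesSmallSubspaces} to each. Observe that $M[S]$ fails to have full rank iff the columns $\{M_i : i \in S\}$ fail to span $\Fbb^n$, which is the case iff they are contained in some hyperplane $V \subseteq \Fbb^n$ of dimension $n-1$. Since we are working over $\Fbb = \Fbb_2$, the number of such hyperplanes is $2^n - 1 \leq 2^n$, and a union bound gives
\[
\Pr_S[M[S]\text{ is not full rank}] \;\leq\; \sum_{\substack{V \leq \Fbb^n \\ \dim V = n-1}} \Pr_S\!\left[\,\forall i\in S,\; M_i \in V\,\right].
\]

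For each such hyperplane $V$, set $C_V \defeq \{i \in [m] : M_i \in V\}$. By Lemma~\ref{lemRandomVectorEscapesSmallSubspaces} applied to $V$, the fraction of columns of $M$ lying in $V$ is at most $\Delta$; equivalently, $|C_V| \leq \Delta\, m$. The event that every column of $M[S]$ lies in $V$ is exactly $\{S \subseteq C_V\}$, so its probability under the uniform distribution on $\binom{[m]}{N}$ is the standard hypergeometric quantity
\[
\Pr_S[S \subseteq C_V] \;=\; \frac{\binom{|C_V|}{N}}{\binom{m}{N}} \;=\; \prod_{i=0}^{N-1} \frac{|C_V| - i}{m - i} \;\leq\; \left(\frac{|C_V|}{m}\right)^{\!N} \;\leq\; \Delta^N,
\]
where the first inequality uses $(|C_V|-i)/(m-i) \leq |C_V|/m$ whenever $|C_V| \leq m$ and $i \geq 0$.

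Combining the two displays yields $\Pr_S[M[S]\text{ not full rank}] \leq 2^n \Delta^N$, as desired. The argument is essentially routine once we recognise it as a union bound over hyperplanes; the only substantive input is Lemma~\ref{lemRandomVectorEscapesSmallSubspaces}, which itself relies on the distance of $\Ccal$. There is no real obstacle, just the minor technical point of converting the per-column escape probability $\Delta$ into an escape probability $\Delta^N$ for a uniformly random $N$-subset, which the hypergeometric monotonicity $\binom{|C_V|}{N}/\binom{m}{N} \leq (|C_V|/m)^N$ handles cleanly.
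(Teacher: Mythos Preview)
Your proof is correct and essentially identical to the paper's: both reduce to a union bound over the $2^n-1$ hyperplanes of $\Fbb_2^n$, bound $|C_V|\leq \Delta m$ via Lemma~\ref{lemRandomVectorEscapesSmallSubspaces}, and then use $\binom{|C_V|}{N}/\binom{m}{N}\leq (|C_V|/m)^N$. The only cosmetic difference is that the paper cites a Gaussian-binomial formula for the hyperplane count, whereas you state the (equally valid) count $2^n-1$ directly.
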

\begin{proof}
    Let $\linspan(S)\defeq \linspan(\setbar{M_i}{i\in S})$, where $M_i$ is the column of $M$ indexed by $i\in [m]$.
    Note that the rank of $M[S]$ is at most $n-1$ if and only if
    there is a subspace $V\subseteq \Fbb^n$ of dimension $n-1$ such that $\linspan(S)\subseteq V$.
    Hence,
    \begin{align*}
        \prob{S}{\text{$M[S]$ is not full rank}}
        &\leq \prob{S}{\exists V\subseteq \F_2^n\sttext\dim(V)=n-1\text{ and }\linspan(S)\subseteq V} \\
        &\leq \sum_{V\subseteq \F^n_2, \dim(V)=n-1} \prob{S}{S\subseteq V}.
    \end{align*}
    By Proposition 1.7.2 in Stanley~\cite{StanleyVol1SecondEd97}, the number of $n-1$ dimensional subspaces of $\Fbb^n$ is 
    \[
        \frac{(2^n-1)(2^n-2)\cdots (2^n-2^{n-2})}
        {(2^{n-1}-1)(2^{n-1}-2)\cdots (2^{n-1}-2^{n-2})}
        \leq 2^n-1.  
    \]
    For any $(n-1)$-dimensional subspace $V$ of $\Fbb^n$, let $C_V\subseteq [m]$ be the set of columns of $M$ that are contained in $V$.
    By Lemma~\ref{lemRandomVectorEscapesSmallSubspaces}, we know that $c\defeq |C_V|\leq \Delta m$, which implies that
    \begin{align*}
        \prob{S}{S\subseteq V}
        \leq \prob{S}{S\subseteq C_V}
        = \frac{\binom{c}{N}}{\binom{m}{N}}
        \leq \left(\frac{c}{m}\right)^N
        \leq \Delta^N.
    \end{align*}
    Therefore, 
    \begin{equation*}
        \prob{S}{\text{$M[S]$ is not full rank}}
        \leq 2^n\Delta^N.
        \qedhere
    \end{equation*}
\end{proof}

Now our goal is to apply \Cref{thm:sunflower-gen} to prove a lower bound for $f_M$.
So we first define two probability distributions that, intuitively, are hard to distinguish for monotone Boolean circuits of small size.
\begin{definition}
    \label{defDistributionAndError}
    We define the \emph{distribution $a\sim D_1$} by sampling a uniformly
    random $a\in \{0,1\}^m$ of Hamming weight 
    \begin{equation}
        \label{eqChoiceOfHammingWeightN}
        N\defeq n\ceiling{m/\codeEscProb}.
    \end{equation}
    We define the \emph{distribution $a\sim D_0$} by sampling a uniformly
    random $u\in \F_2^n$ and, for every $j\in [m]$, we set
    $a_j \defeq 1$ if $\iprod{M[j]}{u} = 0$
    and
    $a_j \defeq 0$ otherwise, where the inner product is taken over $\F_2$.
\end{definition}

    Note that, by the choice of $N$ and the assumption $2n\leq \codeEscProb$, we have $N < m$ and
    \[1-2^n(1-\codeEscProb/m)^N
      \geq 1-2^ne^{-N\codeEscProb/m}
      \geq 1-2^n2^{-n/\ln 2}
      \geq 1/2.\]
    Thus, by Lemma~\ref{lemMostLargeSetOfColsAreFullRank},
    we have
    \[
        \prob{a\sim D_1}{f_M(a) = 1} \geq 1/2 \eqdef \alpha.
    \]
    Moreover,
    the point $a$ sampled from $D_0$ is an element of
    $f_M^{-1}(0)$ as long as $u \neq 0$,
    so
    \[
    \prob{a\sim D_0}{f_M(a) = 0} = 1 - \frac{1}{2^n}\geq \alpha.
    \]

\paragraph{Spreadness of $D_1$.}

To apply \Cref{thm:sunflower-gen}, we need to show that $D_1$
is ``spread''.

\begin{lemma}[Spreadness of $D_1$]
\label{lem:D1sparse-F2}
The distribution $D_1$ is $N$-wise $(m/N)$-spread.
\end{lemma}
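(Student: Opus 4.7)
\textbf{Proof plan for Lemma~\ref{lem:D1sparse-F2}.}
The plan is to verify the spreadness condition directly from the definition, since $D_1$ is simply the uniform distribution on weight-$N$ strings and the event $\indset{A}(a)=1$ corresponds to the combinatorial event that a uniformly random $N$-subset of $[m]$ contains the fixed set $A$.

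Concretely, fix any $A \subseteq [m]$ with $|A| = k \leq N$. Identifying $a \sim D_1$ with a uniformly random set $T \in \binom{[m]}{N}$, I would compute
\[
\Pr_{a \sim D_1}[\indset{A}(a) = 1]
= \Pr_T[A \subseteq T]
= \frac{\binom{m-k}{N-k}}{\binom{m}{N}}
= \prod_{i=0}^{k-1} \frac{N-i}{m-i}.
\]
Then I would use the elementary inequality $(N-i)/(m-i) \leq N/m$ for all $0 \leq i \leq N-1$, which follows from $N \leq m$, to bound the product by $(N/m)^k = (m/N)^{-k}$. This is exactly the $(m/N)$-spread condition for subsets of size at most $N$, establishing that $D_1$ is $N$-wise $(m/N)$-spread.

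The argument is essentially a one-line calculation, so there is no real obstacle: the only thing worth checking is that the parameter $N$ defined in~\eqref{eqChoiceOfHammingWeightN} satisfies $N \leq m$ (which follows from the assumption $2n \leq d$ since $N = n\lceil m/d \rceil \leq m$ when $d \geq 2n$, as already noted right after Definition~\ref{defDistributionAndError}). No further probabilistic machinery is needed because $D_1$ has no correlations beyond those imposed by the fixed Hamming weight, and the hypergeometric marginals are dominated termwise by the corresponding Bernoulli$(N/m)$ marginals.
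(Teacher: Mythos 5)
Your proof is correct and follows essentially the same route as the paper: compute $\Pr[A\subseteq T]$ for a uniform $N$-set $T$ via the ratio $\binom{m-k}{N-k}/\binom{m}{N}$, expand it as $\prod_{i=0}^{k-1}(N-i)/(m-i)$, and bound each factor by $N/m$. The extra remark about $N \leq m$ is a nice sanity check, but the core calculation matches the paper's.
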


\begin{proof}
Let $T$ be any subset of $[m]$ of size $k\leq N$. 
The number of $N$-sized sets of columns of $M$ which contain the set of columns
indexed by $T$ is $\binom{m-k}{N-k}$. Thus, 
\[
    \prob{a\sim D_1}
    {\bigwedge_{i\in T}a_i = 1}
    \leq \frac{\binom{m-k}{N-k}}{\binom{m}{N}}.
\]
Simplifying the above, we get
\[
    \prob{a\sim D_1}
    {\bigwedge_{i\in T}a_i = 1}
    \leq \left(\frac{\prod_{i=0}^{k-1} (N-i)}{\prod_{i=0}^{k-1}(m-i)}\right)
    \leq \left(\frac{N}{m}\right)^k, 
\]
using the fact that $\frac{N-i}{m-i}\leq \frac{N}{m}$ for $i\leq k-1$. 
\end{proof}

\paragraph{Sunflower bound for $D_0$.}
The second step is to show a sunflower bound for $(D_0,\eps)$-sunflowers.
We first prove that $D_0$ is $\codeInd$-wise independent. 

\begin{lemma}[Independence of $D_0$]
    \label{lem:D0ind-F2}
    The distribution $D_0$ has uniform marginals and is $\codeInd$-wise independent.
\end{lemma}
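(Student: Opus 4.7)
The core observation is that $(a_j)_{j \in T}$ is an affine image of $(\langle M[j], u\rangle)_{j \in T}$, so it suffices to show that this latter tuple is uniformly distributed in $\F_2^T$ whenever $|T| \leq t$. This in turn reduces to showing that the linear map $\phi_T \colon \F_2^n \to \F_2^T$ defined by $\phi_T(u) = (\langle M[j], u\rangle)_{j \in T}$ is surjective, which is the case exactly when the columns $\{M[j] : j \in T\}$ are linearly independent over $\F_2$.

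First, I would fix an arbitrary $T \subseteq [m]$ with $|T| \leq t$ and invoke Lemma~\ref{lemSmallSetOfColsAreLinInd} (which uses that the dual distance of $\calc$ exceeds $t$) to conclude that the columns of $M$ indexed by $T$ are linearly independent. The matrix of $\phi_T$ in the standard bases has these columns as its \emph{rows} (up to transpose), so its rank is $|T|$, and therefore $\phi_T$ is surjective onto $\F_2^T$.

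Next, since $u$ is uniform on $\F_2^n$ and $\phi_T$ is a surjective linear map, $\phi_T(u)$ is uniform on $\F_2^T$: each fiber of $\phi_T$ has the same size $2^{n-|T|}$. Finally, by definition, $a_j = 1 \oplus \langle M[j], u\rangle$ for each $j$, so $(a_j)_{j \in T}$ is the image of $\phi_T(u)$ under the coordinatewise bit-flip $b \mapsto b \oplus 1$, which is a bijection of $\F_2^T$. Hence $(a_j)_{j \in T}$ is uniform on $\{0,1\}^T$, establishing $t$-wise independence. The special case $|T| = 1$ yields uniform marginals.

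No obstacle is anticipated: the entire argument is a direct consequence of the linear independence guaranteed by Lemma~\ref{lemSmallSetOfColsAreLinInd}, combined with the elementary fact that a uniform distribution on the domain of a surjective linear map pushes forward to the uniform distribution on the codomain.
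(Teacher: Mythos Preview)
Your proposal is correct and follows essentially the same approach as the paper: both invoke Lemma~\ref{lemSmallSetOfColsAreLinInd} to get that the columns indexed by $T$ are linearly independent, and then argue that the linear map $u \mapsto (\langle M[j],u\rangle)_{j\in T}$ pushes the uniform distribution on $\F_2^n$ forward to the uniform distribution on $\F_2^T$ (you phrase it via surjectivity and equal-size fibers, the paper via rank--nullity on $M[S]$). Your explicit handling of the bit-flip $a_j = 1 \oplus \langle M[j],u\rangle$ is in fact slightly more careful than the paper's write-up, which silently identifies $a_{j_i}=b_i$ with $\langle u,M[j_i]\rangle = b_i$; the conclusion is unaffected since the probability is the same for every target vector.
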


\begin{proof}
    Fix any set $S\defeq\{j_1, j_2, \ldots, j_\codeInd\}$ of $[m]$. Also fix a vector 
    \[
        b\defeq (b_1, b_2, \ldots, b_\codeInd)\in \F_2^\codeInd.
    \]
    Then
    \[
        \prob{a\sim D_0}{a\restrict{S} = b}
        = \prob{u\in \F_2^n}{(\iprod{u}{M[j_1]} =b_1) \wedge \ldots \wedge (\iprod{u}{M[j_\codeInd]}=b_\codeInd)}
        = \prob{u\in\F_2^n}{\transpose{u}M[S]=b}.
    \]
    Using the rank-nullity theorem, 
    the above probability is the same as 
    \[
        \prob{u\in\F_2^n}{\transpose{u}M[S]=b}
        = \frac{|\ker(M[S])|}{2^n}
        = \frac{1}{2^\codeInd},
    \]
    where the last equality follows from the fact that the rank of $M[S]$ is $\codeInd$ (Lemma~\ref{lemSmallSetOfColsAreLinInd}).
    As the columns of $M$ are nonzero (as $\codeInd > 0$), we can prove that the marginals are uniform: for every $\ell\in [\codeInd]$,
    \[
        \prob{u\in\F_2^n}{\iprod{u}{M[j_{\ell}]}=0}=\frac{1}{2}.
        \qedhere
    \]
\end{proof}

Another important ingredient in our proof is Bazzi's
theorem~\cite{Bazzi09}, which has been further improved by work of
Tal~\cite{Tal17}.
This shows that small DNFs have almost the same behaviour on $k$-wise independent
distributions as they have over truly uniform distributions.

\begin{theorem}[Fooling DNFs with independence: Theorem 7.1~\cite{Tal17}]
    \label{theoBazziTheorem}
    Let $F$ be a DNF with $N$ terms and $\varepsilon\in (0,1)$.
    Then there exists $T \leq O(\log N\cdot \log(N/\varepsilon))$ such that,
    for any distribution $\Dcal$ that is $T$-wise independent with uniform marginals, we have
    \[
    \left|\prob{a\sim \{0,1\}^n}{F(a) = 0} - \prob{a\sim \Dcal}{F(a) = 0} \right|\leq \varepsilon.
    \]
\end{theorem}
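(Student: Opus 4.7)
Since this is quoted from Tal~\cite{Tal17}, I will sketch the standard ``sandwiching polynomials'' route due to Bazzi~\cite{Bazzi09} (with subsequent simplifications by Razborov and quantitative improvements by Tal using Fourier analysis). The plan is to reduce the problem of fooling a DNF $F$ by a $T$-wise independent distribution to the problem of fooling low-degree multilinear polynomials, which $T$-wise independence fools \emph{exactly}.

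More concretely, the first step is to prove the existence of \emph{sandwiching polynomials}: two real multilinear polynomials $P_L, P_U \in \Reals[x_1,\dots,x_n]$, each of degree at most $T = O(\log N \cdot \log(N/\varepsilon))$, such that
\[
P_L(x) \leq F(x) \leq P_U(x) \quad \text{for every } x \in \bcube{n},
\]
and moreover
\[
\Exp_{x \sim \bcube{n}}\sqbrac{P_U(x) - P_L(x)} \leq \varepsilon.
\]
Once such polynomials exist, the proof concludes quickly: if $\Dcal$ is $T$-wise independent with uniform marginals, then $\Exp_{\Dcal}[Q] = \Exp_{\text{unif}}[Q]$ for every multilinear polynomial $Q$ of degree $\leq T$, hence
\[
\Exp_{\Dcal}[F] \leq \Exp_{\Dcal}[P_U] = \Exp_{\text{unif}}[P_U] \leq \Exp_{\text{unif}}[F] + \varepsilon,
\]
and symmetrically using $P_L$ one obtains the matching lower bound, which gives the theorem (applied to the indicator of $F^{-1}(0) = 1 - F$).

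The main obstacle is the construction of the sandwiching polynomials with the claimed degree. I would follow Bazzi's template: first, if $F = T_1 \vee \cdots \vee T_N$ is the DNF, write the canonical inclusion-exclusion expression $F = 1 - \prod_i (1 - T_i)$ as a sum over subsets, then truncate this sum to sets of size $\leq k$ for an appropriate $k = O(\log(N/\varepsilon))$. A union-bound/Chernoff-style argument using $k$-wise independence of random restrictions (or the switching-lemma-style probabilistic argument) shows that the truncation error is small on most inputs, but to turn ``small on most inputs'' into a genuine pointwise sandwich one needs an additional step: at each node one inserts a low-degree approximation of the OR function that is \emph{one-sided} (e.g.\ via Chebyshev polynomials of appropriate parameters for $P_U$, and a dual construction for $P_L$). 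The tight analysis of Tal proceeds more efficiently by bounding the $L_1$ Fourier mass of DNFs above level $T$ (via his refinement of Mansour's conjecture for DNFs), and then converting a small high-level $L_1$ tail directly into sandwiching polynomials of degree $T$; this is what yields the exact quantitative bound $T \leq O(\log N \cdot \log(N/\varepsilon))$ stated in the theorem. The hardest quantitative step is precisely this Fourier tail bound, which is the technical heart of Tal's argument and which I would invoke rather than reprove.
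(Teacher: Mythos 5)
The paper does not prove this statement: it is quoted verbatim from Tal~\cite{Tal17} (Theorem 7.1), which in turn refines Bazzi's theorem~\cite{Bazzi09}, and the paper simply invokes it as a black box. There is therefore no in-paper proof to compare your sketch against.

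That said, your sketch correctly identifies the standard route and its structure. The reduction you describe is right: $T$-wise independence with uniform marginals agrees exactly with the uniform distribution on all multilinear polynomials of degree at most $T$, so fooling $F$ reduces to exhibiting degree-$T$ sandwiching polynomials $P_L \leq F \leq P_U$ with small expected gap (and since $\Pr[F=0] = 1 - \Pr[F=1]$, fooling $F$ and fooling $1-F$ are equivalent, so your remark about the indicator of $F^{-1}(0)$ is fine). Your description of the two routes to the sandwich is also accurate in outline: Bazzi/Razborov proceed via a truncated inclusion-exclusion plus a one-sided correction, whereas Tal's sharper bound goes through bounding the Fourier $L_1$ mass of a DNF above level $T$ (a Mansour-type bound), from which sandwiching polynomials of degree $T$ follow immediately, since a small high-degree $L_1$ tail yields a pointwise approximation by the low-degree truncation. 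Your sketch, appropriately, does not attempt to reprove the Fourier tail bound itself, which is the genuinely hard quantitative content; that is consistent with how the paper treats this as a cited result.

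One minor caveat worth flagging if you were to flesh this out: the inclusion-exclusion truncation alone does not directly give one-sided polynomial bounds without further work (the truncation of $1 - \prod_i(1-T_i)$ is not itself a sandwiching polynomial), so the ``additional step'' you allude to is not optional but is where most of the effort in the Bazzi/Razborov argument lives. Your sketch acknowledges this, so there is no gap in the reasoning as presented, only in the level of detail, which is appropriate for a cited theorem.
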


Finally, to show our sunflower lemma,
we combine Bazzi's theorem with the optimal bounds for ``robust
sunflowers''
recently obtained 
(see, e.g.,~\cite[Lemma 2]{Rao2025}, \cite{BellChueluechaWarnke2021}).
\emph{Robust sunflowers} are $(\mathsf{Uniform}, \eps)$-sunflowers,
where $\mathsf{Uniform}$ denotes the uniform distribution.
Recall the definition of ``sunflower bound'' $r(\cdot, \cdot, \cdot)$ from
\Cref{def:abstract-sunflower}.

\begin{lemma}[Robust Sunflower Lemma: Lemma~2~\cite{Rao2025}]
    \label{lem:RSL}
    For every $k\in\Naturals$ and $\eps\leq 1/2$, 
    we have 
    $r(\mathsf{Uniform}, k, \eps) 
    = O(\log(k/\eps))$.
\end{lemma}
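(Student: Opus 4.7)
The plan is to follow the ``spread implies robust sunflower'' paradigm of Alweiss--Lovett--Wu--Zhang, refined by Rao. Call a $k$-uniform family $\cals \subseteq \binom{[m]}{k}$ \emph{$r$-spread} if, for every nonempty $Y \subseteq [m]$,
$$|\{S \in \cals : Y \subseteq S\}| \leq r^{-|Y|} \cdot |\cals|.$$
The argument splits into a structural reduction to spread families and a probabilistic core lemma about them.

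First, I would prove that any $k$-uniform $\cals$ with $|\cals| \geq r^k$ contains a subfamily $\cals' \subseteq \cals$ together with a common set $K \subseteq \bigcap_{S \in \cals'} S$ such that $\{S \setminus K : S \in \cals'\}$ is a nonempty $(k-|K|)$-uniform $r$-spread family of cardinality at least $r^{k-|K|}$. This is iterative: as long as the current family is not $r$-spread there is some nonempty $Y$ with $|\{S : Y \subseteq S\}| > r^{-|Y|}|\cals|$; one appends $Y$ to a running core and restricts to the family $\{S \setminus Y : S \in \cals,\, Y \subseteq S\}$, which is $(k-|Y|)$-uniform and of cardinality exceeding $r^{k-|Y|}$. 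The process terminates either at a spread family or at the degenerate case $|K| = k$, which is ruled out by the size bound for $r \geq 2$.

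The core of the argument is the \emph{spreading lemma}: if $\cals$ is $k$-uniform and $r$-spread with $r \geq C \log(k/\eps)$ for a sufficiently large absolute constant $C$, then for $x$ drawn from the uniform distribution on $\bcube{m}$,
$$\Pr_x\!\left[\exists S \in \cals : \indset{S}(x) = 1\right] \geq 1 - \eps.$$
Applied to the spread subfamily produced by the reduction, this shows that $\cals'$ (with core $K$) is a $(\mathsf{Uniform}, \eps)$-sunflower in the sense of \Cref{def:abstract-sunflower}, yielding $r(\mathsf{Uniform}, k, \eps) = O(\log(k/\eps))$.

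The main obstacle is the spreading lemma itself. I would prove it by Rao's encoding-style argument: assuming $\Pr_x[\forall S \in \cals : S \not\subseteq x] > \eps$, one constructs an injective encoding of the bad outcomes by sequentially revealing coordinates of $x$ and exploiting the spread property to show that conditioning on the current prefix preserves enough spreadness to make each revealed coordinate informative. Each step saves a $\Theta(\log r)$ factor of description length over the trivial $m$-bit encoding; accumulating these savings along the $k$ elements of some set in $\cals$ and comparing with the size $|\cals| \geq r^k$ forces $r = O(\log(k/\eps))$. Carrying out this bookkeeping carefully (or the equivalent entropy-compression/martingale variant) is the technically delicate step that supplies the tight logarithmic dependence.
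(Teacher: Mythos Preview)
The paper does not prove this lemma at all: it is quoted verbatim as Lemma~2 of~\cite{Rao2025} and used as a black box. Your sketch is a faithful outline of exactly that cited proof --- the reduction to an $r$-spread subfamily with a common core, followed by Rao's encoding/entropy argument that an $r$-spread family with $r \gtrsim \log(k/\eps)$ is hit by a uniformly random set with probability $>1-\eps$ --- so there is nothing to compare and your proposal matches the intended source.
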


We can now show a sunflower lemma for $D_0$.
Recall that $m$ is the number of input bits of the function $f_M$.
\begin{lemma}[Sunflower lemma for $t$-wise independent distributions]
    \label{lem:F2-neg-sunflower}
    There exists a constant $b \geq 1$,
    such that, for
    $w \defeq \sqrt{\codeInd}/(b \log m)$
    and every 
    $k \leq 2w$,
    we have
    \[r(D_0,k,m^{-4w}) = O(w \log m).\]
\end{lemma}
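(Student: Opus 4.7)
The strategy is to combine the robust sunflower lemma (Lemma~\ref{lem:RSL}) for the uniform distribution with Bazzi's theorem (Theorem~\ref{theoBazziTheorem}) to transfer the sunflower property from the uniform distribution to $D_0$. The key leverage is that $D_0$ is $t$-wise independent with uniform marginals (Lemma~\ref{lem:D0ind-F2}), so low-width DNFs with not too many terms cannot tell $D_0$ and the uniform distribution apart.

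Concretely, fix $k \leq 2w$ and let $\cals$ be any $k$-uniform family of sets of size at least $r^k$ where $r = Cw\log m$ for a large constant $C$; by replacing $\cals$ with a subfamily we may assume $|\cals| = r^k$. Setting $\eps' \defeq m^{-4w}/2$ and using $k \leq 2w$, we have $\log(k/\eps') = O(w \log m)$, so by Lemma~\ref{lem:RSL} we can choose $C$ large enough that $\cals$ contains a $(\mathsf{Uniform}, \eps')$-sunflower $\cals' \subseteq \cals$ with core $K$. Now consider the width-$k$ DNF
\[ F \defeq \bigvee_{S \in \cals'} \indset{S \setminus K}, \]
which has at most $|\cals'| \leq r^k = (Cw \log m)^{2w} \leq m^{O(w)}$ terms. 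By Theorem~\ref{theoBazziTheorem}, there is a constant $c_0$ such that any distribution that is $T$-wise independent with uniform marginals fools $F$ with error $\eps'' \defeq m^{-4w}/2$, where
\[ T \leq c_0 \log|\cals'| \cdot \log(|\cals'|/\eps'') = O(w^2 \log^2 m). \]
Choosing $b$ sufficiently large ensures that $T \leq t$ for $w = \sqrt{t}/(b \log m)$, so Lemma~\ref{lem:D0ind-F2} gives
\[ \Pr_{x \sim D_0}[F(x) = 0] \leq \Pr_{x \sim \mathsf{Uniform}}[F(x) = 0] + \eps'' \leq \eps' + \eps'' = m^{-4w}. \]
Since $F(x) = 1$ is exactly the condition that some $S \in \cals'$ has $\indset{S \setminus K}(x) = 1$, this says $\cals'$ is a $(D_0, m^{-4w})$-sunflower, establishing $r(D_0, k, m^{-4w}) \leq r = O(w \log m)$.

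The only place that imposes a real constraint is the parameter balancing in the last step: Bazzi forces the independence level $T$ to grow roughly as $(\log |\cals'|)^2 = O(w^2 \log^2 m)$, which is why we can only afford $w$ up to $\sqrt{t}/\log m$ (and ultimately why the lower bound in Theorem~\ref{theoMonotoneLowerBoundsFromGoodCodes} has exponent $\sqrt{t}/\log m$ rather than $\sqrt{t}$). Everything else — the size bound from Lemma~\ref{lem:RSL} and the transfer from uniform to $D_0$ — is a direct combination of the cited tools and does not pose any additional difficulty.
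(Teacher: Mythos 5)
Your proof is correct and follows essentially the same route as the paper: use the robust sunflower lemma to obtain a $(\mathsf{Uniform},\eps')$-sunflower $\cals'$, then apply Bazzi's theorem (via the $t$-wise independence of $D_0$ from Lemma~\ref{lem:D0ind-F2}) to the width-$k$ DNF $\bigvee_{S\in\cals'}\indset{S\setminus K}$ to transfer the sunflower property from the uniform distribution to $D_0$, with $b$ chosen large enough to keep the required independence level $T = O(w^2\log^2 m)$ below $t$. The only cosmetic difference is that you bound the number of terms by truncating $\cals$ to size $r^k$, while the paper bounds it directly by $\binom{m}{\leq k}\leq m^k$; both yield $\log N = O(w\log m)$ and hence the same $T$.
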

\begin{proof}
    Let $w \defeq \sqrt{\codeInd}/(b \log m)$ for some constant $b$ to be defined
    later, and set $\eps \defeq m^{-4w}$.
    We will show 
    \[r(D_0,k,\eps) \leq r(\mathsf{Uniform}, k, \eps/2).\]
    Indeed, suppose that $\cals$ is an $\eps/2$-robust sunflower with core
    $K$.
    Let $\cals_K = \setbar{S \sm K}{S \in \cals}$.
    Note that $\indfml{\cals_K}$ has width at most $k$ and at most $\binom{m}{\leq k}\leq m^k$ terms.

    We now wish to apply
    \Cref{theoBazziTheorem} on $\indfml{\cals_K}$ with the 
    $\codeInd$-wise independent distribution $D_0$ and 
    approximation parameter $\eps/2$.
    \Cref{theoBazziTheorem} can only be applied when
    \begin{equation*}
        \codeInd \geq D \log (\card{\cals_K}) \log(\card{\cals_K}/(\eps/2)),
    \end{equation*}
    for some constant $D > 0$ given by \Cref{theoBazziTheorem}.
    Now note that
    \begin{equation*}
        D \log (\card{\cals_K}) \log(\card{\cals_K}/(\eps/2))
        \leq O(w^2 (\log m)^2)
        = \frac{1}{b^2} O(\codeInd)
        \leq \codeInd
    \end{equation*}
    for large enough $b$.
    Thus, we can apply \Cref{theoBazziTheorem} and obtain
    \begin{align*}
        \Pr_{x \flws D_0}[\exists S \in \cals : \indset{S \sm K}(x) = 1]
        &= \Pr_{x \flws D_0}[\indfml{\cals_K}(x) = 1] \\
        &\geq \Pr_{x \flws \blt^n}[\indfml{\cals_K}(x) = 1] - \eps/2 \\
        &= \Pr_{x \flws \blt^n}[\exists S \in \cals : \indset{S \sm K}(x) = 1] - \eps/2 \\
        &> 1-\eps.
    \end{align*}
    Thus, the set $\cals$ is a $(D_0,\eps)$-sunflower.
    The result now follows from \Cref{lem:RSL} by observing that
    \begin{equation*}
        r(\mathsf{Uniform},k,\eps/2)
        \leq O(\log(k/\eps))
        \leq O(w \log m).
        \qedhere
    \end{equation*}
\end{proof}

\paragraph{Wrapping up.}

We can now apply \Cref{thm:sunflower-gen}, finishing the proof.
Note that the distributions $D_1$ and $D_0$
defined in \Cref{defDistributionAndError}
satisfy the following properties:
\begin{enumerate}
    \item We have
        $\Pr_{x \flws D_i}[f_M(x)=i] 
        \geq \alpha$
        for every $i \in \blt$;
    \item $D_1$ is $N$-wise $(m/N)$-spread (\Cref{lem:D1sparse-F2});
    \item For $w \defeq \sqrt{\codeInd}/(b \log m)$
        where $b$ is a sufficiently large universal constant (Lemma~\ref{lem:F2-neg-sunflower}),
        we have $r(D_0, k, \alpha m^{-3w})\leq r(D_0, k, m^{-4w})\leq O(w \log m)$
        for every $k \leq 2w$ and sufficiently large $n$.
\end{enumerate}
Therefore, by \Cref{thm:sunflower-gen} and the choice of $N$ (Equation~\ref{eqChoiceOfHammingWeightN}),
we obtain that the monotone complexity of $f_M$ is at least
\begin{equation*}
    \Omega\left(\frac{m}{Nw \log m} \right)^w
    \geq \Omega\left(\frac{\codeEscProb}{nw \log m} \right)^w
    =
    \Omega
    \left( 
        \frac{d}{n \sqrt{t}}
    \right)^{\sqrt{t}/(b \log m)}.
    \qedhere
\end{equation*}
\bnote{Need to choose $w$.}\tnote{$w$ is chosen in the bullet point above}
This finishes the proof of Theorem~\ref{theoMonotoneLowerBoundsFromGoodCodes}.

\subsection{Explicit constructions of well-behaved codes} 
\label{sectExplicitConstructionWellBehavedCodes}

We now construct an explicit well-behaved binary code $\Ccal$ for which we can apply the argument from Section~\ref{sectLowerBoundF2} to prove exponential lower bounds for monotone circuits computing a boolean function in uniform-$\NC^2$.

\begin{lemma}
    \label{lemmaTransformationOfGoodCodesFromLargeAlphabetToBinary}
    Let $n,m,l\in\Naturals$, and let $\Fbb_q$ be a finite field with $q\defeq 2^l$.
    Let $\Lcal\subseteq \Fbb_q^m$ be the linear code generated by a
    matrix $G\in\matrixset[\Fbb_q]{n}{m}$.
    Then there is a linear binary code $\Ccal\subseteq\Fbb_2^s$ generated by a matrix
    $M\in\matrixset[\Fbb_2]{r}{s}$ for $r\defeq ln$ and $s\defeq lm$ such that
    \[       d(\Lcal) \leq        d(\Ccal) \leq l       d(\Lcal)\text{ and } 
      \dual{d}(\Lcal) \leq \dual{d}(\Ccal) \leq l\dual{d}(\Lcal). 
    \]
    Furthermore, if each entry of the generator matrix $G$ for $\Lcal$ can be computed in time $\poly(n, m, l)$,
    then we can compute all the entries of the generator matrix $M$ for $\Ccal$ in time $\poly(n, m, l)$.
\end{lemma}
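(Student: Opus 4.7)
The plan is to use a concrete $\mathbb{F}_2$-basis of $\mathbb{F}_q$ to ``expand'' each symbol of a codeword into $l$ bits, and show that this expansion preserves the dual structure up to replacing the original basis by its trace-dual.

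First, I would fix an $\mathbb{F}_2$-basis $\alpha_1,\dotsc,\alpha_l$ of $\mathbb{F}_q$ and the associated $\mathbb{F}_2$-linear isomorphism $\phi\ffrom\mathbb{F}_q\fto\mathbb{F}_2^l$ defined by $\phi(\sum_i a_i\alpha_i)=(a_1,\dotsc,a_l)$, extended componentwise to $\phi\ffrom\mathbb{F}_q^m\fto\mathbb{F}_2^{lm}$. Then I would set $\Ccal\defeq \phi(\Lcal)$, which is an $\mathbb{F}_2$-linear subspace of $\mathbb{F}_2^{lm}$ of dimension $ln$ since $\phi$ is an $\mathbb{F}_2$-isomorphism. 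A generator matrix $M\in\matrixset[\Fbb_2]{ln}{lm}$ can be obtained by listing the rows $\phi(\alpha_j g_i)$ for $(i,j)\in[n]\times[l]$, where $g_1,\dotsc,g_n$ are the rows of $G$; these rows are clearly $\mathbb{F}_2$-linearly independent because they generate an $\mathbb{F}_2$-subspace of the correct dimension $ln$.

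The distance bound is then immediate from the block structure of $\phi$. For any $x\in\mathbb{F}_q^m$, the vector $\phi(x)\in\mathbb{F}_2^{lm}$ decomposes into $m$ blocks of length $l$; the $i$-th block is $\phi(x_i)$, which is $0$ iff $x_i=0$ and otherwise has Hamming weight between $1$ and $l$. Therefore
\[
    w_q(x)\leq w_2(\phi(x))\leq l\cdot w_q(x),
\]
and minimizing over nonzero codewords yields $d(\Lcal)\leq d(\Ccal)\leq l\cdot d(\Lcal)$.

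The main obstacle is the dual distance, since the $\mathbb{F}_2$-dual of $\phi(\Lcal)$ need not equal $\phi(\Lcal^\perp_q)$ for an arbitrary choice of basis. To handle this, I would use the non-degeneracy of the trace form $\tau(a,b)\defeq \trace_{\mathbb{F}_q/\mathbb{F}_2}(ab)$ to pick a trace-dual basis $\beta_1,\dotsc,\beta_l$ satisfying $\trace(\alpha_i\beta_j)=\delta_{ij}$, and define a second isomorphism $\psi\ffrom\mathbb{F}_q^m\fto\mathbb{F}_2^{lm}$ analogously with the $\beta_j$'s. Then for any $x,y\in\mathbb{F}_q^m$,
\[
    \iprod{\phi(x)}{\psi(y)}_{\mathbb{F}_2^{lm}}
    = \sum_{i\in[m]} \trace(x_iy_i).
\]
Using that $\trace\ffrom\mathbb{F}_q\fto\mathbb{F}_2$ is surjective and $\Lcal$ is $\mathbb{F}_q$-linear, a standard argument shows that the $\mathbb{F}_q$-dual and the trace-dual of $\Lcal$ coincide, so $\Ccal^\perp=\psi(\Lcal^\perp_q)$. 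Since $\psi$ also preserves the block pattern of zero/nonzero coordinates (it is an $\mathbb{F}_2$-isomorphism $\mathbb{F}_q\to\mathbb{F}_2^l$ sending $0$ to $0$), the same two-sided weight inequality as before gives $\dual{d}(\Lcal)\leq\dual{d}(\Ccal)\leq l\cdot\dual{d}(\Lcal)$.

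Finally, for the efficiency claim, I would represent $\mathbb{F}_q$ as $\mathbb{F}_2[X]/(f)$ for an irreducible $f$ of degree $l$, in which addition and multiplication take time $\poly(l)$. Each entry $\alpha_j g_{i,k}$ can thus be computed in time $\poly(l)$, and its $\phi$-expansion into $l$ bits is immediate; iterating over the $\poly(n,m,l)$ entries of $M$ gives the stated time bound.
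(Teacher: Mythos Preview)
Your proof is correct and constructs exactly the same code $\Ccal=\phi(\Lcal)$ as the paper, with the same generator matrix $M$ and the same distance argument via the block structure of $\phi$.

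The genuine difference is in how you handle the dual distance. You identify $\Ccal^\perp$ explicitly as $\psi(\Lcal^\perp_q)$ using the trace-dual basis, reducing the dual-distance bounds to the \emph{same} weight inequality you used for $d(\Ccal)$. The paper instead invokes the characterisation ``$d^\ast(\Ccal)\geq t+1$ iff the uniform distribution on $\Ccal$ is $t$-wise independent'' (MacWilliams--Sloane, Chapter~5, Theorem~8), and transfers $t$-wise independence from $\Lcal$ to $\Ccal$ by noting that uniform in $\Fbb_q$ becomes uniform in $\Fbb_2^l$ under $\phi$. Your route is more self-contained and structural (no external characterisation theorem, and it yields an explicit description of $\Ccal^\perp$), at the modest cost of introducing the trace form and a second basis. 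The paper's route avoids the dual-basis machinery but relies on the black-box MacWilliams--Sloane equivalence. Both arguments are standard in coding theory; yours is arguably the cleaner fit here since it makes the two halves of the lemma symmetric.

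One small remark on the explicitness clause: you should note that an irreducible $f$ of degree $l$ over $\Fbb_2$ can itself be \emph{found} in time $\poly(l)$ (e.g., Shoup's algorithm, which the paper cites); otherwise ``represent $\Fbb_q$ as $\Fbb_2[X]/(f)$'' hides a step.
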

\begin{proof}
    Let $\Bcal\defeq\set{b_1,\dotsc, b_l}$ be a basis of $\Fbb_q$ as an
    $l$-dimensional vector space over $\Fbb_2$.
    Let $\phi\ffrom \Fbb_q\fto\Fbb_2^l$ be the map
    from elements $v\in\Fbb_q$ to their coordinate vector with respect to the basis $\Bcal$:
    that is, for every $v\in\Fbb_q$, let $\phi(v)\in\Fbb_2^l$ be the unique vector such that
    \[v = \sum_{i\in[l]}\phi(v)_ib_i.\]
    Note that $\phi$ is a linear isomorphism between $\Fbb_q$ and $\Fbb_2^l$.
    Let $\gamma\ffrom \Fbb_q^m\fto\Fbb_2^{lm}$ be the following map: for every $v_1,\dotsc,v_m\in\Fbb_q$,
    \[\gamma(v_1,\dotsc,v_m)\defeq (\phi(v_1),\dotsc, \phi(v_m)).\]
    Again note that $\gamma$ is a linear isomorphism between $\Fbb_q^m$ and $\Fbb_2^{lm}$; in particular:
    \[\gamma(0) = 0\text{ and $\gamma$ is a linear bijection}.\]
    Let
    \[\Ccal\defeq\setbar{\gamma(w)}{w\in\Lcal},\]
    and let $g_1,\dotsc,g_n\in\Fbb^m$ be the rows of $G$.
    Note that, for every $w\in\Fbb_q^k$,
    \[\transpose{G}w
    = \sum_{i\in[n]}w_ig_i
    = \sum_{i\in[n]}\sum_{j\in[l]}\phi(w_i)_jb_jg_i,
    \]
    hence 
    \[\gamma(\transpose{G}w)
    = \sum_{i\in[n]}\sum_{j\in[l]}\phi(w_i)_j\gamma(b_jg_i).
    \]
    Let $M\in\matrixset[\Fbb_2]{r}{s}$, for $r\defeq ln$ and $s\defeq lm$,
    be the matrix with the vectors $\gamma(b_jg_i)$ as rows, that is,
    \begin{equation}\label{eqGeneratorMatrixForBinaryGoodCode}
        M \defeq
        \sum_{i\in[n]}\sum_{j\in[l]}e_{(i-1)l + j}\transpose{\gamma(b_jg_i)}
        =
        \begin{bmatrix}
            \transpose{\gamma(b_1g_1)} \\
            \vdots \\
            \transpose{\gamma(b_ng_l)}
        \end{bmatrix},
    \end{equation}
    where $e_1,\dotsc,e_{nl}$ are the standard basis vectors of $\Fbb_2^{nl}$.
    As $\phi$ is a linear isomorphism, we can prove that
    \[\Ccal = \setbar{\transpose{M}w}{w\in\F_2^{nl}}.\]
    Now let us prove the claimed distance bounds for $\Ccal$.
    First note that if $v\in\Ccal$ satisfies $v = \gamma(w) = (\phi(w_1), \dotsc, \phi(w_m))$ for some $w\in\Lcal\subseteq\Fbb_q^m$,
    then, as $\phi(x) = 0$ iff $x = 0$, we have
    \[|\supp(w)|\leq |\supp(v)|,\]
    which implies that $d(\Lcal)\leq d(\Ccal)$,
    and $|\supp(v)|\leq l|\supp(w)|$, which implies that
    \[d(\Ccal)\leq ld(\Lcal).\]
    In order to prove the bound on the dual distance of $\Ccal$, we will use the following characterization of the dual distance of codes:
    \begin{proposition}[Theorem 8 from Chapter 5 of~\cite{MacWilliamsSloane77}] 
        \label{propDualDistanceAndKWiseIndependence}
        We say that a set $S\subseteq\Fbb_q^m$ is $t$-wise independent if,
        for a uniformly chosen $(x_1,\dotsc,x_m)\defeq x\sim\Unif(S)$ and
        for all $I\subseteq [m]$ with $|I|\leq t$,
        the variables $(x_i)_{i\in I}$ are independent and uniformly distributed over $\Fbb_q$.
        Then a linear code $\Ccal\subseteq\Fbb_q^m$ is $t$-wise independent if and only if $t\leq \dual{d}(\Ccal) - 1$.
        \bnote{If we are invoking this theorem, perhaps we could avoid
        proving \Cref{lem:D0ind-F2}, which is just a special case of this one.}
    \end{proposition}
    Now let us prove that $\dual{d}(\Lcal)\leq\dual{d}(\Ccal)$.
    For any $t\leq \dual{d}(\Lcal)-1$, we know that $\Lcal$ is $t$-wise independent over $\Fbb_q$ by Proposition~\ref{propDualDistanceAndKWiseIndependence}.
    This implies that,
    for a uniformly chosen $(x_1,\dotsc,x_m)\defeq x\sim\Unif(\Lcal)$ and
    for $I\subseteq [m]$ with $|I|\leq t$,
    the variables $(x_i)_{i\in I}$ are independent and uniformly distributed over $\Fbb_q$.
    Let
    \[(y_1,\dotsc,y_{lm})\defeq y\sim\Unif(\Ccal).\]
    As $x_i$ for any $i\in I$ is uniformly over $\Fbb_q$ and as $\phi$ is a bijection from $\Fbb_q$ to $\Fbb_2^l$,
    we get that $(y_{i,1},\dotsc, y_{i,l})\defeq\phi(x_i)$ are independent random variables and uniformly distributed over $\Fbb_2$.
    Thus, any subset $J\subseteq I\times [l]$ of the random variables $(y_{i,p})_{i\in I, p\in [l]}$ with $|J|\leq t$ satisfies that
    $(y_j)_{j\in J}$ are independent random variables and uniformly distributed over $\Fbb_2$.
    Hence, $\Ccal$ is $t$-wise independent, and, by Proposition~\ref{propDualDistanceAndKWiseIndependence},
    \[\dual{d}(\Lcal)\leq \dual{d}(\Ccal).\]
    Now suppose that 
    \[\dual{d}(\Ccal) \geq l\dual{d}(\Lcal) + 1.\]
    By Proposition~\ref{propDualDistanceAndKWiseIndependence}, we get that $\Ccal$ is $(l\dual{d}(\Lcal))$-wise independent.
    Again using the fact $\phi$ is a bijection from $\Fbb_q$ to $\Fbb_2^l$,
    we can prove that,
    for a uniformly chosen $(x_1,\dotsc,x_m)\defeq x\sim\Unif(\Lcal)$ and
    for $I\subseteq [m]$ with $|I|\leq \dual{d}(\Lcal)$,
    the variables $(x_i)_{i\in I}$ are independent and uniformly distributed over $\Fbb_q$.
    By Proposition~\ref{propDualDistanceAndKWiseIndependence},
    we obtain that $\dual{d}(\Lcal) + 1\leq \dual{d}(\Lcal)$,
    which is a contradiction.
    Therefore,
    \[\dual{d}(\Ccal) \leq l\dual{d}(\Lcal).\]

    Now let us argue about the explicitness of our construction.
    Note that:
    \begin{itemize}
        \item Element representations (via a basis for $\Fbb_q$) and operations over the field $\Fbb_q$
            can be performed in time $\poly(l)$~\cite{Shoup88}.
        \item For any given $(v_1,\dotsc,v_m)\defeq v\in\Fbb_q^m$,
            the element $\gamma(v)\in\Fbb_2^{lm}$ can be uniformly computed in time $\poly(m, l)$
            as it is just the concatenation of $\phi(v_1), \dotsc, \phi(v_m)\in\Fbb_2^l$ and
            $\phi(v_i)$ is the representation of $v_i$ as an element of $\Fbb_q$.
        \item The generator matrix $M\in\matrixset[\Fbb_2]{ln}{lm}$ (Equation~\ref{eqGeneratorMatrixForBinaryGoodCode})
            for $\Ccal\subseteq\Fbb_2^{lm}$ can be uniformly computed in time $\poly(n, m, l)$.
            \qedhere
    \end{itemize}
\end{proof}

\begin{corollary}
    \label{coroWellBehavedBinaryCodesFromReedSolomon}
    For every $n,m\in\Naturals$ with $m > n$ and for $l\defeq\ceiling{\log_2 m}$,
    there is a binary code $\Ccal\subseteq\Fbb_2^{lm}$ such that
    we can explicitly construct
    a generator matrix $M\in\matrixset[\Fbb_2]{ln}{lm}$ for $\Ccal$
    and
    \[m - n\leq d(\Ccal)        \leq 2l(m-n)\text{ and }
      n    \leq \dual{d}(\Ccal) \leq 2ln.\]
\end{corollary}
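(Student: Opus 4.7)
The plan is to instantiate \Cref{lemmaTransformationOfGoodCodesFromLargeAlphabetToBinary} with a Reed--Solomon code over $\Fbb_q$ for $q \defeq 2^l \geq m$. Since $l = \lceil \log_2 m \rceil$, we have $q \geq m$, which is exactly the condition needed to define a Reed--Solomon code of length $m$ and dimension $n$ over $\Fbb_q$. Concretely, pick $m$ distinct evaluation points $\alpha_1, \dotsc, \alpha_m \in \Fbb_q$ and let $\Lcal \subseteq \Fbb_q^m$ be the code generated by the Vandermonde-type matrix $G \in \matrixset[\Fbb_q]{n}{m}$ with $G_{i,j} = \alpha_j^{i-1}$.

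Reed--Solomon codes are MDS, so $d(\Lcal) = m - n + 1$. Their duals are also MDS of dimension $m-n$, so $\dual{d}(\Lcal) = n + 1$. Applying \Cref{lemmaTransformationOfGoodCodesFromLargeAlphabetToBinary} to this $G$ yields a binary code $\Ccal \subseteq \Fbb_2^{lm}$ with generator matrix $M \in \matrixset[\Fbb_2]{ln}{lm}$ such that
\[
    m - n + 1 \leq d(\Ccal) \leq l(m-n+1)
    \quad \text{and} \quad
    n + 1 \leq \dual{d}(\Ccal) \leq l(n+1).
\]
Since $m > n$ we have $m - n \geq 1$, so $m - n + 1 \leq 2(m-n)$; thus $d(\Ccal) \leq 2l(m-n)$. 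Similarly, $n + 1 \leq 2n$ (for $n \geq 1$), giving $\dual{d}(\Ccal) \leq 2ln$. The lower bounds $m-n \leq d(\Ccal)$ and $n \leq \dual{d}(\Ccal)$ follow immediately.

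For explicitness, operations and element representations in $\Fbb_q$ can be done in time $\poly(l)$ using any standard construction of $\Fbb_{2^l}$ (e.g., via an explicit irreducible polynomial of degree $l$ over $\Fbb_2$), so the entries $\alpha_j^{i-1}$ of $G$ can each be computed in time $\poly(n, m, l)$. The ``furthermore'' clause of \Cref{lemmaTransformationOfGoodCodesFromLargeAlphabetToBinary} then guarantees that $M$ can be computed in time $\poly(n, m, l)$. The only mild obstacle is verifying that an explicit representation of $\Fbb_q$ with $q = 2^l$ is available in $\poly(l)$ time, which is standard (e.g., via~\cite{Shoup88} as already cited in the previous lemma), so no genuinely new difficulty arises beyond invoking the existing machinery.
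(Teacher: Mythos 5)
Your proof is correct and takes essentially the same route as the paper: instantiate \Cref{lemmaTransformationOfGoodCodesFromLargeAlphabetToBinary} with a Reed--Solomon code of length $m$ and dimension $n$ over $\Fbb_q$ with $q = 2^l \geq m$, use the MDS property $d(\Lcal) = m-n+1$ and $\dual{d}(\Lcal) = n+1$, and simplify the resulting bounds via $m-n\geq 1$ and $n\geq 1$. You spell out the Vandermonde generator matrix and the requirement $q\geq m$ for the evaluation points a bit more explicitly than the paper does, but there is no substantive difference.
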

\begin{proof}
    For $q\defeq 2^l$, let $\Lcal\subseteq\Fbb_q^m$ be a Reed-Solomon code over $\Fbb_q$ with generator matrix $G\in\matrixset[\Fbb_q]{n}{m}$. 
    The following facts are standard results in coding theory (e.g., see Chapter 10 of~\cite{MacWilliamsSloane77}):
    \begin{itemize}
        \item Each entry of the matrix $G$ can be constructed in time $\poly(n,m,l)$.
        \item The distance of $\Lcal$ is 
            \[d(\Lcal) = m-n+1.\]
        \item The dual code of $\Lcal$ 
            has distance
            \[\dual{d}(\Lcal) = m - (m - n) + 1 = n + 1.\]
    \end{itemize}
    By Lemma~\ref{lemmaTransformationOfGoodCodesFromLargeAlphabetToBinary},
    there is a binary code $\Ccal\subseteq\Fbb_2^{lm}$ such that
    we can explicitly construct
    a generator matrix $M\in\matrixset[\Fbb_2]{ln}{lm}$ for $\Ccal$,
    and
    \[m-n\leq d(\Lcal) \leq d(\Ccal) \leq ld(\Lcal)\leq 2l(m-n),\text{ and }
      n\leq \dual{d}(\Lcal) \leq \dual{d}(\Ccal) \leq l\dual{d}(\Lcal)\leq 2ln.
      \qedhere
  \]
\end{proof}

\begin{theorem}[Theorem~\ref{thm:main4}]
    For every sufficiently large $n\in\Naturals$, and
    for
    \[m\defeq \ceiling{n^{3/2}(\log n)^2}\text{ and }
      l\defeq \ceiling{\log_2 m},\]
    there is an $lm$-variate monotone Boolean function $f\ffrom\bcube{lm}\fto\bset$ such that the following hold:
    \begin{enumerate}
        \item $f$ can be computed by a uniform Boolean circuit of size polynomial in $n$
            and depth $O(\log lm)^2$.
        \item Any monotone Boolean circuit computing $f$ has size at least $2^{(lm)^{1/3-o(1)}}$.
    \end{enumerate}
\end{theorem}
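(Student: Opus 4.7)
The plan is to instantiate the general machinery developed in Sections~\ref{sectLowerBoundF2} and~\ref{sectExplicitConstructionWellBehavedCodes}, reducing the proof to a short parameter computation. First I would apply Corollary~\ref{coroWellBehavedBinaryCodesFromReedSolomon} with the choices $m \defeq \ceiling{n^{3/2}(\log n)^2}$ and $l \defeq \ceiling{\log_2 m} = \Theta(\log n)$ to obtain an explicit binary code $\Ccal \subseteq \Fbb_2^{lm}$ with generator matrix $M \in \matrixset[\Fbb_2]{ln}{lm}$ satisfying $d(\Ccal)\geq m-n$ and $\dual{d}(\Ccal)\geq n$. The target monotone function is then $f \defeq f_M\ffrom\bcube{lm}\fto\bset$.

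For the upper bound I would invoke Lemma~\ref{lemLowDepthUpperBoundForMatrixFunctions} directly: since every entry of $M$ is uniformly computable in time $\poly(n,m,l)=\poly(n)$ by the explicitness guarantee of Corollary~\ref{coroWellBehavedBinaryCodesFromReedSolomon}, the function $f_M$ admits a uniform Boolean circuit of size $\poly(lm)=\poly(n)$ and depth $O(\log(lm))^2$, as required.

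For the lower bound I would apply Theorem~\ref{theoMonotoneLowerBoundsFromGoodCodes} with its parameters $(n,m,d,t)$ instantiated to $(ln,\,lm,\,m-n,\,n)$. The hypothesis $2(ln)<m-n$ is immediate since $ln=O(n\log n)$ whereas $m-n=\Omega(n^{3/2}(\log n)^2)$. The theorem then yields a monotone lower bound of
\[
\Omega\paren*{\frac{m-n}{ln\sqrt{n}}}^{\sqrt{n}/(b\log(lm))}.
\]
A routine simplification shows the base is $\Theta(\log n)$: the factor $n^{3/2}(\log n)^2$ in the numerator cancels against $ln\sqrt{n}=\Theta(n^{3/2}\log n)$ in the denominator. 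The exponent is $\Theta(\sqrt{n}/\log n)$, so the lower bound is $2^{\Theta(\sqrt{n}\log\log n/\log n)}$.

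The last step is to rewrite this in the form claimed in the statement. Setting $X \defeq \sqrt{n}\log\log n/\log n$, one computes
\[
\frac{\log X}{\log(lm)} \;=\; \frac{(1/2)\log n+\log\log\log n-\log\log n}{(3/2)\log n+O(\log\log n)} \;\longrightarrow\; \frac{1}{3} \quad \text{as } n\to\infty,
\]
so $X\geq (lm)^{1/3-o(1)}$ and the lower bound can be written as $2^{(lm)^{1/3-o(1)}}$. The only real obstacle in this plan is precisely this $o(1)$ accounting, verifying that the $\log$ losses absorbed into both the base and the exponent of the Razborov-style lower bound fit into the claimed $o(1)$ slack in the final exponent; all the substantive difficulty---constructing the well-behaved matrix and running the sunflower/Bazzi-based approximation argument---has already been discharged in the preceding sections.
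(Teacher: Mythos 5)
Your proposal is correct and follows essentially the same route as the paper's proof: invoke Corollary~\ref{coroWellBehavedBinaryCodesFromReedSolomon} to obtain the explicit generator matrix $M\in\matrixset[\Fbb_2]{ln}{lm}$, get the upper bound from Lemma~\ref{lemLowDepthUpperBoundForMatrixFunctions}, and get the lower bound by plugging $(n',m',d',t')=(ln,lm,\Theta(m),\Theta(n))$ into Theorem~\ref{theoMonotoneLowerBoundsFromGoodCodes}, then carrying out the asymptotic bookkeeping to match the $2^{(lm)^{1/3-o(1)}}$ form. The only cosmetic difference is that the paper sets $d\defeq d(\Ccal)-1$ and $t\defeq\dual d(\Ccal)-1$ so that the strict inequalities in the definition of $(d,t)$-well-behaved hold verbatim, whereas you set $d=m-n$, $t=n$; this is harmless here because the Reed--Solomon distances are exactly $m-n+1$ and $n+1$, but it is cleaner to take the $-1$ so the argument does not rely on that exact-value fact.
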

\begin{proof}
    By Corollary~\ref{coroWellBehavedBinaryCodesFromReedSolomon},
    there is a binary code $\Ccal\subseteq\Fbb_2^{lm}$ such that
    we can explicitly construct
    a generator matrix $M\in\matrixset[\Fbb_2]{ln}{lm}$ for $\Ccal$
    and
    \begin{equation}
        \label{eqBoundsOnDistancesOfGoodCode}
        m-n\leq d(\Ccal) \leq 2l(m-n)\text{ and }
        n \leq \dual{d}(\Ccal)\leq 2ln.
    \end{equation}
    Let
    $f\defeq f_M$ be the $lm$-variate Boolean function corresponding to $M$.
    By Lemma~\ref{lemLowDepthUpperBoundForMatrixFunctions}, the function $f$
    can be computed by uniform Boolean circuit of size $\poly(lm)$ and depth $O(\log lm)^2$.
    Let 
    \[d\defeq d(\Ccal)-1\text{ and } t\defeq \dual{d}(\Ccal)-1.\] 
    Note that, for sufficiently large $n$,
    \[d\geq 2nl.\]
    By Theorem~\ref{theoMonotoneLowerBoundsFromGoodCodes}, we obtain that
    the monotone complexity $S^+(f)$ of $f$ is at least
    \[\Omega\left(\frac{\codeEscProb}{lnw \log (lm)} \right)^w,\]
    for $w \defeq \sqrt{\codeInd}/(b \log (lm))$.
    By Equation~\ref{eqBoundsOnDistancesOfGoodCode}, we get that
    \[\Omega(m^{1/3}/(\log m)^2)
      \leq w
      \leq O(n^{1/2}/(\log m)^{1/2}).
    \]
    Therefore,
    \[S^+(f)
    \geq \Omega\left(\frac{\codeEscProb}{lnw \log (lm)} \right)^w
    \geq \Omega\left(\frac{m}{n^{3/2}(\log m)^{3/2}} \right)^w
    \geq \Omega(\log m)^{w/2}
    \geq 2^{\Omega(m^{1/3}\log\log m/(\log m)^2)}.
    \qedhere
    \]
\end{proof}

\section{Mixed separation}
\label{sec:thm3}

This section uses some ideas from Section~\ref{sec:thm2} to prove a lower bound for $f_M$ when $M$ is a real matrix. We get weaker (but still superpolynomial) lower bounds for the size of monotone circuits.

\subsection{Nonmonotone low-depth circuit upper bounds}

Let us first prove the upper bound part of Theorem~\ref{thm:main3}.
\begin{lemma}[Upper bound]
    \label{lemmaAlgebraicUBReals}
    For any matrix $A\in\Reals^{n\times m}$,
    the polynomial $P_A\in \mathbb{R}[x_1,\ldots, x_m]$, as defined in Equation~\ref{eqDefinitionHomLinPoly},
    is computed by an arithmetic circuit of size $\poly(m)$ and depth $O(\log m)^2.$
\end{lemma}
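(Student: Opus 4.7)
The plan is to express $P_A$ as a single determinant using the Cauchy--Binet formula, and then invoke a standard parallel algorithm for the determinant.

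First I would write $P_A(x) = \det\bigl( A\,\diag(x)\,A^{\transp} \bigr)$. To see this, let $D(x) \defeq \diag(x_1,\dotsc,x_m)$ and apply Cauchy--Binet to the product $A \cdot (D(x) A^{\transp})$, both of which have the compatible shapes $n \times m$ and $m \times n$:
\[
\det\bigl(A\,D(x)\,A^{\transp}\bigr)
= \sum_{\substack{S \subseteq [m] \\ |S|=n}} \det(A[S]) \cdot \det\bigl((D(x) A^{\transp})[S,:]\bigr).
\]
The $S$-indexed row-restriction of $D(x) A^{\transp}$ factors as $D_S(x) \cdot A[S]^{\transp}$, where $D_S(x) = \diag(x_s)_{s \in S}$, and so its determinant equals $\bigl(\prod_{s \in S} x_s\bigr)\cdot \det(A[S])$. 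Substituting and comparing with the definition of $P_A$ in Equation~\eqref{eqDefinitionHomLinPoly} gives the claimed identity.

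Next I would build an arithmetic circuit for the $n\times n$ matrix $M(x) \defeq A\,D(x)\,A^{\transp}$. Each entry is $M(x)_{i,j} = \sum_{k=1}^{m} A_{ik} A_{jk}\, x_k$, a linear form in $x$ with at most $m$ terms and constant coefficients from $A$. All $n^2$ entries can be computed in parallel by a circuit of size $O(n^2 m) \leq \poly(m)$ and depth $O(\log m)$ (fan-in-two additions of the $m$ summands).

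Finally I would compose the circuit above with a parallel determinant routine. By Csanky's algorithm~\cite{Csanky1976} or Berkowitz's algorithm~\cite{Berkowitz84} (cited earlier in the introduction), the determinant of an $n \times n$ matrix of polynomials each of size $s$ and depth $d$ can be computed by an arithmetic circuit of size $\poly(n) \cdot s$ and depth $d + O(\log^2 n)$. Applying this to $M(x)$ yields a circuit for $P_A(x) = \det(M(x))$ of total size $\poly(m)$ and depth $O(\log m)^2$, as required. There is no real obstacle here beyond recognising the Cauchy--Binet identity; the rest is bookkeeping of sizes and depths.
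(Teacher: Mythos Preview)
Your proposal is correct and follows essentially the same approach as the paper: both express $P_A(x) = \det(A\,\diag(x)\,A^{\transp})$ via the Cauchy--Binet formula and then appeal to a standard parallel determinant algorithm (the paper cites~\cite{Berkowitz84, MV97}) to obtain the size and depth bounds. Your additional accounting of the size and depth for computing the entries of $A\,\diag(x)\,A^{\transp}$ is a nice touch but not a substantive departure.
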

\begin{proof}
    Let 
    \[I_X \defeq\sum_{i=1}^m x_ie_i\transpose{e_i}\in\Reals[X]^{m\times m}\]
    be the $m$-dimensional identity matrix with diagonal elements replaced by the variables
    $x_1,\dots, x_m$.
    By the Cauchy-Binet formula, we have
    \[\det((AI_X)A^T)
      = \sum_{\substack{S\subseteq [m]:\\ |S| = n}} \det((AI_X)[[n],S])\det(A^T[S,[n]])
      = \sum_{\substack{S\subseteq [m]:\\ |S| = n}} \det((AI_X)[[n],S])\det(A[S]),\]
    and, by the block structure of $I_X$ and the multiplicative property of the determinant,
    \[\det((AI_X)[[n], S])
      = \det(A[[n], S]I_X[S, S])
      = \det(A[[n], S])\det(I_X[S, S])
      = \det(A[S])\prod_{i\in S}x_i.\]
    Hence,
    \[\det((AI_X)A^T)
      = \sum_{\substack{S\subseteq [m]:\\ |S| = n}} \left(\det(A[S])\prod_{i\in S}x_i\right)\det(A[S])
      = P_A(x_1,\dotsc, x_m).\]
    By efficient computation of the determinant of a symbolic matrix~\cite{Berkowitz84, MV97}, we obtain that the polynomial
    $P_A = \det(AI_XA^T)$ can be computed by an arithmetic circuit of size $\poly(m)$ and depth $O((\log m)^2)$.
\end{proof}

\subsection{Choice of a well-behaved $\Reals$-matrix}

Let $k\defeq k(n)\in\Naturals$ be a growing function, which will be specified later, such that $k\leq n^{0.1}$.
Let 
\begin{equation}
    \label{eq:params-reals}
    m\defeq n^2
    \quad
    \quad
    \text{and}
    \quad
    \quad
    s\defeq 200k^2.
\end{equation}
\bnote{It would be nice to have a discussion for why we need this value of
$s$.}
Let $M_n$ be an $n\times m$ random matrix obtained by sampling each column
$M_n[i]$ of $M_n$ independently and uniformly at random from the set of
vectors in $\{0,1\}^n$ of Hamming weight at most $s$.

For technical reasons, it is nicer to sample the matrix using points of Hamming weight \emph{at most} $s$, though the following lemma shows that this is not very different from sampling points of weight close to $s.$
\begin{lemma}
    \label{lem:M-suppsize}
    With probability at least $1-1/n,$ every column of $M_n$ has Hamming weight at least $s/2.$ 
\end{lemma}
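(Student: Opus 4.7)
The plan is to apply a union bound over the $m = n^2$ columns. Since the columns are i.i.d., it suffices to show that a single column $X$ has probability at most $1/n^3$ (or any fixed polynomial in $1/n$) of having Hamming weight less than $s/2$. By construction, $X$ is uniformly distributed on the Hamming ball $B \defeq \{v \in \{0,1\}^n : |v| \leq s\}$, so
\[
\Pr[|X| < s/2] \;=\; \frac{\sum_{w=0}^{s/2-1}\binom{n}{w}}{\sum_{w=0}^{s}\binom{n}{w}}.
\]

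The main quantitative ingredient is that the binomial coefficients $\binom{n}{w}$ grow very rapidly over the range $w \in \{0,1,\dotsc,s\}$. Concretely, using $s = 200k^2 \leq 200n^{0.2}$ and the identity $\binom{n}{w+1}/\binom{n}{w} = (n-w)/(w+1)$, for every $w \leq s$ and sufficiently large $n$ this ratio is at least $n/(4s) \geq n^{0.8}/800$. From this one gets both a lower bound on the denominator, $\sum_{w \leq s}\binom{n}{w} \geq \binom{n}{s}$, and the estimate
\[
\frac{\binom{n}{s}}{\binom{n}{s/2}} \;=\; \prod_{j = s/2+1}^{s} \frac{n-j+1}{j} \;\geq\; \left(\frac{n^{0.8}}{800}\right)^{s/2}.
\]
The numerator $\sum_{w < s/2}\binom{n}{w}$ is at most twice its largest term by the same geometric-series argument. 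Combining these yields
\[
\Pr[|X| < s/2] \;\leq\; 2 \cdot \left(\frac{800}{n^{0.8}}\right)^{s/2}.
\]
Since $k = k(n)$ is a growing function, $s/2 = 100k^2 \geq 100$ for all sufficiently large $n$, so the right-hand side is at most $C \cdot n^{-80}$ for an absolute constant $C$.

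Finally, a union bound over the $m = n^2$ columns yields $\Pr[\exists \text{ column with weight} < s/2] \leq C \cdot n^{-78} \leq 1/n$ for large enough $n$, which is the desired conclusion (and the boundary small-$n$ cases can be handled by adjusting constants or absorbing into a finite set of exceptions). I do not expect any serious obstacle: the argument is a clean calculation exploiting that $s$ is much smaller than $n$, so the uniform distribution on $B$ is concentrated near its boundary at weight $\approx s$.
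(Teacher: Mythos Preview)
Your proposal is correct and follows essentially the same approach as the paper: union-bound over the $m=n^2$ columns, then bound the single-column probability by exploiting the rapid growth of $\binom{n}{w}$ for $w\in[0,s]$ when $s=O(n^{0.2})$. The only cosmetic difference is that the paper uses the cruder estimates $\binom{n}{i}\leq n^i$ and $\binom{n}{\leq s}\geq(n/s)^s\geq n^{0.7s}$, while you work with the consecutive ratio $(n-w)/(w+1)$; both lead to the same conclusion.
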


\begin{proof}
    Note that $s = 200k^2 = O(n^{0.2})$.
    We note that $\binom{n}{i} \leq n^{i}$ for $i\in [n]$, and use the bound on $s$ to conclude
    \[
    \binom{n}{\leq s} \geq \binom{n}{s} \geq \left(\frac{n}{s}\right)^s \geq n^{0.7s} .
    \]
    Putting things together, the probability that any fixed column $M_n[j]$ has Hamming weight less than $s/2$ is at most
    \[
        \sum_{i=0}^{s/2} \frac{\binom{n}{i}}{\binom{n}{\leq s}}
        \leq \frac{(s/2+1)\cdot n^{s/2}}{n^{0.7s}}\leq \frac{n}{n^{0.2s}}.
    \]
    Union bounding over $m = n^2$ columns and using the fact that $s\geq 200$,
    the probability that there is a column of Hamming weight at most $s/2$ is at most $1/n$.
\end{proof}

To construct our hard distributions, the following lemma will be important. 
\begin{lemma}[Full rank lemma]
    \label{lem:M-prop1-R}
    If $S$ is distributed uniformly in ${[m] \choose 10n\log n}$ and independently of $M_n$, then 
    \[
        \prob{M_n}{\prob{S}{\text{$M_n[S]$ is full rank}} \geq 1/10} \geq 1/10.
    \]
\end{lemma}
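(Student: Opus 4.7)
The plan is to establish the stronger statement that $\Pr_{M_n, S}[M_n[S] \text{ is full rank}] \geq 7/10$; the lemma then follows from a second Markov step, since if $Y(M_n) := 1 - \Pr_S[M_n[S] \text{ is full rank}] \in [0,1]$ has $\mathbb{E}[Y] \leq 3/10$, then $\Pr_{M_n}[Y \geq 9/10] \leq (3/10)/(9/10) = 1/3$, i.e.\ $\Pr_{M_n}[\Pr_S[M_n[S] \text{ is full rank}] > 1/10] \geq 2/3 > 1/10$.

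For the stronger statement, I first use symmetry: since the columns of $M_n$ are i.i.d.\ and $S$ is independent, the columns of $M_n[S]$ are distributed as $\tau := 10n\log n$ independent samples $X_1, \ldots, X_\tau$ from the uniform distribution on $\{v \in \{0,1\}^n : |v| \leq s\}$. Let $V_t = \operatorname{span}(X_1, \ldots, X_t)$, $d_t = \dim V_t$, and $T = \min\{t : d_t = n\}$; the goal becomes $\Pr[T \leq \tau] \geq 7/10$. By the Hamming-ball-subspace bound highlighted in the Introduction (the number of 0/1 vectors of Hamming weight at most $s$ in any subspace of dimension $d$ is at most $\binom{d}{\leq s}$), conditional on $d_t = d < n$, the next column lies in $V_t$ with probability at most $\binom{d}{\leq s}/\binom{n}{\leq s}$, so the progress probability at dimension $d$ satisfies $p_d \geq 1 - \binom{d}{\leq s}/\binom{n}{\leq s}$.

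To estimate $p_{n-j}$, I will use the Vandermonde identity $\binom{n}{\leq s} = \sum_{k \geq 0}\binom{j}{k}\binom{n-j}{\leq s-k}$: retaining only the $k = 0, 1$ terms gives $\binom{n}{\leq s}/\binom{n-j}{\leq s} \geq 1 + j \cdot \binom{n-j}{\leq s-1}/\binom{n-j}{\leq s}$, and a direct computation (starting from $\binom{m}{s}/\binom{m}{s-1} = (m-s+1)/s$) yields $\binom{m}{\leq s-1}/\binom{m}{\leq s} \geq s/(m+1)$. Combining, $p_{n-j} \geq js/(n+1+js)$, which is at least $js/(2(n+1))$ when $js \leq n$ and at least $1/2$ when $js > n$. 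Summing the expected geometric waiting times gives
\[
\mathbb{E}[T] = \sum_{j=1}^n 1/p_{n-j} \leq \sum_{j \leq n/s} \frac{2(n+1)}{js} + \sum_{j > n/s} 2 \leq \frac{2(n+1)}{s}\log(n/s) + 2n \leq 3n\log n
\]
for all sufficiently large $n$, and Markov's inequality gives $\Pr[T > \tau] \leq 3/10$, as needed. The main technical step is the combinatorial estimate $\binom{d}{\leq s}/\binom{n}{\leq s} \leq (n+1)/(n+1+(n-d)s)$; once this is established, the symmetry reduction and the two applications of Markov's inequality are routine.
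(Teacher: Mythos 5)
Your proof is correct and follows essentially the same route as the paper: reduce (by Markov) to bounding the expected waiting time for $10n\log n$ i.i.d.\ samples from $\{0,1\}^n_{\le s}$ to span $\mathbb{R}^n$, control the escape probability at each dimension via the Hamming-ball/subspace bound $|V\cap\{0,1\}^n_{\le s}|\le\binom{d}{\le s}$, and apply Markov again. The only real difference is your binomial estimate: via the Vandermonde identity you get $p_{n-j}\ge js/(n+1+js)$ and hence $\mathbb{E}[T]=O(n)$, which is sharper than the paper's $p_{n-j}\ge j/(n+1)$ and $\mathbb{E}[T]\le 2n\log n$; since both comfortably beat $10n\log n$, the sharpening is harmless but unnecessary.
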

\begin{proof}
    The idea is to carry out an argument analogous to the solution to the coupon-collector's problem to show that for each $S\in \binom{[m]}{10n\log n}$, we have
    \begin{equation}
        \label{eq:Mn-rand-S} 
        \prob{M_n}{\text{$M_n[S]$ is full rank}} \geq \frac{4}{5}.
    \end{equation}
    The claim of the lemma then follows from Equation~\ref{eq:Mn-rand-S} by the following averaging argument.
    Define the random variables
    \[X \defeq \prob{S\in \binom{[m]}{n}}{\text{$M_n[S]$ is full rank}},\text{ and }
    Y\defeq 1 - X.
    \]
    Thus
    \begin{equation*} 
        \begin{split}
            \mathbb{E}_{M_n}[X]
            &= \mathbb{E}_{M_n}\left[ \frac{1}{\binom{m}{n}} \sum_{S\subseteq [m] : |S|=n} \mathbbm{1}_{\rank(M_n[S])=n}\right] \\
            &= \frac{1}{\binom{m}{n}}\sum_{S\subseteq [m] : |S|=n} \prob{M_n}{\rank(M_n[S])=n}
            \geq \frac{4}{5},
        \end{split}
    \end{equation*}
    and $\mathbb{E}_{M_n}[Y]\leq \frac{1}{5}$.
    By Markov's inequality, we get
    \[
        1-\prob{M_n}{X\geq \frac{1}{10}}
        = \prob{M_n}{Y\geq \frac{9}{10}}
        \leq \frac{10\Exp_{M_n}[Y]}{9}
        \leq \frac{2}{9},
    \]
    which implies that
    \[\prob{M_n}{X\geq \frac{1}{10}}\geq \frac{7}{9} > \frac{1}{10}.\]

    Now let us prove the probability bound in Equation~\ref{eq:Mn-rand-S}.
    Assume, without loss of generality, that $S = \{1,\ldots, 10n\log n\}$.
    Assume that we choose an infinite sequence of vectors $\set{v_i}_{i\in\Naturals}$ of vectors independently and uniformly at random $\{0,1\}^n_{\leq s}$.
    For each $d \leq n$, let $X_d$ denote the smallest $r$ such that $v_1,\ldots, v_r$ span a subspace of dimension $d$.
    It suffices to show that $\mathbb{E}X_n \leq 2n\log n$, because Equation \ref{eq:Mn-rand-S} then follows by Markov's inequality,
    since we can interpret the definition of $M_n$ as choosing the vectors $v_1,\cdots, v_{10n\log n}$ as $M_n$'s columns.
    For $X_0 \defeq 0$, we have that $X_n \defeq \sum_{d=0}^{n-1} X_{d+1}-X_{d}$, which implies that
    \[\mathbb{E}X_n = \sum_{d=0}^{n-1} \mathbb{E}[X_{d+1}-X_{d}].\]
    To compute $\mathbb{E}[X_{d+1}-X_{d}]$, we condition on the value $X_d=r$ and the vectors $v_1,\ldots, v_r$ which span a vector space $V$ of dimension $d.$ Let $p_d$ denote the probability that a uniformly random $v\in \{0,1\}^n_{\leq s}$ lies outside $V$. By the subspace lemma (Lemma~\ref{lem:intersect-space-R}) proved below, we have
    \[
    1-p_d \leq \frac{\binom{d}{\leq s}}{\binom{n}{\leq s}} \leq \frac{d+1}{n+1}
    \]
    where the latter inequality follows from the following simple binomial estimate:
    \[
    \binom{d}{\leq s} = 1 + d + \sum_{i=2}^s\binom{d}{i} \leq 1+ d+\sum_{i=2}^s\frac{d+1}{n+1}\cdot \binom{n}{i}=\frac{d+1}{n+1}\cdot \binom{n}{\leq s}.
    \]
    It follows that $X_{d+1}-X_d$ (conditioned on $v_1,\ldots, v_r$) has a geometric distribution with success probability $p_d\geq (n-d)/(n+1)$ and thus has expectation at most $(n+1)/(n-d)$. 
    Hence
    \[
    \mathbb{E}X_n \leq \sum_{d=0}^{n-1} \frac{n+1}{n-d} \leq 2n\log n. 
    \]
    We have thus shown the desired upper bound on $\mathbb{E}X_n$, which completes the proof of the lemma.
\end{proof}

The above proof used the following lemma, which show that a uniformly randomly chosen $v\in \{0,1\}^n_s$ cannot lie in any proper subspace of $\mathbb{R}^n$ with high probability.

\begin{lemma}[Subspace lemma]\label{lem:intersect-space-R}
    Let $\F$ be any field and let $V$ be a subspace of $\F^n$ of dimension $d$, and $\{0,1\}^n_{\leq s}$ be the set of all binary strings of Hamming weight at most $s$. Then, 
    \[
        \lvert V\cap\{0,1\}^n_{\leq s} \lvert \leq \binom{d}{\leq s}.
    \]
\end{lemma}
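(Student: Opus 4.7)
The plan is to use the same projection trick that gives the classical bound $|V \cap \{0,1\}^n| \leq 2^d$, but carefully track how Hamming weight behaves under the projection.

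Since $V \subseteq \F^n$ has dimension $d$, standard linear algebra gives a set of coordinates $S \subseteq [n]$ with $|S| = d$ such that the projection
\[
\pi_S \colon V \to \F^S, \qquad v \mapsto v|_S
\]
is a linear isomorphism. (Concretely, choose a $d \times n$ matrix whose rows form a basis of $V$ and pick $d$ linearly independent columns.)

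The key observation is that $\pi_S$ interacts nicely with the set $\{0,1\}^n_{\leq s}$:
\begin{enumerate}
    \item Since $\pi_S$ is injective on $V$, it is in particular injective on the subset $V \cap \{0,1\}^n_{\leq s}$.
    \item For any $v \in \{0,1\}^n$, the projection $\pi_S(v)$ lies in $\{0,1\}^S$.
    \item Restricting to a subset of coordinates can only decrease Hamming weight, so if $v$ has weight at most $s$, then so does $\pi_S(v)$.
\end{enumerate}
Combining these three facts, $\pi_S$ maps $V \cap \{0,1\}^n_{\leq s}$ injectively into $\{0,1\}^S_{\leq s}$, which has size $\binom{d}{\leq s}$. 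This gives the desired bound
\[
|V \cap \{0,1\}^n_{\leq s}| \leq |\{0,1\}^S_{\leq s}| = \binom{d}{\leq s}.
\]

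There is no real obstacle here; the only subtlety is noticing that the classical injective-projection argument automatically respects Hamming weight, so the refined bound drops out with no extra work. This also makes the tightness transparent: when $V$ is spanned by $d$ standard basis vectors, the projection onto those coordinates is a bijection between $V \cap \{0,1\}^n_{\leq s}$ and $\{0,1\}^d_{\leq s}$.
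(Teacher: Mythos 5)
Your proof is correct and is essentially the same as the paper's: you both identify a set $S$ of $d$ coordinates on which $V$ projects injectively, and observe that the projection of a $\{0,1\}$-vector of weight $\leq s$ lands in $\{0,1\}^S_{\leq s}$. The paper realizes the coordinate set $S$ via a column-echelon (pivot) basis obtained by Gaussian elimination, which is just a concrete way of picking your $d$ linearly independent columns.
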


We note that the above lemma is tight, as witnessed by a subspace $V$ generated by any $d$ standard basis vectors.
\begin{proof}
    As $V$ has dimension $d$,
    there is a basis $B$ of $V$ such that
    there is a set $R\in\binom{[n]}{d}$ indexing the elements of $B$ as
    \[B = \setbar{b^{(r)}}{r\in R} \]
    such that, for every $r\in R$ and $v\in B$,
    \[\begin{cases}
        v_r = 1 & \text{if $v = b^{(r)}$, and} \\
        v_r = 0 & \text{otherwise.}
      \end{cases}\]
    Note that we can find such a basis via Gaussian elimination over any arbitrary basis of $V$,
    as the set $R$ above corresponds to the rows of the pivot element of the matrix in column echelon form.
    For every $v\in X\defeq V\cap \{0,1\}^n_{\leq s}$, let $\alpha_v\in\Reals^B$ be the unique vector such that
    $v = \sum_{b\in B}\alpha_{v, b}b$.
    So, for every $r\in R$,
    \[v_r
    = \sum_{b\in B} \alpha_{v,b}b_r
    = \alpha_{v,b^{(r)}},\]
    which implies that $\supp(v)\cap R$ uniquely determines $\alpha_v$ (as $v_r\in\bset$),
    and, consequently, uniquely determines $v$.
    As $|\supp(v)\cap R|\leq |\supp(v)|\leq s$ for every $v\in X$, we get that
    \[|X|\leq \card*{\binom{|R|}{\leq s}} = \binom{d}{\leq s}.
    \qedhere
    \]
\end{proof}

We also need the following lemma to show that, with high probability,
most columns in an arbitrary small tuple of columns of $M_n$ have a small
fraction of their support intersecting the union of the support of their
preceding columns.

\begin{definition}
    \label{def:k-contained}
    For $\tau\defeq (i_1,\ldots, i_t)$ a tuple of distinct elements of $[m]$,
    we say that $i_j$ is \emph{$c$-contained w.r.t. $\tau$} if
    the set $\supp(M_n[i_j])$ has at least $c$ elements in common with $\bigcup_{p<j} \supp(M_n[i_p])$.
\end{definition}
\begin{lemma}
\label{M-prop2-R}
    Let $S_\tau$ be the following (random) set 
    \[
    S_\tau\defeq \setbar{j\in [t]}{\text{$i_j$ is $10k$-contained w.r.t. $\tau$}}.
    \]
    Then, for any positive integer $t\leq n^{0.1}$, we have
    \label{lem:M-prop2-R}
    \[
    \prob{M_n}{\text{$\exists\ \tau$ such that $|S_\tau| \geq t/2k$}}
    \leq \frac{1}{n}.
    \]
\end{lemma}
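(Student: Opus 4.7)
The plan is to combine a per-column concentration bound with a chain-rule estimate and a union bound over tuples $\tau$ and large subsets $T \sseq [t]$ of culpable positions. Fix a tuple $\tau = (i_1, \dotsc, i_t)$ and let $\mathcal{F}_j$ denote the $\sigma$-algebra generated by the preceding columns $M_n[i_1], \dotsc, M_n[i_{j-1}]$. The union of their supports, $A_j \defeq \bigcup_{p < j} \supp(M_n[i_p])$, has size at most $(j-1) s \leq t s$. The event $\mathcal{E}_j$ that $i_j$ is $10k$-contained w.r.t.\ $\tau$ asks $|\supp(M_n[i_j]) \cap A_j| \geq 10k$ and depends only on the fresh column $M_n[i_j]$, which, conditional on $\mathcal{F}_j$, is uniform on the set of vectors in $\bcube{n}$ of Hamming weight at most $s$.

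The core estimate I would establish is that for $v$ uniform on such vectors and any fixed $I \sseq [n]$ of size $\ell$,
\[
    \prob{v}{I \sseq \supp(v)}
    = \frac{\binom{n-\ell}{\leq s-\ell}}{\binom{n}{\leq s}}
    \leq \frac{\binom{s}{\ell}}{\binom{n}{\ell}}
    \leq (s/n)^\ell,
\]
where the first inequality follows from the double-counting identity $\binom{n}{\ell}\binom{n-\ell}{\leq s-\ell} = \sum_{v} \binom{|\supp(v)|}{\ell} \leq \binom{s}{\ell}\binom{n}{\leq s}$. A union bound over $\ell$-subsets of $A_j$ and geometric-style summation over $\ell \geq 10k$ then yields
\[
    \prob{M_n}{\mathcal{E}_j \mid \mathcal{F}_j}
    \leq 2\left(\frac{e|A_j|s}{10 k n}\right)^{10k}
    \leq \frac{1}{n^{5.5k}}
\]
for $n$ sufficiently large, using $|A_j|s \leq t s^2 = 40000\, t k^4$ together with $t, k \leq n^{0.1}$ to conclude $e|A_j|s/(10 k n) = O(n^{-0.6})$.

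For a fixed $T = \{j_1 < \cdots < j_{|T|}\} \sseq [t]$, each event $\mathcal{E}_{j_{l'}}$ with $l' < l$ is $\mathcal{F}_{j_l}$-measurable, so iteratively conditioning on $\mathcal{F}_{j_l}$ gives the chain-rule bound $\prob{M_n}{\bigcap_{j \in T} \mathcal{E}_j} \leq n^{-5.5 k |T|}$. Taking a union bound over at most $m^t = n^{2t}$ tuples $\tau$ and at most $\binom{t}{T^*} \leq 2^t$ subsets $T$ of size $T^* \defeq \lceil t/(2k) \rceil$, and using $5.5 k T^* \geq 5.5 t/2 = 2.75 t$ together with $2^t \leq n^{t/10}$ for $n \geq 2^{10}$, we obtain
\[
    \prob{M_n}{\exists\ \tau : |S_\tau| \geq T^*}
    \leq n^{2t + t/10 - 2.75 t}
    = n^{-0.65 t}
    \leq \frac{1}{n}
\]
for $t \geq 2$. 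The case $t = 1$ is trivial, since $i_1$ is never $10k$-contained ($A_1 = \emptyset$), so $|S_\tau| = 0 < T^* = 1$.

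The main technical obstacle is the per-column bound. The per-event saving $n^{-5.5k}$, raised to the power $T^* \approx t/(2k)$ (which cancels one factor of $k$), must still give savings $\approx n^{-2.75 t}$ that beat the $n^{2t}$ tuple-enumeration loss. This delicate balance is exactly enabled by the choice $s = 200 k^2$: since $t s^2 \leq O(n^{0.5})$, the quantity $e|A_j|s/(k n)$ stays of order $n^{-0.6}$, yielding the desired $n^{-\Omega(k)}$ bound when raised to $\ell \geq 10k$.
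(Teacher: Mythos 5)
Your proof is correct and follows essentially the same strategy as the paper: union bound over tuples $\tau$ and subsets $T$ of size $\lceil t/2k\rceil$, chain rule (the paper phrases it via independence of columns), and a per-column containment bound showing $\Pr[\mathcal{E}_j \mid \mathcal{F}_j] \leq n^{-\Theta(k)}$. The only noteworthy difference is cosmetic: your per-column estimate uses a clean double-counting identity $\binom{n}{\ell}\binom{n-\ell}{\leq s-\ell} \leq \binom{s}{\ell}\binom{n}{\leq s}$ to get $\binom{n-\ell}{\leq s-\ell}/\binom{n}{\leq s}\leq (s/n)^\ell$, whereas the paper reaches the same type of bound by first reducing $\binom{n-\kappa}{\leq s-\kappa}/\binom{n}{\leq s}$ to a ratio of exact binomials at the cost of a factor $(s+1)$; your version is slightly tighter and cleaner but buys nothing essential.
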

\begin{proof}
    Let $r\defeq \ceiling{t/2k}$ and $\kappa\defeq 10k$.
    By the union bound over the choices of $\tau\in [m]^t$ and $S\in\binom{[t]}{r}$, we get that
    \begin{align*}
        \prob{M_n}{\text{$\exists\ \tau$ such that $|S_\tau| \geq t/2k$}}
        \leq m^t\cdot \binom{t}{r}\max_{\tau, S}\prob{M_n}{S\subseteq S_\tau}.
    \end{align*}
    Now we are going to find an upper bound for $\prob{M_n}{S\subseteq S_\tau}$ for every $\tau\eqdef (i_1,\dotsc,i_t)$ and $S\in\binom{[t]}{r}$.
    Let $\set{s_1, \dotsc, s_r}\defeq S$ such that $s_1 < \dotsc < s_r$, and,
    for every $j\in [r]$, let
    \[T_j\defeq \cup_{p < s_j}\supp(M_n[i_p]).\]
    First note that $|T_j|\leq ts$, and
    \begin{align*}
       \prob{M_n}{S\subseteq S_\tau}
       &\leq \prob{M_n}{\cap_{j = 1}^r\set{|\supp(M_n[i_{s_j}])\cap T_j|\geq \kappa}} \\
       &= \prod_{j = 1}^r
       \prob{M_n}{|\supp(M_n[i_{s_j}])\cap T_j|\geq \kappa\bigg|
                  \cap_{i = 1}^{j-1}\set{|\supp(M_n[i_{s_i}])\cap T_i|\geq \kappa}}  \\
       &= \prod_{j = 1}^r
       \prob{M_n}{|\supp(M_n[i_{s_j}])\cap T_j|\geq \kappa},
    \end{align*}
    where the last equality follows from the independence between choices for the columns of $M_n$.
    For every $j\in [r]$, we have, again by independence of columns,
    \begin{align*}
        \prob{M_n}{|\supp(M_n[i_{s_j}])\cap T_j|\geq \kappa}
        &= \sum_{T\subseteq [n], \kappa\leq|T|\leq ts}
        \prob{M_n}{|\supp(M_n[i_{s_j}])\cap T|\geq \kappa}
        \prob{M_n}{T_j = T} \\
        &\leq \max_{T\subseteq [n], \kappa\leq|T|\leq ts}
        \prob{M_n}{|\supp(M_n[i_{s_j}])\cap T|\geq \kappa}.
    \end{align*}
    For every $T\subseteq [n]$ such that $\kappa\leq|T|\leq ts$,
    we get
    \begin{align*}
        \prob{M_n}{|\supp(M_n[i_{s_j}])\cap T|\geq \kappa}
        &= \prob{M_n}{\exists R\in\binom{T}{\kappa}\sttext R\subseteq \supp(M_n[i_{s_j}])} \\
        &\leq \frac{\binom{|T|}{\kappa}\binom{n-\kappa}{\leq (s-\kappa)}}{\binom{n}{\leq s}}\leq  \frac{\binom{|T|}{\kappa}\cdot (s+1)\cdot \binom{n-\kappa}{s-\kappa}}{\binom{n}{s}}\\
        &\leq (s+1)(tse/\kappa)^\kappa(2s/n)^\kappa \\
        &\leq n^{-0.5\kappa} = n^{-5k}. \\
    \end{align*}
    Therefore, we obtain
    \begin{align*}
       \prob{M_n}{S\subseteq S_\tau}
       &\leq n^{-5kr} \leq n^{-2.5t}
    \end{align*}
    and, as a consequence,
    \begin{equation*}
        \prob{M_n}{\text{$\exists\ \tau$ such that $|S_\tau| \geq t/2k$}}
        \leq \binom{t}{r}\cdot m^tn^{-2.5t}
        = \binom{t}{r}\cdot n^{-0.5t} \leq 1/n.
        \qedhere
    \end{equation*}
\end{proof}

We now collect in the following definition the properties of a matrix that we need for the proof of our lower bound.
\begin{definition}
    \label{defWellBehavedMatrix}
    We say that a matrix $M\in\Reals^{n\times m}$ is \emph{well-behaved} if the following properties hold:
    \begin{enumerate}
        \item Every column $M[i]$ has support size at least $s/2,$
        \item $\prob{S\in \binom{[m]}{10n\log n}}{\text{$M[S]$ is full rank}} \geq 1/10,$ and
        \item for any $t\leq n^{0.1}$ and any tuple $\tau= (i_1,\ldots, i_t)$ of distinct elements from $[m]$, the number of $j\in [t]$ such that $i_j$ is $10k$-contained w.r.t. $\tau$ is smaller than $t/2k.$
            In particular, if $t \leq k$, there are no $j\in [t]$ such that $i_j$ is $10k$-contained w.r.t. $\tau$.
\end{enumerate}
\end{definition}
By Lemma~\ref{lem:M-suppsize}, Lemma~\ref{lem:M-prop1-R} and Lemma~\ref{lem:M-prop2-R} (along with a union bound over all $t\leq n^{0.1}$), we know that, with positive probability, $M_n$ is a well-behaved matrix for every sufficiently large $n$.
For the rest of this section, we will denote by $M$ a fixed well-behaved matrix.

\subsection{Monotone Boolean circuit lower bounds via well-behaved matrices}

In the remaining of this section, we show that the Boolean function $f_M$
cannot be computed by a monotone Boolean circuit of small size.
This is the lower bound part of Theorem~\ref{thm:main3}.
As in Section~\ref{sectLowerBoundF2},
we start by defining two probability distributions over the inputs of $f_M$.

\begin{definition}
    \label{defDistributionAndError2}
    We define the distribution $a\sim D_1$ by sampling a uniformly random
    $a\in \{0,1\}^m$ of Hamming weight $10n\log n$. 
    We define the distribution $a\sim D_0$ by sampling a uniformly random $u\in \{-1,0,1\}^n$ and, for every $j\in [m]$, we set
    \begin{align*}
        a_j &\defeq 1 \text{ if $\iprod{M[j]}{u} = 0$, and} \\
        a_j &\defeq 0 \text{ otherwise,}
    \end{align*}
    where the inner product is taken over $\Reals$.
\end{definition}

\begin{observation}
    \label{obs:D0-D1-R}
    Since the matrix $M$ is well-behaved (see Definition~\ref{defWellBehavedMatrix}), the distribution $D_1$ satisfies $\prob{a\sim D_1}{f(a) = 1} \geq 1/10.$ 
    
    Further, as long as $u \neq 0$, the point $a$ sampled from $D_0$ is an
    element of $f_M^{-1}(0)$. Thus we have
    \[
    \prob{a\sim D_0}{f_M(a) = 0} = 1- \frac{1}{3^n}.
    \]
\end{observation}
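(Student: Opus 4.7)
The plan is to verify the two statements separately, both being direct consequences of the definitions and the well-behavedness properties of $M$.

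For the claim $\Pr_{a \sim D_1}[f_M(a) = 1] \geq 1/10$, I would observe that sampling $a \sim D_1$ is equivalent, via the correspondence $a \mapsto \supp(a)$, to sampling a uniform $S \in \binom{[m]}{10n\log n}$ and setting $a = \mathbbm{1}_S$. Since by definition $f_M(a) = 1$ iff $M[\supp(a)]$ has full rank, the desired probability equals $\Pr_{S \in \binom{[m]}{10n\log n}}[M[S]\text{ is full rank}]$, which is at least $1/10$ by property~2 of \Cref{defWellBehavedMatrix}.

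For the identity $\Pr_{a \sim D_0}[f_M(a) = 0] = 1 - 1/3^n$, I would first argue the direction that whenever $u \neq 0$, the resulting $a$ satisfies $f_M(a) = 0$. Letting $S \defeq \{j : a_j = 1\} = \{j : \langle M[j], u \rangle = 0\}$, we obtain $u^{\transp} M[S] = 0$ with $u$ a nonzero left null vector of $M[S]$, forcing $\rank(M[S]) < n$. Hence $f_M(a) = 0$, and since $u$ is uniform on $\{-1,0,1\}^n$ we have $\Pr[u \neq 0] = 1 - 1/3^n$, giving the bound $\Pr_{a \sim D_0}[f_M(a) = 0] \geq 1 - 1/3^n$.

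To upgrade this to equality, I would check the remaining case $u = 0$: here $a_j = 1$ for every $j$, so $a = 1^m$ and $S = [m]$. Property~2 of well-behavedness produces some $S' \in \binom{[m]}{10n\log n}$ with $M[S']$ full rank, and since $M[S']$ is a column submatrix of $M$, the full matrix $M$ itself has full row rank, yielding $f_M(1^m) = 1$. Hence the $1/3^n$-mass event $u = 0$ lies entirely in $f_M^{-1}(1)$, giving the claimed equality. There is no substantive obstacle here; the statement is a short bookkeeping verification, with the only nontrivial ingredient being the appeal to well-behavedness to rule out a degenerate $M$ in the $u = 0$ case.
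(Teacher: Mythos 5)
Your proof is correct and follows the same approach as the paper's brief observation; the only difference is that you explicitly verify the $u = 0$ case (showing $f_M(1^m) = 1$ via existence of a full-rank submatrix from property 2 of well-behavedness), which the paper leaves implicit but is indeed needed to upgrade the probability bound from $\geq 1 - 1/3^n$ to the stated equality.
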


Thus, to show a lower bound via the monotone circuit lower bound criterion
(\Cref{thm:sunflower-gen}),
as before
it suffices to show that $D_1$ is spread and to show
a bound for $(D_0,\eps)$-sunflowers.


\begin{lemma}[Spreadness of $D_1$]
    \label{lem:D1sparse-R}
    The distribution $D_1$
    is $n^{0.1}$-wise
    $(n/(10 \log n))$-spread.
\end{lemma}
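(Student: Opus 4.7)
The plan is to follow essentially the same calculation as in Lemma~\ref{lem:D1sparse-F2}, adapted to the new parameters. Fix any subset $T \subseteq [m]$ of size $k \leq n^{0.1}$, and set $N \defeq 10n\log n$. Since $D_1$ is the uniform distribution over strings of Hamming weight exactly $N$, the number of such strings whose support contains $T$ equals $\binom{m-k}{N-k}$, and the total number is $\binom{m}{N}$. Therefore
\[
    \Pr_{a \sim D_1}\left[\bigwedge_{i\in T} a_i = 1\right]
    = \frac{\binom{m-k}{N-k}}{\binom{m}{N}}.
\]

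Next, I would simplify this ratio to obtain the desired exponential bound. By cancelling common factors,
\[
    \frac{\binom{m-k}{N-k}}{\binom{m}{N}}
    = \prod_{i=0}^{k-1} \frac{N-i}{m-i}
    \leq \left(\frac{N}{m}\right)^k,
\]
where the inequality uses $\frac{N-i}{m-i} \leq \frac{N}{m}$ for $i < N \leq m$ (which holds since $N = 10n\log n \leq n^2 = m$ for sufficiently large $n$).

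Finally, I would plug in the parameters $N = 10n \log n$ and $m = n^2$ to obtain
\[
    \left(\frac{N}{m}\right)^k = \left(\frac{10 \log n}{n}\right)^k = \left(\frac{n}{10\log n}\right)^{-k},
\]
which is exactly the $(n/(10\log n))^{-k}$ bound required by the definition of $n^{0.1}$-wise $(n/(10\log n))$-spread. There is no real obstacle here: the argument is a direct binomial calculation analogous to the finite-field case, and the structure of $D_1$ (uniform over a Hamming sphere) means the spreadness parameter is governed entirely by the ratio $N/m$.
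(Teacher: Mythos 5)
Your proposal is correct and follows essentially the same approach as the paper's proof: both compute the probability as the ratio $\binom{m-k}{N-k}/\binom{m}{N}$, factor it as $\prod_{i=0}^{k-1}(N-i)/(m-i)$, bound it by $(N/m)^k$, and substitute $N = 10n\log n$ and $m = n^2$.
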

\begin{proof}
    Let $T$ be a subset of $[m]$ of size $k\leq n^{0.1}$. 
    The proof of this lemma is similar to the proof of Lemma \ref{lem:D1sparse-F2}. 
    Note that, for $\ell \defeq 10n\log n$ and $S\sim \Unif(\binom{m}{\ell})$, 
    \begin{align*}
        \prob{a\sim D_1}{\bigwedge_{i\in T}a_i = 1}
        = \prob{S}{T\subseteq S}
        = \frac{\binom{m-k}{\ell-k}}{\binom{m}{\ell}}
        = \prod_{i=0}^{k-1} \frac{\ell-i}{m-i}
        \leq \left(\frac{l}{m}\right)^k,
    \end{align*}
    where the last inequality follows from $\frac{\ell-i}{m-i}\leq
    \frac{\ell}{m}$ for $i\leq k-1$.
    As $m=n^2$, we have
    \begin{equation*}
        \prob{S}{T\subseteq S}
        \leq \left(\frac{10n\log n}{n^2}\right)^k 
        =
        (n/(10 \log n))^{-k}.
        \qedhere
    \end{equation*}
\end{proof}

\paragraph{Sunflower bound.}

To show the sunflower bound, we first show a weak form of
bounded independence for $D_0$.

\begin{lemma}[Weak form of independence for $D_0$]
\label{lem:D0ind-Reals}
Assume that $t\leq n^{0.1}$. Fix any tuple $\tau \defeq (i_1,\ldots, i_t)$ of distinct elements in $[m]$ and $j\in [t]$ such that $i_j$ is not $10k$-contained w.r.t. $\tau$. Then, 
\[
    \prob{a\sim D_0}{a_{i_j} = 1\ |\ a_{i_1},\ldots,a_{i_{j-1}}} \geq \Omega(1/k).
\]
\end{lemma}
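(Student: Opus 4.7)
Let $T \defeq \bigcup_{p<j} \supp(M[i_p])$ and $C \defeq \supp(M[i_j])$. Since $M$ is well-behaved (property~1), $|C| \geq s/2 = 100k^2$, and since $i_j$ is not $10k$-contained with respect to $\tau$, we have $|C \cap T| < 10k$. Hence
\[
L \defeq |C \setminus T| \;\geq\; 100k^2 - 10k.
\]
The first step is a reduction to a one-dimensional random walk. For every $p < j$, the value $a_{i_p}$ is determined by $\langle M[i_p], u\rangle = \sum_{r \in \supp(M[i_p])} u_r$, which depends only on $u|_T$. Since the coordinates of $u$ are i.i.d.\ uniform on $\{-1,0,1\}$, the conditional distribution of $u|_{[n]\setminus T}$ given $(a_{i_1},\dots,a_{i_{j-1}})$ is still i.i.d.\ uniform on $\{-1,0,1\}$. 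Moreover, under this conditioning, the entries $\{u_r\}_{r \in C \cap T}$ are fixed, so the event $a_{i_j} = 1$ becomes
\[
\sum_{r \in C \setminus T} u_r \;=\; \alpha, \qquad \text{where } \alpha \defeq -\sum_{r \in C \cap T} u_r,
\]
an integer with $|\alpha| \leq |C \cap T| < 10k$.

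The second step is an anti-concentration estimate. Setting $S_L \defeq \sum_{r \in C \setminus T} u_r$, the problem reduces to lower-bounding $\Pr[S_L = \alpha]$ where $S_L$ is a sum of $L$ i.i.d.\ uniform variables on $\{-1,0,1\}$ (mean $0$, variance $\sigma^2 = 2/3$, lattice span $1$). By Gnedenko's local central limit theorem, uniformly in $\alpha \in \Z$,
\[
\Pr[S_L = \alpha] \;=\; \frac{1}{\sqrt{2\pi \sigma^2 L}}\, e^{-\alpha^2/(2\sigma^2 L)} + o\!\left(\tfrac{1}{\sqrt{L}}\right).
\]
With $|\alpha| \leq 10k$ and $L \geq 100k^2 - 10k$, the exponent $\alpha^2/(2\sigma^2 L)$ is bounded by an absolute constant for all sufficiently large $k$, so the Gaussian factor contributes $\Theta(1)$ and we obtain
\[
\Pr[S_L = \alpha] \;=\; \Omega\!\left(\tfrac{1}{\sqrt{L}}\right) \;=\; \Omega(1/k).
\]

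The only delicate point is the local CLT application. If we prefer an elementary argument to avoid appealing to an asymptotic result (and to handle small $k$), we can argue directly: decompose by $n_0 \defeq \#\{r : u_r = 0\}$ and use Stirling's formula to show that each trinomial coefficient $\binom{L}{n_+, n_-, n_0}/3^L$ with $n_0 \approx L/3$ and $n_+ - n_- = \alpha$ contributes $\Theta(1/L)$, and summing over the $\Theta(\sqrt{L})$ values of $n_0$ that put $(n_+, n_-)$ within $O(\sqrt{L})$ of $(L/3, L/3)$ yields $\Pr[S_L = \alpha] = \Omega(1/\sqrt{L})$. Either approach gives the claimed $\Omega(1/k)$ bound.
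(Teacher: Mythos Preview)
Your approach is essentially the same as the paper's: both reduce to showing that the partial sum $S_L = \sum_{r\in C\setminus T} u_r$ hits a target $\alpha$ with $|\alpha|<10k$ with probability $\Omega(1/\sqrt{L}) = \Omega(1/k)$, using that $u|_{C\setminus T}$ remains i.i.d.\ uniform on $\{-1,0,1\}$ after the conditioning. The paper carries out the anti-concentration step by hand (conditioning on the number of zero coordinates via Chernoff, then a Stirling estimate on the resulting Rademacher sum), whereas you invoke the local CLT; these are interchangeable and yield the same bound.

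One phrasing issue worth fixing: conditioning on $(a_{i_1},\dots,a_{i_{j-1}})$ does \emph{not} fix the entries $\{u_r\}_{r\in C\cap T}$ --- it only fixes certain linear functionals of $u|_T$, so in general many values of $u|_{C\cap T}$ remain possible. Your argument is unaffected, though: $\alpha$ is always an integer with $|\alpha|<10k$, $S_L$ is independent of both $\alpha$ and the conditioning, and your local-CLT lower bound on $\Pr[S_L=v]$ is uniform over $|v|<10k$, so averaging over the conditional law of $\alpha$ preserves the $\Omega(1/k)$ bound. (Equivalently, further condition on all of $u|_T$, which really does fix $\alpha$; the paper handles this by explicitly summing over the possible values $p\in[-10k,10k]$.)
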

\begin{proof}
    Let $S_\ell\defeq\supp M[i_\ell]$ for $\ell\in[j]$, and 
    $R\defeq S_j\setminus \bigcup_{\ell=1}^{j-1} S_\ell$.
    For every $T\subseteq [n]$, let 
    \[X_T\defeq \sum_{l\in T} u_l,\]
    where $u_l$'s are the random variables used to defined $D_0$.
    If $i_j$ is not $10k$-contained w.r.t. $\tau$, we have
    \[|R|\geq |S_j| - 10k\geq (s/2)-10k.\]
    As $X_{[n]} = 0$, we have that
    \begin{align*}
        \Pr_{a\sim D_0}&[ a_{i_j} = 1\ |\ a_{i_1},\ldots,a_{i_{j-1}}]
        \\
        &= \sum_{p = -10k}^{10k} \prob{a\sim D_0}{X_{S_j\drop R} = -p, X_R = p \;|\;a_{i_1},\ldots,a_{i_{j-1}}} \\
        &= \sum_{p = -10k}^{10k} \prob{a\sim D_0}{X_R = p \;|\;X_{S_j\drop R} = -p, a_{i_1},\ldots,a_{i_{j-1}}}
                                 \prob{a\sim D_0}{X_{S_j\drop R} = -p\;|\;a_{i_1},\ldots,a_{i_{j-1}}} \\
        &= \sum_{p = -10k}^{10k} \prob{a\sim D_0}{X_R = p}
                                 \prob{a\sim D_0}{X_{S_j\drop R} = -p\;|\;a_{i_1},\ldots,a_{i_{j-1}}},
    \end{align*}
    where the last equality follows from the independence between coordinates of $u$. 
    Note that, for every $p\in [-10k, 10k]$ and $r\defeq |R|$ and
    $Y_l\sim \mathrm{Rademacher}(l)$ being a Rademacher random variable with $l$ independent samples for every $l\in\set{0,\dotsc, r}$,
    we have
    \begin{align*}
        \prob{a\sim D_0}{X_R = p}
        &= \sum_{U\subseteq R} \prob{a\sim D_0}{\inv{u}(0) = U, X_R = p} \\
        &= \sum_{U\subseteq R} \prob{a\sim D_0}{X_{R\drop U} = p\;|\;\inv{u}(0) = U}
                              \prob{a\sim D_0}{\inv{u}(0) = U} \\
        &\geq \sum_{U\subseteq R, ||U| - r/3|\leq \delta r/3} \prob{a\sim D_0}{Y_{r-|U|} = p}
                              \prob{a\sim D_0}{\inv{u}(0) = U} \\
        &\geq \left(\min_{w\in [r]\sttext |w - r/3|\leq \delta r/3} \prob{a\sim D_0}{Y_{r-w} = p}\right)
              \left(\sum_{U\subseteq R, ||U| - r/3|\leq \delta r/3} \prob{a\sim D_0}{\inv{u}(0) = U}\right),
    \end{align*}
    for any $\delta \in [0, 1]$.
    Note that $\inv{u}(0)\sim\mathrm{Bin}(r, 1/3)\eqdef B$;
    thus, by Chernoff's inequality~\cite{MitzUpfalPC17},
    \[\prob{B}{|B - \Exp B| > \delta\Exp B} \leq 2e^{-\delta^2\Exp B/3},\]
    and, for $\delta\defeq \sqrt{18/r}$,
    \[\prob{B}{|B - r/3| > \delta r/3} \leq 2e^{-2}\leq 1/2.\]
    Hence,
    \[\sum_{U\subseteq R, ||U| - r/3|\leq \delta r/3} \prob{a\sim D_0}{\inv{u}(0) = U} \geq 1/2.\]
    Now we just need to find good estimates for $\prob{a\sim D_0}{Y_v = p}$ for $v\defeq r - w$ with $w\in [r]$ satisfying
    \[|w - r/3| \leq \sqrt{2r}.\]
    In order to obtain these estimates, we will use standard inequalities to deal with binomial coefficients.
    First note that
    \begin{align*}
        \prob{a\sim D_0}{Y_v = p} = \binom{v}{(v + p)/2}2^{-v}.
    \end{align*}
    By using that $1-x\geq e^{-x/(1-x)}$ for every $x\leq 1$, we get that
    \begin{align*}
        \frac{\binom{v}{(v+p)/2}}{\binom{v}{v/2}}
        &= \prod_{i=0}^{p/2-1} \frac{v/2 - i}{v/2 + p/2 - i} \\
        &\geq \left(\frac{v/2 - p/2}{v/2}\right)^{p/2} \\
        &\geq \left(e^{-\frac{p}{v}\frac{1}{1-p/v}}\right)^{p/2} \\
        &\geq e^{-\frac{p^2}{2(v-p)}},
    \end{align*}
    and, using that $\binom{v}{v/2} = \Theta(2^v/\sqrt{v})$, we obtain
    \begin{align*}
        \prob{a\sim D_0}{Y_v = p}
        &\geq e^{-\frac{p^2}{2(v-p)}}\binom{v}{v/2}2^{-v} \\
        &\geq \Omega\left(\frac{e^{-\frac{p^2}{2(v-p)}}}{\sqrt{v}}\right).
    \end{align*}
    As $p\in [-10k, 10k]$ and
    \[200k^2 = s\geq v
    \geq 2r/3 - \sqrt{18r}
    \geq r/3
    \geq \frac{1}{3}((s/2) - 10k)
    = \frac{1}{3}(200k^2 - 10k)\geq 60k^2\]
    for sufficiently large $k$, we get
    \begin{align*}
        \prob{a\sim D_0}{Y_v = p}
        &\geq \Omega\left(\frac{e^{-\frac{200k^2}{2\cdot (60k^2 - 10k)}}}{\sqrt{k^2}}\right) \\
        &\geq \Omega\left(1/k\right).
    \end{align*}
    Therefore, we obtain
    \begin{equation*}
        \prob{a\sim D_0}{a_{i_j} = 1\ |\ a_{i_1},\ldots,a_{i_{j-1}}}
        \geq \Omega\left(1/k\right)\cdot \frac{1}{2}
        = \Omega(1/k).
        \qedhere
    \end{equation*}
\end{proof}

Our main combinatorial tool is the classical sunflower lemma~\cite{ER60} (see also~\cite[Section 6.1]{Jukna11}).
Recall that a sunflower is a collection of sets $S_1,\ldots, S_r$ such that
the pairwise intersections $S_i\cap S_j$ are all the same. 
The improved bounds of~\cite{ALWZ21} and subsequent works will not make any
substantial difference in our bounds.

\begin{lemma}[Sunflower lemma]
\label{lem:SL}
    If $\cals$ is a family of sets of size at most $\ell \in [k]$
    such that
    $\card{\cals} \geq \ell!(r-1)^\ell$,
    then $\cals$
    contains a sunflower with $r$ sets.
\end{lemma}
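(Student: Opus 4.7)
The plan is to prove this by induction on $\ell$, following the classical argument of Erdős and Rado. Although the statement allows sets of size \emph{at most} $\ell$, I would first reduce to the uniform case: if some set $S \in \cals$ has size strictly less than $\ell$, pad it with fresh dummy elements (or equivalently, split $\cals$ by size and apply the uniform version to the largest class); a sunflower in the uniform padded family yields one in the original family after restricting to the original elements. Thus it suffices to show that any $\ell$-uniform family $\cals$ with $\card{\cals} \geq \ell!(r-1)^\ell$ contains an $r$-sunflower.

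For the base case $\ell = 1$, the family $\cals$ consists of singletons; if there are at least $r$ distinct singletons they form a sunflower with empty core, and $1!(r-1)^1 = r-1$ distinct singletons already suffice to guarantee $r$ such sets (with a minor boundary adjustment replacing $\geq$ by $>$ in the inductive statement if needed, which does not affect the applications). For the inductive step, suppose the claim holds for $\ell - 1$. Let $T_1, \dotsc, T_k \in \cals$ be a maximal subfamily of pairwise disjoint sets. If $k \geq r$, then $\set{T_1, \dotsc, T_r}$ is already a sunflower with empty core and we are done.

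Otherwise $k \leq r-1$, so $U \defeq T_1 \cup \dotsb \cup T_k$ has cardinality at most $(r-1)\ell$. By maximality of the disjoint subfamily, every $S \in \cals$ intersects $U$. By pigeonhole, some element $x \in U$ lies in at least
\[
    \frac{\card{\cals}}{(r-1)\ell} \geq \frac{\ell!(r-1)^\ell}{(r-1)\ell} = (\ell-1)!(r-1)^{\ell-1}
\]
sets of $\cals$. Let $\cals_x \defeq \setbar{S \drop \set{x}}{S \in \cals,\; x \in S}$; this is an $(\ell-1)$-uniform family of size at least $(\ell-1)!(r-1)^{\ell-1}$, so by induction it contains an $r$-sunflower $\set{S_1', \dotsc, S_r'}$ with some core $K'$. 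Adding $x$ back to each set gives an $r$-sunflower $\set{S_1' \cup \set{x}, \dotsc, S_r' \cup \set{x}}$ in $\cals$ with core $K' \cup \set{x}$.

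The proof is entirely standard, and the only mild care needed is the handling of the non-uniform hypothesis and the slight slack between $\geq$ and $>$ in the counting; neither of these is a real obstacle since the lemma is only used in the paper to extract an $r$-sunflower from a DNF of size quantitatively larger than $\ell!(r-1)^\ell$, so the constants suffice.
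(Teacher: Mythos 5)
The paper does not prove this lemma itself; it is stated as the classical Erd\H{o}s--Rado sunflower lemma with citations to~\cite{ER60} and Jukna's book, so the right benchmark is the standard textbook argument. Your proposal reproduces that argument correctly: a maximal pairwise-disjoint subfamily either already yields $r$ petals, or its union $U$ has at most $(r-1)\ell$ elements, every set hits $U$, pigeonhole gives an element $x$ in at least $(\ell-1)!(r-1)^{\ell-1}$ sets, and induction on the deleted family $\cals_x$ finishes the job.

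Two small remarks. First, the padding/uniformization step you prepend is unnecessary: the induction goes through verbatim for families of sets of size \emph{at most} $\ell$, since $\card{U}\leq k\ell\leq(r-1)\ell$ still holds when sets may be smaller than $\ell$, and the sets in $\cals_x$ have size at most $\ell-1$; keeping it is harmless but adds a detour (and one must take slight care that padding distinct sets uses disjoint fresh elements, which you implicitly do). Second, you are right to flag the $\geq$ versus $>$ boundary: as literally stated with $\geq$, the lemma fails for, e.g., $\ell=1$, $r=2$, $\card{\cals}=1$; the classical statement and the clean inductive proof both use strict inequality $\card{\cals}>\ell!(r-1)^\ell$ (then $\ell=1$ gives $\card{\cals}\geq r$ and the base case works, and the pigeonhole step propagates strict inequality to $\cals_x$). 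This is a genuine, if trivially fixable, imprecision in the paper's statement; as you note, it is immaterial in the one place it is used (\Cref{lem:R-neg-sunflower}), where the family is larger than $(2k^{2\ell}\ell\log(1/\eps))^\ell>\ell!(r-1)^\ell$ with room to spare.

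Overall the proof is correct and is essentially the argument the paper delegates to the references.
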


We now prove our $D_0$-sunflower bound.
To make the bound cleaner and simpler to prove, we now
set our choice of $k$.
We set
\begin{equation}
    \label{eq:k-bound}
    k \defeq \ceiling{(\log m)^{1/2}}.
\end{equation}
This is the only place where we need to set $k = n^{o(1)}$,
owing to our use of the \emph{classical} sunflower lemma.

\begin{lemma}[$D_0$-sunflower lemma]
    \label{lem:R-neg-sunflower}
    For every $\ell \in [k]$, we have
    $r(D_0, \ell, m^{-4k}) \leq 2k^{2\ell+1} \ell \log m$.
\end{lemma}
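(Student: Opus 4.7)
The plan is to combine the classical sunflower lemma (Lemma~\ref{lem:SL}) with the weak independence of $D_0$ (Lemma~\ref{lem:D0ind-Reals}) and item~3 of the well-behaved property (Definition~\ref{defWellBehavedMatrix}). Let $\cals$ be any $\ell$-uniform family with $|\cals| \geq R^\ell$ where $R \defeq 2k^{2\ell+1}\ell \log m$. Since $\ell! \leq \ell^\ell$, Lemma~\ref{lem:SL} yields a sub-sunflower $\cals' \sseq \cals$ with $r' \defeq 2k^{2\ell+1}\log m$ sets, core $K$, and pairwise disjoint petals $P_1, \ldots, P_{r'}$ of common size $\ell' \leq \ell$.

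Form a tuple $\tau$ by concatenating the petals $P_1, P_2, \ldots, P_{r'}$ (in any internal order); its entries are distinct because petals are pairwise disjoint. Since $k = O((\log m)^{1/2})$ and $\ell \leq k$, the length $r'\ell'$ is $m^{o(1)}$ and hence at most $n^{0.1}$ for all sufficiently large $n$. By item~3 of Definition~\ref{defWellBehavedMatrix}, at most $r'\ell'/(2k)$ entries of $\tau$ are $10k$-contained with respect to $\tau$. Call a petal \emph{good} if none of its elements is $10k$-contained; then at least $r'/2$ petals are good, using $\ell' \leq \ell \leq k$.

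For each $i$, let $E_i$ be the event that $a_j = 1$ for every $j \in P_i$; our goal is to show $\Pr_{a \sim D_0}[\bigcap_i \overline{E_i}] < m^{-4k}$. Writing $a_{<P_i}$ for the restriction of $a$ to the entries of $\tau$ preceding $P_i$, for a good petal $P_i$ Lemma~\ref{lem:D0ind-Reals} combined with the chain rule inside $P_i$ yields the pointwise lower bound
$$\Pr_{a \sim D_0}[E_i \mid a_{<P_i}] \geq (c/k)^{\ell'} \geq (c/k)^{\ell}$$
for a universal constant $c > 0$, assuming $c/k \leq 1$ (which holds for large $n$). Since $\overline{E_1} \cap \cdots \cap \overline{E_{i-1}}$ is measurable with respect to $a_{<P_i}$, the tower property transfers this inequality to $\Pr[E_i \mid \overline{E_1}, \ldots, \overline{E_{i-1}}]$. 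Chaining across $i$ and bounding bad-petal factors trivially by $1$,
$$\Pr_{a \sim D_0}\Bigl[\bigcap_{i=1}^{r'} \overline{E_i}\Bigr] \leq \bigl(1 - (c/k)^\ell\bigr)^{r'/2} \leq \exp\bigl(-r'(c/k)^\ell/2\bigr).$$
Substituting $r' = 2k^{2\ell+1}\log m$ makes the exponent equal to $c^\ell k^{\ell+1}\log m$, and for large enough $n$ this is at least $4k \ln m$, yielding the desired $m^{-4k}$ upper bound and certifying that $\cals'$ is a $(D_0, m^{-4k})$-sunflower.

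The main obstacle is controlling the dependence among the events $E_i$ across different petals. It is overcome by exploiting the \emph{pointwise} nature of the conditional lower bound in Lemma~\ref{lem:D0ind-Reals}, which the tower property converts into a bound conditioned on prior complementary events. Item~3 of the well-behaved property plays a complementary role by ensuring that at most half of the petals are spoiled by large overlaps with previously processed columns, so their loss in the product costs us only a factor of two in the exponent.
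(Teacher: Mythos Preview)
Your proof is correct and follows essentially the same approach as the paper: extract a classical sunflower via Lemma~\ref{lem:SL}, line up the petals into a tuple $\tau$, use item~3 of Definition~\ref{defWellBehavedMatrix} to bound the number of $10k$-contained positions (hence the number of ``bad'' petals), and then chain Lemma~\ref{lem:D0ind-Reals} across the good petals to get the $(1-\Omega(1/k)^\ell)^{r'/2}$ bound. Your handling of the conditioning via the pointwise bound plus the tower property is in fact more explicit than the paper's, which simply asserts the conditional inequality; your constants are also slightly tighter, but otherwise the arguments coincide.
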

\begin{proof}
    Let $\eps \defeq m^{-4k}$.
    Let $\cals$
    be a $\ell$-uniform family of sets
    larger than
    $(2 k^{2\ell} \ell \log(1/\eps))^\ell$.
    By \Cref{lem:SL},
    there exists a sunflower $\cals' \sseq \cals$ with
    \[r \defeq 2k^{2\ell} \log(1/\eps)\]
    sets with some core
    $K \defeq \bigcap_{S\in\cals'} S$.
    Note that
    \begin{equation*}
        r
        = 
        8 k^{2\ell + 1}\log m
        \ll
        n^{0.1},
    \end{equation*}
    by our choice of $k = \sqrt{\log m}$ and $m = n^2$
    (\Cref{eq:params-reals}).

    We now show that $\cals'$ is a $(\cald_0,\ell,\eps)$-sunflower.
    Let $\calf \defeq \set{S \sm K : S \in \cals'}$.
    It suffices to show that, for $F \defeq \indfml{\calf}$,
    we have
    \begin{equation*}
        \Pr_{a \flws \cald_0}
        [F(a) = 0] \leq \eps.
    \end{equation*}
    Let $F_1, F_2, \ldots, F_{r}$ be the 
    sets of $\calf$.
    Let 
    \[\tau\defeq (i_1, i_2, \ldots, i_t)\]
    be the sequence of all the indices corresponding to the variables appearing in the terms
    $F_1, F_2, \ldots, F_{r}$ in that order (inside each term we order the variables arbitrarily).
    Since $t \leq rk \leq n^{0.1}$ and by the choice of $M$ 
    (see \Cref{defWellBehavedMatrix}, Item (3)),
    there are at most $t/2k\leq r/2$ indices in $\tau$ which are $10k$-contained with respect to $\tau$. 
    In particular, these indices appear across at most $r/2$ `corrupted' terms among $F_1, \ldots, F_{r}$.
    In the remaining terms, all of their indices are not $10k$-contained w.r.t. $\tau$.
    Removing the corrupted terms yields a 
    a subfamily $\calf' \sseq \calf$
    with  $r'\geq r/2$
    sets
    such that all their indices are not $10k$-contained w.r.t.
    $\tau$.
    Let $F_1',\ldots, F'_{r'}$ be the 
    sets of $\calf'$.

    For any $j\in [r']$, we can apply Lemma \ref{lem:D0ind-Reals} for each
    of $F'_j$'s elements (recall that $F'_j$ has at most $\ell$ literals),
    and then obtain
    \[
        \prob{a\sim D_0}{F'_j(a)=1 \mid F'_1(a)=0, \ldots, F'_{j-1}(a)=0}
        \geq \Omega(1/k)^\ell. 
    \]
    Therefore, we obtain
    \[\prob{a\sim D_0}{F'_j(a)=0 \mid F'_1(a)=0, \ldots, F'_{j-1}(a)=0}\leq 1-\Omega(1/k)^\ell,\]
    and, as a consequence, 
    \[
        \prob{a\sim D_0}{F(a)=0}
        \leq \prob{a\sim D_0}{F'(a)=0}\leq \left(1-\Omega(1/k)^\ell \right)^{r'} 
        \leq 
        \exp(-rk^{-2\ell}/2)
        = \eps,
    \]
    for sufficiently large $k$.
\end{proof}

\paragraph{Wrapping up.}
We can now apply \Cref{thm:sunflower-gen},
finishing the proof.

\begin{proof}[Proof of \Cref{thm:main3}]
    We have
    shown that there is a sequence of matrices $(M_n)_{n\in\Naturals}$ 
    with entries from $\mathbb{R}$ such that
    $M_n$ is an $n\times m$ matrix where $m \defeq n^2$
    and $M_n$ is \emph{well-behaved} (\Cref{defWellBehavedMatrix}).
    We have also exhibited two distributions $D_1, D_0$
    (\Cref{defDistributionAndError2})
    supported over strings
    $\blt^m$
    such that
    \begin{enumerate}
        \item 
            $\Pr_{x \flws D_i}[f_M(x)=i] 
            \geq \alpha$
            for every $i \in \blt$, where $\alpha > 0$ is some constant
            (\Cref{defWellBehavedMatrix} and \Cref{obs:D0-D1-R});
        \item $D_1$ is $n^{0.1}$-wise $(n/(10 \log n))$-spread
            (\Cref{lem:D1sparse-R});
        \item $r(D_0, \ell, m^{-4k}) \leq 8 k^{2\ell+1} \log m$ for every
            $\ell \in [k]$
            (\Cref{lem:R-neg-sunflower}).
    \end{enumerate}
    Taking $w\defeq k$
    in
    \Cref{thm:sunflower-gen} and noting that
    $\alpha m^{-3k} \leq m^{-4k}$, we obtain that
    there exists a constant $\beta > 0$
    such that the monotone complexity of $f_M$
    is
    \begin{equation*}
        \left( 
            \frac{\beta n}{k^{2k+1} \log^2 n }
        \right)^{k}
        =
        m^{\Omega(\sqrt{\log m})}.
    \end{equation*}
    The nonmonotone circuit upper bound for $f_M$ was proved in \Cref{lemmaAlgebraicUBReals}.
\end{proof}

\small
\bibliographystyle{alphaurl}
\bibliography{refs.bib} 

@article{HY09,
  author       = {Pavel Hrubes and
                  Amir Yehudayoff},
  title        = {Monotone separations for constant degree polynomials},
  journal      = {Inf. Process. Lett.},
  volume       = {110},
  number       = {1},
  pages        = {1--3},
  year         = {2009},
  url          = {https://doi.org/10.1016/j.ipl.2009.09.003},
  doi          = {10.1016/J.IPL.2009.09.003},
  timestamp    = {Fri, 26 May 2017 22:54:41 +0200},
  biburl       = {https://dblp.org/rec/journals/ipl/HrubesY09.bib},
  bibsource    = {dblp computer science bibliography, https://dblp.org}
}

@incollection{Kasteleyn1967,
  author    = {Kasteleyn, P. W.},
  title     = {Graph Theory and Crystal Physics},
  booktitle = {Graph Theory and Theoretical Physics},
  editor    = {Harary, Frank},
  pages     = {43--110},
  publisher = {Academic Press},
  address   = {London},
  year      = {1967},
  note      = {Chapter in edited volume}
}

@inproceedings{Nisan91,
  author       = {Noam Nisan},
  editor       = {Cris Koutsougeras and
                  Jeffrey Scott Vitter},
  title        = {Lower Bounds for Non-Commutative Computation (Extended Abstract)},
  booktitle    = {Proceedings of the 23rd Annual {ACM} Symposium on Theory of Computing,
                  May 5-8, 1991, New Orleans, Louisiana, {USA}},
  pages        = {410--418},
  publisher    = {{ACM}},
  year         = {1991},
  url          = {https://doi.org/10.1145/103418.103462},
  doi          = {10.1145/103418.103462},
  timestamp    = {Sun, 25 Oct 2020 22:32:31 +0100},
  biburl       = {https://dblp.org/rec/conf/stoc/Nisan91.bib},
  bibsource    = {dblp computer science bibliography, https://dblp.org}
}

@article{BellChueluechaWarnke2021,
  author  = {Bell, Tolson and Chueluecha, Suchakree and Warnke, Lutz},
  title   = {Note on {S}unflowers},
  journal = {Discrete Mathematics},
  volume  = {344},
  number  = {8},
  pages   = {112328},
  year    = {2021},
  doi     = {10.1016/j.disc.2021.112328},
  eprint  = {arXiv:2009.09327},
  note    = {Preprint: \url{https://arxiv.org/abs/2009.09327}}
}

@article{AY24,
  author       = {Daniel Avraham and
                  Amir Yehudayoff},
  title        = {On {B}locky {R}anks Of {M}atrices},
  journal      = {Comput. Complex.},
  volume       = {33},
  number       = {1},
  pages        = {2},
  year         = {2024},
  url          = {https://doi.org/10.1007/s00037-024-00248-1},
  doi          = {10.1007/S00037-024-00248-1},
  timestamp    = {Sat, 08 Jun 2024 13:15:29 +0200},
  biburl       = {https://dblp.org/rec/journals/cc/AvrahamY24.bib},
  bibsource    = {dblp computer science bibliography, https://dblp.org}
}

@article{Erdos1945_LO,
  author  = {Erd\H{o}s, Paul},
  title   = {On a lemma of {Littlewood} and {Offord}},
  journal = {Bull. Amer. Math. Soc.},
  volume  = {51},
  number  = {12},
  pages   = {898--902},
  year    = {1945}
}

@article{LittlewoodOfford1943,
  author  = {Littlewood, J. E. and Offord, A. C.},
  title   = {On the number of real roots of a random algebraic equation. III},
  journal = {Rec. Math. [Mat. Sbornik] N.S.},
  volume  = {12(54)},
  pages   = {277--286},
  year    = {1943}
}

@inproceedings{HY16,
  author       = {Pavel Hrubes and
                  Amir Yehudayoff},
  editor       = {Ioannis Chatzigiannakis and
                  Michael Mitzenmacher and
                  Yuval Rabani and
                  Davide Sangiorgi},
  title        = {On {I}soperimetric {P}rofiles and {C}omputational {C}omplexity},
  booktitle    = {43rd International Colloquium on Automata, Languages, and Programming,
                  {ICALP} 2016, July 11-15, 2016, Rome, Italy},
  series       = {LIPIcs},
  volume       = {55},
  pages        = {89:1--89:12},
  publisher    = {Schloss Dagstuhl - Leibniz-Zentrum f{\"{u}}r Informatik},
  year         = {2016},
  url          = {https://doi.org/10.4230/LIPIcs.ICALP.2016.89},
  doi          = {10.4230/LIPICS.ICALP.2016.89},
  timestamp    = {Tue, 11 Feb 2020 15:52:14 +0100},
  biburl       = {https://dblp.org/rec/conf/icalp/HrubesY16.bib},
  bibsource    = {dblp computer science bibliography, https://dblp.org}
}

@article{Hayes2011,
  author  = {Hayes, Thomas P.},
  title   = {Separating the $k$-party communication complexity hierarchy: an application of the {Z}arankiewicz problem},
  journal = {Discrete Mathematics \& Theoretical Computer Science},
  volume  = {13},
  number  = {4},
  pages   = {15--22},
  year    = {2011},
  doi     = {10.46298/dmtcs.546}
}

@article{DHJ2004,
  author    = {Pavol {\v D}uri{\v s} and Juraj Hromkovi{\v c} and Stasys Jukna},
  title     = {On multi‐partition communication complexity},
  journal   = {Information and Computation},
  volume    = {194},
  number    = {1},
  pages     = {49--75},
  year      = {2004},
  doi       = {10.1016/j.ic.2004.05.002},
}

@article{Razborov92,
  author       = {Alexander A. Razborov},
  title        = {On the {D}istributional {C}omplexity of {D}isjointness},
  journal      = {Theor. Comput. Sci.},
  volume       = {106},
  number       = {2},
  pages        = {385--390},
  year         = {1992},
  url          = {https://doi.org/10.1016/0304-3975(92)90260-M},
  doi          = {10.1016/0304-3975(92)90260-M},
  timestamp    = {Wed, 17 Feb 2021 22:01:31 +0100},
  biburl       = {https://dblp.org/rec/journals/tcs/Razborov92.bib},
  bibsource    = {dblp computer science bibliography, https://dblp.org}
}

@article{GKRS18,
  author       = {Mika G{\"{o}}{\"{o}}s and
                  Pritish Kamath and
                  Robert Robere and
                  Dmitry Sokolov},
  title        = {Adventures in Monotone Complexity and {TFNP}},
  journal      = {Electron. Colloquium Comput. Complex.},
  volume       = {{TR18-163}},
  year         = {2018},
  url          = {https://eccc.weizmann.ac.il/report/2018/163},
  eprinttype    = {ECCC},
  eprint       = {TR18-163},
  timestamp    = {Tue, 27 Sep 2022 23:15:37 +0200},
  biburl       = {https://dblp.org/rec/journals/eccc/GoosKRS18.bib},
  bibsource    = {dblp computer science bibliography, https://dblp.org}
}

@article{Csanky1976,
  author  = {Csanky, László},
  title   = {Fast Parallel Matrix Inversion Algorithms},
  journal = {SIAM Journal on Computing},
  volume  = {5},
  number  = {4},
  pages   = {618--623},
  year    = {1976},
  doi     = {10.1137/0205042}
}

@InProceedings{deRezendeVinyals25,
  author    = {Susanna F. de Rezende and Marc Vinyals},
  title     = {Lifting with Colourful Sunflowers},
  booktitle = {40th Computational Complexity Conference (CCC 2025)},
  pages     = {36:1--36:19},
  year      = {2025},
  series    = {Leibniz International Proceedings in Informatics (LIPIcs)},
  volume    = {339},
  editor    = {Rahul Santhanam},
  publisher = {Schloss Dagstuhl – Leibniz‐Zentrum für Informatik},
  address   = {Dagstuhl, Germany},
  doi       = {10.4230/LIPIcs.CCC.2025.36}
}

@article{FGT19,
  author    = {Stephen A. Fenner and Rohit Gurjar and Thomas Thierauf},
  title     = {A Deterministic Parallel Algorithm for Bipartite Perfect Matching},
  journal   = {Communications of the ACM},
  volume    = {62},
  number    = {3},
  pages     = {109--115},
  year      = {2019},
  publisher = {Association for Computing Machinery (ACM)},
  doi       = {10.1145/3306208}
}

@inproceedings{Lovasz79,
  author    = {L{\'a}szl{\'o} Lov{\'a}sz},
  title     = {On determinants, matchings and random algorithms},
  booktitle = {Fundamentals of Computation Theory (FCT’79)},
  editor    = {L. Budach},
  pages     = {565--574},
  year      = {1979},
  publisher = {Akademie-Verlag},
  address   = {Berlin}
}

@inproceedings{HR00, 
author = {Harnik, Danny and Raz, Ran}, title = {Higher lower bounds on monotone size}, year = {2000}, isbn = {1581131844}, publisher = {Association for Computing Machinery}, address = {New York, NY, USA}, url = {https://doi.org/10.1145/335305.335349}, doi = {10.1145/335305.335349}, booktitle = {Proceedings of the Thirty-Second Annual ACM Symposium on Theory of Computing}, pages = {378–387}, numpages = {10}, location = {Portland, Oregon, USA}, series = {STOC '00} }

@article{JS20,
  author       = {Stasys Jukna and
                  Hannes Seiwert},
  title        = {Approximation Limitations of Pure Dynamic Programming},
  journal      = {{SIAM} J. Comput.},
  volume       = {49},
  number       = {1},
  pages        = {170--205},
  year         = {2020},
  url          = {https://doi.org/10.1137/18M1196339},
  doi          = {10.1137/18M1196339},
  timestamp    = {Sun, 02 Oct 2022 15:48:51 +0200},
  biburl       = {https://dblp.org/rec/journals/siamcomp/JuknaS20.bib},
  bibsource    = {dblp computer science bibliography, https://dblp.org}
}

@inproceedings{Valiant79a,
  author       = {Leslie G. Valiant},
  editor       = {Michael J. Fischer and
                  Richard A. DeMillo and
                  Nancy A. Lynch and
                  Walter A. Burkhard and
                  Alfred V. Aho},
  title        = {Completeness Classes in Algebra},
  booktitle    = {Proceedings of the 11h Annual {ACM} Symposium on Theory of Computing,
                  April 30 - May 2, 1979, Atlanta, Georgia, {USA}},
  pages        = {249--261},
  publisher    = {{ACM}},
  year         = {1979},
  url          = {https://doi.org/10.1145/800135.804419},
  doi          = {10.1145/800135.804419},
  timestamp    = {Tue, 06 Nov 2018 11:07:05 +0100},
  biburl       = {https://dblp.org/rec/conf/stoc/Valiant79a.bib},
  bibsource    = {dblp computer science bibliography, https://dblp.org}
}

@article{CGRSS25,
  author       = {Bruno Cavalar and
                  Mika G{\"{o}}{\"{o}}s and
                  Artur Riazanov and
                  Anastasia Sofronova and
                  Dmitry Sokolov},
  title        = {Monotone Circuit Complexity of Matching},
  journal      = {CoRR},
  volume       = {abs/2507.16105},
  year         = {2025},
  url          = {https://doi.org/10.48550/arXiv.2507.16105},
  doi          = {10.48550/ARXIV.2507.16105},
  eprinttype    = {arXiv},
  eprint       = {2507.16105},
  timestamp    = {Mon, 18 Aug 2025 22:03:39 +0200},
  biburl       = {https://dblp.org/rec/journals/corr/abs-2507-16105.bib},
  bibsource    = {dblp computer science bibliography, https://dblp.org}
}

@article{CKR22,
  author       = {Bruno Pasqualotto Cavalar and
                  Mrinal Kumar and
                  Benjamin Rossman},
  title        = {Monotone Circuit Lower Bounds from Robust Sunflowers},
  journal      = {Algorithmica},
  volume       = {84},
  number       = {12},
  pages        = {3655--3685},
  year         = {2022},
  url          = {https://doi.org/10.1007/s00453-022-01000-3},
  doi          = {10.1007/S00453-022-01000-3},
  timestamp    = {Thu, 05 Jan 2023 17:09:04 +0100},
  biburl       = {https://dblp.org/rec/journals/algorithmica/CavalarKR22.bib},
  bibsource    = {dblp computer science bibliography, https://dblp.org}
}

@article{Tardos88,
  author       = {{\'{E}}va Tardos},
  title        = {The gap between monotone and non-monotone circuit complexity is exponential},
  journal      = {Comb.},
  volume       = {8},
  number       = {1},
  pages        = {141--142},
  year         = {1988},
  url          = {https://doi.org/10.1007/BF02122563},
  doi          = {10.1007/BF02122563},
  timestamp    = {Wed, 22 Jul 2020 22:02:22 +0200},
  biburl       = {https://dblp.org/rec/journals/combinatorica/Tardos88.bib},
  bibsource    = {dblp computer science bibliography, https://dblp.org}
}

@article{AlB87,
  author       = {Noga Alon and
                  Ravi B. Boppana},
  title        = {The monotone circuit complexity of {B}oolean functions},
  journal      = {Comb.},
  volume       = {7},
  number       = {1},
  pages        = {1--22},
  year         = {1987},
  url          = {https://doi.org/10.1007/BF02579196},
  doi          = {10.1007/BF02579196},
  timestamp    = {Wed, 22 Jul 2020 22:02:58 +0200},
  biburl       = {https://dblp.org/rec/journals/combinatorica/AlonB87.bib},
  bibsource    = {dblp computer science bibliography, https://dblp.org}
}

@inproceedings{HY21,
  author       = {Pavel Hrubes and
                  Amir Yehudayoff},
  editor       = {Valentine Kabanets},
  title        = {Shadows of {N}ewton {P}olytopes},
  booktitle    = {36th Computational Complexity Conference, {CCC} 2021, July 20-23,
                  2021, Toronto, Ontario, Canada (Virtual Conference)},
  series       = {LIPIcs},
  volume       = {200},
  pages        = {9:1--9:23},
  publisher    = {Schloss Dagstuhl - Leibniz-Zentrum f{\"{u}}r Informatik},
  year         = {2021},
  url          = {https://doi.org/10.4230/LIPIcs.CCC.2021.9},
  doi          = {10.4230/LIPICS.CCC.2021.9},
  timestamp    = {Wed, 21 Aug 2024 22:46:00 +0200},
  biburl       = {https://dblp.org/rec/conf/coco/HrubesY21.bib},
  bibsource    = {dblp computer science bibliography, https://dblp.org}
}

@inproceedings{CGM22,
  author       = {Arkadev Chattopadhyay and
                  Utsab Ghosal and
                  Partha Mukhopadhyay},
  editor       = {Anuj Dawar and
                  Venkatesan Guruswami},
  title        = {Robustly Separating the Arithmetic Monotone Hierarchy via Graph Inner-Product},
  booktitle    = {42nd {IARCS} Annual Conference on Foundations of Software Technology
                  and Theoretical Computer Science, {FSTTCS} 2022, December 18-20, 2022,
                  {IIT} Madras, Chennai, India},
  series       = {LIPIcs},
  volume       = {250},
  pages        = {12:1--12:20},
  publisher    = {Schloss Dagstuhl - Leibniz-Zentrum f{\"{u}}r Informatik},
  year         = {2022},
  url          = {https://doi.org/10.4230/LIPIcs.FSTTCS.2022.12},
  doi          = {10.4230/LIPICS.FSTTCS.2022.12},
  timestamp    = {Wed, 21 Aug 2024 22:46:00 +0200},
  biburl       = {https://dblp.org/rec/conf/fsttcs/ChattopadhyayGM22.bib},
  bibsource    = {dblp computer science bibliography, https://dblp.org}
}

@inproceedings{HY13,
  author       = {Pavel Hrubes and
                  Amir Yehudayoff},
  title        = {Formulas are Exponentially Stronger than Monotone Circuits in Non-commutative
                  Setting},
  booktitle    = {Proceedings of the 28th Conference on Computational Complexity, {CCC}
                  2013, K.lo Alto, California, USA, 5-7 June, 2013},
  pages        = {10--14},
  publisher    = {{IEEE} Computer Society},
  year         = {2013},
  url          = {https://doi.org/10.1109/CCC.2013.11},
  doi          = {10.1109/CCC.2013.11},
  timestamp    = {Fri, 24 Mar 2023 00:04:21 +0100},
  biburl       = {https://dblp.org/rec/conf/coco/HrubesY13.bib},
  bibsource    = {dblp computer science bibliography, https://dblp.org}
}

@article{Moon70,
  author = {Moon J W},
  title ={Counting Labelled Trees},
  journal = {Canadian Mathematical Congress, Montreal}, 
  year = {1970}
}

@inproceedings{CDGM22,
  author       = {Arkadev Chattopadhyay and
                  Rajit Datta and
                  Utsab Ghosal and
                  Partha Mukhopadhyay},
  editor       = {Mark Braverman},
  title        = {Monotone Complexity of Spanning Tree Polynomial Re-Visited},
  booktitle    = {13th Innovations in Theoretical Computer Science Conference, {ITCS}
                  2022, January 31 - February 3, 2022, Berkeley, CA, {USA}},
  series       = {LIPIcs},
  volume       = {215},
  pages        = {39:1--39:21},
  publisher    = {Schloss Dagstuhl - Leibniz-Zentrum f{\"{u}}r Informatik},
  year         = {2022},
  url          = {https://doi.org/10.4230/LIPIcs.ITCS.2022.39},
  doi          = {10.4230/LIPICS.ITCS.2022.39},
  timestamp    = {Wed, 21 Aug 2024 22:46:00 +0200},
  biburl       = {https://dblp.org/rec/conf/innovations/ChattopadhyayDG22.bib},
  bibsource    = {dblp computer science bibliography, https://dblp.org}
}

@article{SY10,
  author       = {Amir Shpilka and
                  Amir Yehudayoff},
  title        = {Arithmetic Circuits: {A} survey of recent results and open questions},
  journal      = {Found. Trends Theor. Comput. Sci.},
  volume       = {5},
  number       = {3-4},
  pages        = {207--388},
  year         = {2010},
  url          = {https://doi.org/10.1561/0400000039},
  doi          = {10.1561/0400000039},
  timestamp    = {Mon, 26 Jun 2023 20:58:41 +0200},
  biburl       = {https://dblp.org/rec/journals/fttcs/ShpilkaY10.bib},
  bibsource    = {dblp computer science bibliography, https://dblp.org}
}

@inproceedings{CDM21,
  author       = {Arkadev Chattopadhyay and
                  Rajit Datta and
                  Partha Mukhopadhyay},
  editor       = {Samir Khuller and
                  Virginia Vassilevska Williams},
  title        = {Lower bounds for monotone arithmetic circuits via communication complexity},
  booktitle    = {{STOC} '21: 53rd Annual {ACM} {SIGACT} Symposium on Theory of Computing,
                  Virtual Event, Italy, June 21-25, 2021},
  pages        = {786--799},
  publisher    = {{ACM}},
  year         = {2021},
  url          = {https://doi.org/10.1145/3406325.3451069},
  doi          = {10.1145/3406325.3451069},
  timestamp    = {Tue, 22 Jun 2021 20:03:56 +0200},
  biburl       = {https://dblp.org/rec/conf/stoc/ChattopadhyayDM21.bib},
  bibsource    = {dblp computer science bibliography, https://dblp.org}
}

@article{Valiant80,
  author       = {Leslie G. Valiant},
  title        = {Negation can be Exponentially Powerful},
  journal      = {Theor. Comput. Sci.},
  volume       = {12},
  pages        = {303--314},
  year         = {1980},
  url          = {https://doi.org/10.1016/0304-3975(80)90060-2},
  doi          = {10.1016/0304-3975(80)90060-2},
  timestamp    = {Wed, 17 Feb 2021 22:01:42 +0100},
  biburl       = {https://dblp.org/rec/journals/tcs/Valiant80.bib},
  bibsource    = {dblp computer science bibliography, https://dblp.org}
}

@article{Raz85,
AUTHOR = {Alexander Razborov},
TITLE = {Lower bounds on the monotone complexity of some {B}oolean functions},
 JOURNAL = {Dokl. Akad. Nauk SSSR},
YEAR = {1985},
VOLUME = {281},
NUMBER = {4},
PAGES = {798--801}
}

@inproceedings{Shoup88,
  author       = {Victor Shoup},
  title        = {New Algorithms for Finding Irreducible Polynomials over Finite Fields},
  booktitle    = {29th Annual Symposium on Foundations of Computer Science, White Plains,
                  New York, USA, 24-26 October 1988},
  pages        = {283--290},
  publisher    = {{IEEE} Computer Society},
  year         = {1988},
  url          = {https://doi.org/10.1109/SFCS.1988.21944},
  doi          = {10.1109/SFCS.1988.21944},
  timestamp    = {Tue, 08 Jul 2025 16:47:01 +0200},
  biburl       = {https://dblp.org/rec/conf/focs/Shoup88.bib},
  bibsource    = {dblp computer science bibliography, https://dblp.org}
}

@book{Jukna11,
  author       = {Stasys Jukna},
  title        = {Extremal Combinatorics - With Applications in Computer Science},
  series       = {Texts in Theoretical Computer Science. An {EATCS} Series},
  publisher    = {Springer},
  year         = {2011},
  url          = {https://doi.org/10.1007/978-3-642-17364-6},
  doi          = {10.1007/978-3-642-17364-6},
  isbn         = {978-3-642-17363-9},
  timestamp    = {Sun, 02 Oct 2022 16:18:13 +0200},
  biburl       = {https://dblp.org/rec/series/txtcs/Jukna11.bib},
  bibsource    = {dblp computer science bibliography, https://dblp.org}
}

@article{ER60,
  author       = {P. Erd\"{o}s and R. Rado},
  title        = {Intersection theorems for systems of sets},
  journal      = {J. Lond. Math. Soc.},
  volume       = {35},
  number       = {1},
  pages        = {85--90},
  year         = {1960}
  
}

@article{Berkowitz84,
  author       = {Stuart J. Berkowitz},
  title        = {On Computing the Determinant in Small Parallel Time Using a Small
                  Number of Processors},
  journal      = {Inf. Process. Lett.},
  volume       = {18},
  number       = {3},
  pages        = {147--150},
  year         = {1984},
  url          = {https://doi.org/10.1016/0020-0190(84)90018-8},
  doi          = {10.1016/0020-0190(84)90018-8},
  timestamp    = {Sun, 02 Jun 2019 21:07:01 +0200},
  biburl       = {https://dblp.org/rec/journals/ipl/Berkowitz84.bib},
  bibsource    = {dblp computer science bibliography, https://dblp.org}
}

@inproceedings{Yehudayoff19,
  author       = {Amir Yehudayoff},
  editor       = {Moses Charikar and
                  Edith Cohen},
  title        = {Separating monotone {VP} and {VNP}},
  booktitle    = {Proceedings of the 51st Annual {ACM} {SIGACT} Symposium on Theory
                  of Computing, {STOC} 2019, Phoenix, AZ, USA, June 23-26, 2019},
  pages        = {425--429},
  publisher    = {{ACM}},
  year         = {2019},
  url          = {https://doi.org/10.1145/3313276.3316311},
  doi          = {10.1145/3313276.3316311},
  timestamp    = {Sat, 22 Jun 2019 17:53:00 +0200},
  biburl       = {https://dblp.org/rec/conf/stoc/Yehudayoff19.bib},
  bibsource    = {dblp computer science bibliography, https://dblp.org}
}

@inproceedings{Tal17,
  author       = {Avishay Tal},
  editor       = {Ryan O'Donnell},
  title        = {Tight Bounds on the Fourier Spectrum of {AC0}},
  booktitle    = {32nd Computational Complexity Conference, {CCC} 2017, July 6-9, 2017,
                  Riga, Latvia},
  series       = {LIPIcs},
  volume       = {79},
  pages        = {15:1--15:31},
  publisher    = {Schloss Dagstuhl - Leibniz-Zentrum f{\"{u}}r Informatik},
  year         = {2017},
  url          = {https://doi.org/10.4230/LIPIcs.CCC.2017.15},
  doi          = {10.4230/LIPICS.CCC.2017.15},
  timestamp    = {Thu, 02 Feb 2023 13:27:05 +0100},
  biburl       = {https://dblp.org/rec/conf/coco/Tal17.bib},
  bibsource    = {dblp computer science bibliography, https://dblp.org}
}

@article{Bazzi09,
  author       = {Louay M. J. Bazzi},
  title        = {Polylogarithmic Independence Can Fool {DNF} Formulas},
  journal      = {{SIAM} J. Comput.},
  volume       = {38},
  number       = {6},
  pages        = {2220--2272},
  year         = {2009},
  url          = {https://doi.org/10.1137/070691954},
  doi          = {10.1137/070691954},
  timestamp    = {Sat, 27 May 2017 14:22:58 +0200},
  biburl       = {https://dblp.org/rec/journals/siamcomp/Bazzi09.bib},
  bibsource    = {dblp computer science bibliography, https://dblp.org}
}

@misc{Rao2025,
      title={The Story of Sunflowers}, 
      author={Anup Rao},
      year={2025},
      eprint={2509.14790},
      archivePrefix={arXiv},
      primaryClass={math.CO},
      url={https://arxiv.org/abs/2509.14790}, 
}

@article{ALWZ21,
  author       = {Ryan Alweiss and Shachar Lovett and Kewen Wu and Jiapeng Zhang
},
  title        = {Improved bounds for the sunflower lemma},
  journal      = {Ann. of Math.},
  volume       = {2},
  number       = {194(3)},
  pages        = {795--815},
  year         = {2021},
  doi          = {10.4007/annals.2021.194.3.5}
}

@article{Mulmuley87,
  author       = {Ketan Mulmuley},
  title        = {A fast parallel algorithm to compute the rank of a matrix over an
                  arbitrary field},
  journal      = {Comb.},
  volume       = {7},
  number       = {1},
  pages        = {101--104},
  year         = {1987},
  url          = {https://doi.org/10.1007/BF02579205},
  doi          = {10.1007/BF02579205},
  timestamp    = {Wed, 22 Jul 2020 22:02:34 +0200},
  biburl       = {https://dblp.org/rec/journals/combinatorica/Mulmuley87.bib},
  bibsource    = {dblp computer science bibliography, https://dblp.org}
}

@article{ABO99,
  author       = {Eric Allender and
                  Robert Beals and
                  Mitsunori Ogihara},
  title        = {The Complexity of Matrix Rank and Feasible Systems of Linear Equations},
  journal      = {Comput. Complex.},
  volume       = {8},
  number       = {2},
  pages        = {99--126},
  year         = {1999},
  url          = {https://doi.org/10.1007/s000370050023},
  doi          = {10.1007/S000370050023},
  timestamp    = {Sun, 15 Mar 2020 19:46:54 +0100},
  biburl       = {https://dblp.org/rec/journals/cc/AllenderBO99.bib},
  bibsource    = {dblp computer science bibliography, https://dblp.org}
}

@article{MV97,
  author       = {Meena Mahajan and
                  V. Vinay},
  title        = {Determinant: Combinatorics, Algorithms, and Complexity},
  journal      = {Chic. J. Theor. Comput. Sci.},
  volume       = {1997},
  year         = {1997},
  url          = {http://cjtcs.cs.uchicago.edu/articles/1997/5/contents.html},
  timestamp    = {Thu, 11 Feb 2021 23:20:22 +0100},
  biburl       = {https://dblp.org/rec/journals/cjtcs/MahajanV97.bib},
  bibsource    = {dblp computer science bibliography, https://dblp.org}
}

@article{RazYehudayoff11,
 author = {Raz, Ran and Yehudayoff, Amir},
 title = {Multilinear formulas, maximal-partition discrepancy and mixed-sources extractors},
 fjournal = {Journal of Computer and System Sciences},
 journal = {J. Comput. Syst. Sci.},
 issn = {0022-0000},
 volume = {77},
 number = {1},
 pages = {167--190},
 year = {2011},
 language = {English},
 doi = {10.1016/j.jcss.2010.06.013},
 keywords = {68R10,68Q25},
 zbMATH = {5838483},
 Zbl = {1209.68382}
}

@article{Srinivasan19,
 author = {Srinivasan, Srikanth},
 title = {Strongly exponential separation between monotone {VP} and monotone {VNP}},
 fjournal = {ACM Transactions on Computation Theory},
 journal = {ACM Trans. Comput. Theory},
 issn = {1942-3454},
 volume = {12},
 number = {4},
 pages = {12},
 note = {Id/No 23},
 year = {2020},
 language = {English},
 doi = {10.1145/3417758},
 keywords = {68Q06,68Q11},
 zbMATH = {7485439},
 Zbl = {1495.68067}
}

@misc{Roughgarden15,
  author = {Tim Roughgarden},
  title = {Communication Complexity
(for Algorithm Designers)},
  year = {2015},
  url = {https://arxiv.org/abs/1509.06257}
}

@book {MitzUpfalPC17,
    AUTHOR = {Mitzenmacher, Michael and Upfal, Eli},
     TITLE = {Probability and computing},
   EDITION = {Second},
      NOTE = {Randomization and probabilistic techniques in algorithms and
              data analysis},
 PUBLISHER = {Cambridge University Press, Cambridge},
      YEAR = {2017},
     PAGES = {xx+467},
      ISBN = {978-1-107-15488-9},
   MRCLASS = {68-01 (60C05 60G42 60J10 60K25 62H30 68W20 68W40)},
  MRNUMBER = {3674428},
}

@article{KaibelWeltge15,
 author = {Kaibel, Volker and Weltge, Stefan},
 title = {A short proof that the extension complexity of the correlation polytope grows exponentially},
 fjournal = {Discrete \& Computational Geometry},
 journal = {Discrete Comput. Geom.},
 issn = {0179-5376},
 volume = {53},
 number = {2},
 pages = {397--401},
 year = {2015},
 language = {English},
 doi = {10.1007/s00454-014-9655-9},
 keywords = {52C45,90C57},
 zbMATH = {6426249},
 Zbl = {1315.52021}
}

@article{Jukna15,
 author = {Jukna, Stasys},
 title = {Lower bounds for tropical circuits and dynamic programs},
 fjournal = {Theory of Computing Systems},
 journal = {Theory Comput. Syst.},
 issn = {1432-4350},
 volume = {57},
 number = {1},
 pages = {160--194},
 year = {2015},
 language = {English},
 doi = {10.1007/s00224-014-9574-4},
 keywords = {68Q17,90C39,94C10},
 zbMATH = {6478534},
 Zbl = {1320.68093}
}

@inproceedings{DBLP:conf/coco/BlasiokM25,
  author       = {Jaroslaw Blasiok and
                  Linus Meierh{\"{o}}fer},
  editor       = {Srikanth Srinivasan},
  title        = {Hardness of Clique Approximation for Monotone Circuits},
  booktitle    = {40th Computational Complexity Conference, {CCC} 2025, August 5-8,
                  2025, Toronto, Canada},
  series       = {LIPIcs},
  volume       = {339},
  pages        = {4:1--4:20},
  publisher    = {Schloss Dagstuhl - Leibniz-Zentrum f{\"{u}}r Informatik},
  year         = {2025},
  url          = {https://doi.org/10.4230/LIPIcs.CCC.2025.4},
  doi          = {10.4230/LIPICS.CCC.2025.4},
  timestamp    = {Tue, 29 Jul 2025 22:41:47 +0200},
  biburl       = {https://dblp.org/rec/conf/coco/BlasiokM25.bib},
  bibsource    = {dblp computer science bibliography, https://dblp.org}
}

@MastersThesis{Cavalar2020a,
  author = {Bruno Cavalar},
  school = {Instituto de Matemática e Estatística, University of São Paulo},
  title  = {Sunflower theorems in computational complexity},
  year   = {2020},
  type   = {Master's Dissertation},
  doi    = {10.11606/D.45.2020.tde-25112020-162107},
}

@Article{Rossman2014,
  author  = {Rossman, Benjamin},
  journal = {SIAM Journal on Computing},
  title   = {The monotone complexity of {$k$}-clique on random graphs},
  year    = {2014},
  number  = {1},
  pages   = {256--279},
  volume  = {43},
  doi     = {10.1137/110839059},
}

@article{DBLP:journals/toc/GargGK020,
  author       = {Ankit Garg and
                  Mika G{\"{o}}{\"{o}}s and
                  Pritish Kamath and
                  Dmitry Sokolov},
  title        = {Monotone Circuit Lower Bounds from Resolution},
  journal      = {Theory Comput.},
  volume       = {16},
  pages        = {1--30},
  year         = {2020},
  url          = {https://doi.org/10.4086/toc.2020.v016a013},
  doi          = {10.4086/TOC.2020.V016A013},
  timestamp    = {Mon, 05 Feb 2024 20:26:51 +0100},
  biburl       = {https://dblp.org/rec/journals/toc/GargGK020.bib},
  bibsource    = {dblp computer science bibliography, https://dblp.org}
}

@inproceedings{DBLP:conf/coco/KarchmerW93,
  author       = {Mauricio Karchmer and
                  Avi Wigderson},
  title        = {On Span Programs},
  booktitle    = {Proceedings of the Eigth Annual Structure in Complexity Theory Conference,
                  San Diego, CA, USA, May 18-21, 1993},
  pages        = {102--111},
  publisher    = {{IEEE} Computer Society},
  year         = {1993},
  url          = {https://doi.org/10.1109/SCT.1993.336536},
  doi          = {10.1109/SCT.1993.336536},
  timestamp    = {Fri, 24 Mar 2023 00:04:21 +0100},
  biburl       = {https://dblp.org/rec/conf/coco/KarchmerW93.bib},
  bibsource    = {dblp computer science bibliography, https://dblp.org}
}

@inproceedings{DBLP:conf/scn/NikovNP04,
  author       = {Ventzislav Nikov and
                  Svetla Nikova and
                  Bart Preneel},
  editor       = {Carlo Blundo and
                  Stelvio Cimato},
  title        = {On the Size of Monotone Span Programs},
  booktitle    = {Security in Communication Networks, 4th International Conference,
                  {SCN} 2004, Amalfi, Italy, September 8-10, 2004, Revised Selected
                  Papers},
  series       = {Lecture Notes in Computer Science},
  volume       = {3352},
  pages        = {249--262},
  publisher    = {Springer},
  year         = {2004},
  url          = {https://doi.org/10.1007/978-3-540-30598-9\_18},
  doi          = {10.1007/978-3-540-30598-9\_18},
  timestamp    = {Tue, 14 May 2019 10:00:54 +0200},
  biburl       = {https://dblp.org/rec/conf/scn/NikovNP04.bib},
  bibsource    = {dblp computer science bibliography, https://dblp.org}
}

@inproceedings{DBLP:conf/dimacs/PudlakS96,
  author       = {Pavel Pudl{\'{a}}k and
                  Jir{\'{\i}} Sgall},
  editor       = {Paul Beame and
                  Samuel R. Buss},
  title        = {Algebraic models of computation and interpolation for algebraic proof
                  systems},
  booktitle    = {Proof Complexity and Feasible Arithmetics, Proceedings of a {DIMACS}
                  Workshop, New Brunswick, New Jersey, USA, April 21-24, 1996},
  series       = {{DIMACS} Series in Discrete Mathematics and Theoretical Computer Science},
  volume       = {39},
  pages        = {279--295},
  publisher    = {{DIMACS/AMS}},
  year         = {1996},
  url          = {https://doi.org/10.1090/dimacs/039/15},
  doi          = {10.1090/DIMACS/039/15},
  timestamp    = {Sat, 30 Sep 2023 09:38:58 +0200},
  biburl       = {https://dblp.org/rec/conf/dimacs/PudlakS96.bib},
  bibsource    = {dblp computer science bibliography, https://dblp.org}
}

@article{DBLP:journals/combinatorica/BabaiGW99,
  author       = {L{\'{a}}szl{\'{o}} Babai and
                  Anna G{\'{a}}l and
                  Avi Wigderson},
  title        = {Superpolynomial Lower Bounds for Monotone Span Programs},
  journal      = {Comb.},
  volume       = {19},
  number       = {3},
  pages        = {301--319},
  year         = {1999},
  url          = {https://doi.org/10.1007/s004930050058},
  doi          = {10.1007/S004930050058},
  timestamp    = {Sat, 30 Sep 2023 10:07:15 +0200},
  biburl       = {https://dblp.org/rec/journals/combinatorica/BabaiGW99.bib},
  bibsource    = {dblp computer science bibliography, https://dblp.org}
}

@inproceedings{DBLP:conf/codcry/Beimel11,
  author       = {Amos Beimel},
  editor       = {Yeow Meng Chee and
                  Zhenbo Guo and
                  San Ling and
                  Fengjing Shao and
                  Yuansheng Tang and
                  Huaxiong Wang and
                  Chaoping Xing},
  title        = {Secret-Sharing Schemes: {A} Survey},
  booktitle    = {Coding and Cryptology - Third International Workshop, {IWCC} 2011,
                  Qingdao, China, May 30-June 3, 2011. Proceedings},
  series       = {Lecture Notes in Computer Science},
  volume       = {6639},
  pages        = {11--46},
  publisher    = {Springer},
  year         = {2011},
  url          = {https://doi.org/10.1007/978-3-642-20901-7\_2},
  doi          = {10.1007/978-3-642-20901-7\_2},
  timestamp    = {Mon, 03 Mar 2025 21:00:40 +0100},
  biburl       = {https://dblp.org/rec/conf/codcry/Beimel11.bib},
  bibsource    = {dblp computer science bibliography, https://dblp.org}
}

@inproceedings{DBLP:conf/coco/AlekseevGGMR0025,
  author       = {Yaroslav Alekseev and
                  Mika G{\"{o}}{\"{o}}s and
                  Ziyi Guan and
                  Gilbert Maystre and
                  Artur Riazanov and
                  Dmitry Sokolov and
                  Weiqiang Yuan},
  editor       = {Srikanth Srinivasan},
  title        = {Generalised Linial-Nisan Conjecture Is False for DNFs},
  booktitle    = {40th Computational Complexity Conference, {CCC} 2025, August 5-8,
                  2025, Toronto, Canada},
  series       = {LIPIcs},
  volume       = {339},
  pages        = {29:1--29:13},
  publisher    = {Schloss Dagstuhl - Leibniz-Zentrum f{\"{u}}r Informatik},
  year         = {2025},
  url          = {https://doi.org/10.4230/LIPIcs.CCC.2025.29},
  doi          = {10.4230/LIPICS.CCC.2025.29},
  timestamp    = {Tue, 29 Jul 2025 22:41:47 +0200},
  biburl       = {https://dblp.org/rec/conf/coco/AlekseevGGMR0025.bib},
  bibsource    = {dblp computer science bibliography, https://dblp.org}
}

@inproceedings{DBLP:conf/focs/0003W24,
  author       = {Robert Andrews and
                  Avi Wigderson},
  title        = {Constant-Depth Arithmetic Circuits for Linear Algebra Problems},
  booktitle    = {65th {IEEE} Annual Symposium on Foundations of Computer Science, {FOCS}
                  2024, Chicago, IL, USA, October 27-30, 2024},
  pages        = {2367--2386},
  publisher    = {{IEEE}},
  year         = {2024},
  url          = {https://doi.org/10.1109/FOCS61266.2024.00138},
  doi          = {10.1109/FOCS61266.2024.00138},
  timestamp    = {Tue, 08 Jul 2025 16:39:50 +0200},
  biburl       = {https://dblp.org/rec/conf/focs/0003W24.bib},
  bibsource    = {dblp computer science bibliography, https://dblp.org}
}

@InProceedings{Cavalar2023,
  author    = {Cavalar, Bruno and Oliveira, Igor},
  booktitle = {Proceedings of the 38h Computational Complexity Conference (CCC)},
  title     = {Constant-Depth Circuits vs. Monotone Circuits},
  year      = {2023},
  pages     = {29:1--29:37},
  publisher = {Schloss Dagstuhl},
  series    = {LIPIcs},
  volume    = {264},
  doi       = {10.4230/LIPIcs.CCC.2023.29},
}

@InCollection{Grigni1992,
  author    = {Grigni, Michelangelo and Sipser, Michael},
  booktitle = {Boolean Function Complexity},
  publisher = {Cambridge University Press},
  title     = {Monotone Complexity},
  year      = {1992},
  pages     = {57--75},
  series    = {London Mathematical Society Lecture Note Series},
  doi       = {10.1017/CBO9780511526633.006},
}

@Article{Ajtai1987,
  author  = {Ajtai, Mikl\'{o}s and Gurevich, Yuri},
  journal = {Journal of the ACM},
  title   = {Monotone versus positive},
  year    = {1987},
  number  = {4},
  pages   = {1004--1015},
  volume  = {34},
  doi     = {10.1145/31846.31852},
}

@Article{Okolnishnikova1982,
  author  = {Okol'nishnikova, Elizaveta},
  journal = {Metody Diskretnogo Analiza},
  title   = {The effect of negations on the complexity of realization of monotone {B}oolean functions by formulas of bounded depth},
  year    = {1982},
  number  = {38},
  pages   = {74--80},
}

@Article{Raz1992,
  author  = {Ran Raz and Avi Wigderson},
  journal = {Journal of the ACM},
  title   = {Monotone Circuits for Matching Require Linear Depth},
  year    = {1992},
  number  = {3},
  pages   = {736--744},
  volume  = {39},
  doi     = {10.1145/146637.146684},
}

@InProceedings{Chen2017,
  author    = {Xi Chen and Igor Oliveira and Rocco Servedio},
  booktitle = {Proceedings of 49th Symposium on Theory of Computing (STOC)},
  title     = {Addition is exponentially harder than counting for shallow monotone circuits},
  year      = {2017},
  pages     = {1232--1245},
  publisher = {ACM},
  doi       = {10.1145/3055399.3055425},
}

@book{MacWilliamsSloane77,
 author = {MacWilliams, F. J. and Sloane, N. J. A.},
 title = {The theory of error-correcting codes. {Parts} {I}, {II}},
 fseries = {North-Holland Mathematical Library},
 series = {North-Holland Math. Libr.},
 volume = {16},
 year = {1977},
 publisher = {Elsevier (North-Holland), Amsterdam},
 language = {English},
 keywords = {94Bxx,94-02,05Bxx,11T71},
 zbMATH = {3577144},
 Zbl = {0369.94008}
}

@book{StanleyVol1SecondEd97,
 author = {Stanley, Richard P.},
 title = {Enumerative combinatorics. {Vol}. 1.},
 fseries = {Cambridge Studies in Advanced Mathematics},
 series = {Camb. Stud. Adv. Math.},
 volume = {49},
 isbn = {0-521-55309-1},
 year = {1997},
 publisher = {Cambridge: Cambridge University Press},
 language = {English},
 keywords = {05-02,05A15,05A16,06A07},
 zbMATH = {1033382},
 Zbl = {0889.05001}
}

@article{DBLP:journals/jacm/FioriniMPTW15,
  author       = {Samuel Fiorini and
                  Serge Massar and
                  Sebastian Pokutta and
                  Hans Raj Tiwary and
                  Ronald de Wolf},
  title        = {Exponential Lower Bounds for Polytopes in Combinatorial Optimization},
  journal      = {J. {ACM}},
  volume       = {62},
  number       = {2},
  pages        = {17:1--17:23},
  year         = {2015},
  url          = {https://doi.org/10.1145/2716307},
  doi          = {10.1145/2716307},
  timestamp    = {Fri, 24 Mar 2023 16:31:07 +0100},
  biburl       = {https://dblp.org/rec/journals/jacm/FioriniMPTW15.bib},
  bibsource    = {dblp computer science bibliography, https://dblp.org}
}
\appendix
 
\end{document}